\newcommand{\upmodels}{\mathrel{\rotatebox[origin=c]{90}{\scalebox{0.7}{$\models$}}}}
\definecolor{ForestGreen}{RGB}{34,139,34}
\newtheorem{theorem}{Theorem}[section]
\newtheorem{lemma}[theorem]{Lemma}
\newtheorem{definition}[theorem]{Definition}
\newtheorem{proposition}[theorem]{Proposition}
\newtheorem{example}[theorem]{Example}
\newtheorem{remark}[theorem]{Remark}
\newcommand{\minus}{\scalebox{0.75}[1.0]{$-$}}
\title{Decoding Trombetti-Zhou codes: a new
syndrome-based decoding approach}
\author{Chunlei Li, Angelica Piccirillo, Olga Polverino and Ferdinando Zullo}
\date{\today}
\begin{document}

\begin{abstract}
     In 2019, Trombetti and Zhou introduced a new family of $\mathbb{F}_{q^n}$-linear Maximum Rank Distance (MRD) codes over $\mathbb{F}_{q^{2n}}$. For such codes we propose a new syndrome-based decoding algorithm. It is well known that a syndrome-based decoding approach relies heavily on a parity-check matrix of a linear code. Nonetheless, Trombetti-Zhou codes are not linear over the entire field $\mathbb{F}_{q^{2n}}$, but only over its subfield $\mathbb{F}_{q^{n}}$. Due to this lack of linearity, we introduce the notions of $\mathbb{F}_{q^{n}}$-generator matrix and $\mathbb{F}_{q^{n}}$-parity-check matrix for a generic $\mathbb{F}_{q^{n}}$-linear rank-metric code over $\mathbb{F}_{q^{rn}}$ in analogy with the roles that generator and parity-check matrices play in the context of linear codes. Accordingly, we present an $\mathbb{F}_{q^n}$-generator matrix and $\mathbb{F}_{q^n}$-parity-check matrix for Trombetti-Zhou codes as evaluation codes over an $\mathbb{F}_q$-basis of $\mathbb{F}_{q^{2n}}$. This relies on the choice of a particular basis called \emph{trace almost dual basis}. Subsequently, denoting by $d$ the minimum distance of the code, we show that if the rank weight $t$ of the error vector is strictly smaller than $\frac{d-1}{2}$, the syndrome-based decoding of Trombetti-Zhou codes can be converted to the decoding of Gabidulin codes of dimension one larger. On the other hand, when $t=\frac{d-1}{2}$, we reduce the decoding to determining the rank of a certain matrix. The complexity of the proposed decoding for Trombetti-Zhou codes is also discussed.
\end{abstract}

\maketitle

{\textbf{Keywords}: }Rank-metric codes, Trombetti-Zhou codes, syndrome-based decoding algorithm.

\section{Introduction}
    As coding theory plays a significant role in data transmission and cryptography, rank-metric codes have become a promising alternative to codes in the Hamming metric \cite{RMCandtheirapplications,RankCodesGab}.
    \par 
    Rank-metric codes are a set of $m\times\ell$ matrices over a finite field $\mathbb{F}_q$ endowed with the rank distance, i.e.,
    \[
    d(A,B)=\mathrm{rank}(A-B)
    \]
    for each $A,B\in\mathbb{F}_q^{m\times\ell}$. Equivalently, they can be regarded as a set of vectors of length $\ell$ over the extension field $\mathbb{F}_{q^m}$, or, in the case of square matrices, as a set of linearized polynomials. \par The genesis of rank-metric codes can be traced back to Delsarte's seminal paper \cite{delsarte1978} in 1978, where they were first defined as $q$-analogs of the usual linear error-correcting codes endowed with the Hamming distance. Delsarte also introduced a \textit{Singleton-like bound}, which specifies the constraint that each code's parameters must adhere to. Rank-metric codes that meet this bound are referred to as \textit{Maximum Rank Distance} (MRD) codes, in analogy with Maximum Distance Separable (MDS) codes in the Hamming metric. In the same paper, from the perspective of bilinear forms, Delsarte constructed the first family of linear MRD codes. In 1985 \cite{gabidulin_TheoryOfCodesWithMaximumRankDistance}, Gabidulin independently presented the same class of MRD codes, though defined as the evaluation of linearized polynomials. In fact, these codes can be considered as $q$-analogs of Reed-Solomon codes. Despite being first introduced by Delsarte, they are usually called \textit{Gabidulin codes}. Additionally, Gabidulin exhibited several properties of codes in rank-metric and proposed an efficient decoding algorithm for Gabidulin codes. In 1991 \cite{gabidulinCodes_ApplicationsInDistributedStorage2}, Roth independently rediscovered such a family of codes in an attempt to correct crisscross errors, as the Hamming metric is unsuitable for these error patterns.
    \par 
    In recent years, the exponential growth of global IP traffic has led to an increased focus on the efficiency and reliability of digital network communications. The reliability of information delivery on wireless channels is often in contrast with a low delay transmission. Furthermore, the usage of retransmissions, resending of packets which have been either damaged or lost, is undesired in conjunction with high-rate transmission. A significant advancement in this field was made by Ahlswede, Cai, Li and Yeung \cite{Network_information_flow} who proved that Random Linear Network Coding (RLNC) is an effective forward error correction method for transmitting data. This approach has the potential to enhance network capacity, reliability and efficiency, however it is highly susceptible to errors. A breakthrough in this direction was made by Silva, K\"otter and Kschischang \cite{RankMetric_approach_to_ErrorControl_in_RNetworkCoding} who introduced a rank-metric approach to error control in random network coding. They show how codes in the rank-metric can be ``lifted'' to subspace codes in such a way that the rank distance between two codewords is reflected into the subspace distance between their lifted images. In particular, when lifted rank-metric codes are employed, the decoding problem for random network coding can be rephrased in rank metric terms. This let the authors to a generalized decoding problem for rank-metric codes that embraces two dual parameters, erasures and deviations, that correspond to partial information about the error matrix. Finally, they also proposed a decoding algorithm for Gabidulin codes that takes into account erasures and deviations.
    \par
    Although the interest in rank-metric codes has increased exponentially due to their application to network coding, many other applications have also emerged, including code-based cryptography \cite{RMCandtheirapplications,gabidulinCodes_ApplicationsInCryptography}, distributed storage \cite{RMCandtheirapplications,gabidulinCodes_ApplicationsInDistributedStorage2}, space-time coding \cite{gabidulinCodes_ApplicationsInSpace-TimeCoding} and finite geometry \cite{PhDThesis_Santonastaso}, to name a few. Because of their several applications, the development of decoding algorithms for new classes of error-correcting codes in the rank metric represents a critical research topic. 
    \par As previously outlined, the most well-known family of MRD codes is the one of Gabidulin codes. They were later generalized to the so-called \emph{generalized Gabidulin codes} by Kshevetskiy and Gabidulin \cite{GeneralizedGabCodes} and to the so-called \emph{generalized twisted Gabidulin codes} by Sheekey \cite{sheekey_twistedGC}. Subsequently, Otal and \"{O}zbudak introduced a new family of additive MRD codes known as \emph{additive generalized twisted Gabidulin (AGTG) codes} which contains all the previous classes of codes as subfamilies. Other families of MRD codes that are not equivalent to any subfamily of AGTG codes include the non-additive MRD codes introduced by Otal and \"{O}zbudak \cite{SomeNewnon-additiveMRDcodes}, the MRD codes designed by Sheeky \cite{sheekeyNewSemifieldsAndNewMRDCodesFromSkewPolynomialRings}, \emph{Trombetti-Zhou} (TZ) codes \cite{TZ-codes_anewfamily} and the new MRD codes by Lobillo, Santonastaso and Sheeky arising from skew polynomial rings \cite{lobillo2025quotients}.
    \par In particular, the construction proposed by Trombetti and Zhou in \cite{TZ-codes_anewfamily} builds upon a highly structured set of linearized polynomials in $\mathbb{F}_{q^{2n}}$ whose first and last coefficients are taken independently from the subfield of linearity $\mathbb{F}_{q^{n}}$ and which strictly relies on an element $\gamma\in\mathbb{F}_{q^{2n}}$ whose norm is a non square element in $\mathbb{F}_q$. These polynomials give rise to $\mathbb{F}_{q^n}$-linear MRD codes over $\mathbb{F}_{q^{2n}}$. In \cite{On_interpolation-based_decoding_of_a_class_of_maximum_rank_distance_codes} the authors presented an interpolation-based decoding algorithm to decode them. They make use of the properties of the Dickson matrix associated with a linearized polynomial of given rank, together with the modified Berlekamp-Massey algorithm. When the rank of the error vector reaches the unique decoding radius, the problem reduces to solving a quadratic polynomial, rather than a projective polynomial, which guarantees that the proposed decoding algorithm runs in polynomial-time. 
    \subsection{Our contribution}
    By extending the notion of generator matrix for linear codes to what we call $\mathbb{F}_{q^n}$-generator matrix for $\mathbb{F}_{q^n}$-linear codes over $\mathbb{F}_{q^{rn}}$, in analogy with e.g. \cite{BallGamboaLavrauw2023}, we further investigate the notion of parity-check matrix to that of $\mathbb{F}_{q^n}$-parity-check matrix. We subsequently focus on Trombetti-Zhou codes, which can naturally be viewed as linearized polynomial evaluation codes over an $\mathbb{F}_q$-basis of $\mathbb{F}_{q^{2n}}$. An $\mathbb{F}_{q^n}$-generator matrix for a Trombetti-Zhou code is a $2k\times 2n$ matrix with entries in $\mathbb{F}_{q^{2n}}$ whose rows consist of one of its $\mathbb{F}_{q^n}$-basis. The introduction of an $\mathbb{F}_{q^n}$-parity-check matrix for Trombetti-Zhou codes relies on the choice of a particular basis called \emph{trace almost dual basis}. We propose a bounded minimum distance syndrome-based decoding algorithm for Trombetti-Zhou codes with minimum rank distance $d$. When the rank weight of the error vector is strictly smaller than $\frac{d-1}{2}$, the decoding of Trombetti-Zhou codes can be converted to the decoding of Gabidulin codes of dimension one larger. Nevertheless, the decoding of Gabidulin codes cannot be pushed to the limit case in which the rank weight of the error vector reaches the unique decoding radius, i.e., when $d-1$ is even and $t=\frac{d-1}{2}$, as more information is needed. Our algorithm overcomes this limitation and is capable of correcting error vectors of rank weight at most $\left\lfloor\frac{d-1}{2}\right\rfloor$ and, in particular, when $d-1$ is even, error vectors of rank weight $\frac{d-1}{2}$ with entries in $\mathbb{F}_{q^n}$. Although the interpolation-based decoding algorithm for Trombetti-Zhou codes proposed in \cite{On_interpolation-based_decoding_of_a_class_of_maximum_rank_distance_codes} achieves a polynomial-time complexity as low as $\mathcal{O}(n^2)$ over $\mathbb{F}_{q^{2n}}$, our syndrome-based approach lends itself more naturally to the design of an efficient decoder that can properly exploit erasures and deviations in line with the algorithm proposed by Silva, K\"otter and Kschischang in \cite{RankMetric_approach_to_ErrorControl_in_RNetworkCoding} for Gabidulin codes in connection with RLNC.
    \subsection{Outline of the paper} The structure of the paper is as follows. Section \ref{sec: Preliminaries} provides an overview of rank-metric codes in both vector and linearized polynomial form, outlines their link and recalls several tools that will be employed throughout the paper. In Section \ref{sec: Overview and new insights into Trombetti-Zhou codes} we describe Trombetti-Zhou codes by recalling known properties and presenting new structural insights that exploit their $\mathbb{F}_{q^n}$-linearity. We also generalize these new structures to $\mathbb{F}_{q^n}$-linear rank-metric codes in $\mathbb{F}_{q^{rn}}$. Section \ref{sec: Decoding of Trombetti-Zhou codes: a new syndrome-based decoding approach} presents the theoretical foundations of the syndrome-based decoding process for Trombetti–Zhou codes, along with the final algorithm (Algorithm \ref{alg:Decoding of Trombetti-Zhou codes: a new syndrome-based decoding approach.}). In Section \ref{sec: Conclusions and open problems} we conclude with future research directions arising from the results of the paper. Finally, the Appendix includes some technical proofs. 
    \section{Preliminaries}
    \label{sec: Preliminaries}
    We begin by fixing the notation that will be used throughout the paper. Let $q$ be a prime power and let $\mathbb{F}_q$ denote the finite field with $q$ elements and $\mathbb{F}_{q^m}$ an extension field of $\mathbb{F}_q$ of degree $m$.  
    \par
    Let $r,n,m$ be positive integers such that $m=rn$ and let $\mathbb{F}_{q^n}$ be a subfield of $\mathbb{F}_{q^m}$. Let $x_0,x_1,\ldots,x_{h}\in\mathbb{F}_{q^m}$, then $\left<x_0,x_1,\ldots,x_{h}\right>_{\mathbb{F}_{q^n}}$ denotes the $\mathbb{F}_{q^n}$-linear subspace of $\mathbb{F}_{q^m}$ spanned by $x_0,x_1,\ldots,x_{h}$. Moreover we denote by $\mathrm{dim}_{q^n}\left(\left<x_0,x_1,\ldots,x_{h}\right>_{\mathbb{F}_{q^n}}\right)$ its $\mathbb{F}_{q^n}$-dimension.  
    \par We denote by upper case letters matrices $A\in\mathbb{F}_{q^m}^{s\times t}$, by $A^{\top}\in\mathbb{F}_{q^m}^{t\times s}$ the transpose of a matrix $A$ and by $0$ the null matrix. If $A\in\mathbb{F}_{q^m}^{s\times t}$, we will frequently denote by $\mathrm{Rowspan}_{\mathbb{F}_{q^n}}(A)\subseteq\mathbb{F}_{q^m}^t$ the $\mathbb{F}_{q^n}$-linear subspace spanned by the rows of $A$ and by $\mathrm{Colspan}_{\mathbb{F}_{q^n}}(A)\subseteq\mathbb{F}_{q^m}^s$ the $\mathbb{F}_{q^n}$-linear subspace spanned by the columns of $A$. 
    \medskip
    \\
    As already outlined in the introduction, rank-metric codes arise in a matrix framework, however they can be viewed as a set of vectors of length $\ell$ over the extension field $\mathbb{F}_{q^m}$ or, in the case of square matrices, as a set of linearized polynomials.
    \medskip
    \\
    Vector representation is currently the most widely used in applications. The \textit{rank} \textit{support} of a vector $\underline{x}\in\mathbb{F}_{q^m}^{\ell}$ is the $\mathbb{F}_q$-linear space of $\mathbb{F}_{q^m}$ spanned by its entries:
    \[
        \mathrm{supp}(\underline{x})=\left<x_0,x_1,\dots,x_{\ell-1}\right>_{\mathbb{F}_q}\subseteq\mathbb{F}_{q^m}.
    \]
    The \textit{rank} \textit{weight} $w(\underline{x})$ of a vector $\underline{x}\in\mathbb{F}_{q^m}^{\ell}$ is the dimension of the linear space spanned over $\mathbb{F}_q$ by its entries, i.e., the $\mathbb{F}_q$-dimension of its rank support:
    \[
        w(\underline{x})=\mathrm{dim}_q\left(\left<x_0,x_1,\dots,x_{\ell-1}\right>_{\mathbb{F}_q}\right).
    \] 
    The set of vectors in $\mathbb{F}_{q^m}^{\ell}$ can be endowed with the \textit{rank distance} defined by
    \[
        d(\underline{x},\underline{y})=w(\underline{x}-\underline{y})
    \]
    for all $\underline{x},\underline{y}\in\mathbb{F}_{q^m}^{\ell}$. 
    \begin{definition}
        A non-empty subset $\mathcal{C}\subseteq\mathbb{F}_{q^m}^{\ell}$ endowed with the rank distance defined above is called (\textit{vector}) \textit{rank-metric code}.
    \end{definition}
    \begin{definition}
        Let $r,n,m$ be positive integers such that $m=rn$ and let $\mathbb{F}_{q^n}$ be a subfield of $\mathbb{F}_{q^m}$. Let $\ell$ be a positive integer. A rank-metric code $\mathcal{C}\subseteq\mathbb{F}_{q^m}^{\ell}$ is an $\mathbb{F}_{q^n}$\textit{-linear rank-metric code} if for all $\underline{x},\underline{y}\in\mathcal{C}$ and $\alpha,\beta\in\mathbb{F}_{q^n}$, we have $\alpha\underline{x}+\beta\underline{y}\in\mathcal{C}$.
    \end{definition}
    Henceforth, the term \emph{linear code} will be used to refer to those codes that are linear over $\mathbb{F}_{q^m}$. Otherwise, the subfield on which the code is linear will be specified. The fundamental parameters of an $\mathbb{F}_{q^n}$-linear rank-metric codes are defined as follows. 
    \begin{definition}
        Let $r,n,m$ be positive integers such that $m=rn$ and let $\mathbb{F}_{q^n}$ be a subfield of $\mathbb{F}_{q^m}$. Let $\ell$ be a positive integer and let $\mathcal{C}\subseteq\mathbb{F}_{q^m}^{\ell}$ be an $\mathbb{F}_{q^n}$-linear rank-metric code. Its elements are called \textit{codewords}. The integer $\ell$ is the \textit{length} of the code. The \textit{dimension} of $\mathcal{C}$ is the dimension as $\mathbb{F}_{q^n}$-vector space and, if $\mathcal{C}$ is the non-zero code, the \textit{minimum} \textit{rank} \textit{distance} is, as usual,  
        \[
            d\coloneqq d(\mathcal{C})=\min\{d(\underline{x},\underline{y})\colon \underline{x},\underline{y}\in\mathcal{C},~\underline{x}\neq\underline{y}\}.
        \]
        We will refer to the $\mathbb{F}_{q^n}$-linear rank-metric code $\mathcal{C}\subseteq\mathbb{F}_{q^m}^\ell$ as a $q^n\minus[\ell,\kappa,d]_{q^m/q}$ code if it has length $\ell$, $\mathbb{F}_{q^n}$-dimension $\kappa\leq r\ell$ and minimum rank distance $d$. In the event that the minimum rank distance is either unknown or irrelevant, we write $q^n\minus[\ell,\kappa]_{q^m/q}$.
    \end{definition}
    If $\mathcal{C}$ is a linear code, we will simply refer to it as an $[\ell,\kappa,d]_{q^m/q}$ code. 
    \begin{definition}
        Let $\mathcal{C}$ be a linear $[\ell,\kappa]_{q^m/q}$ rank-metric code. $G\in\mathbb{F}_{q^m}^{\kappa\times \ell}$ is a \emph{generator matrix} of the code if its rows generate $\mathcal{C}$ as an $\mathbb{F}_{q^m}$-linear space while $H\in\mathbb{F}_{q^m}^{(\ell-\kappa)\times \ell}$ is a \emph{parity-check matrix} if it has the code as kernel, i.e., $\ker\left(H^{\top}\right)=\mathcal{C}$. 
    \end{definition}
    Depending on the employed non-degenerate bilinear form (see \cite[p. 52]{taylor1992geometry}), several orthogonal complements can be defined. The most widely used in coding theory is the Delsarte dual code.
    \begin{definition}
        The Delsarte dual code of an $[\ell,\kappa]_{q^m/q}$ code $\mathcal{C}\subseteq\mathbb{F}_{q^m}^{\ell}$ is the $[\ell,\ell-\kappa]_{q^m/q}$ code defined as
        \[
            \mathcal{C}^{\perp}\coloneqq\left\{\underline{x}\in\mathbb{F}_{q^m}^{\ell}\colon\left<\underline{x},\underline{c}\right>=\displaystyle\sum_{i=0}^{\ell-1}x_ic_i=0\text{ for all }\underline{c}\in\mathcal{C}\right\}\subseteq\mathbb{F}_{q^m}^{\ell}.
        \]
        Moreover, if $H$ is the parity-check matrix of $\mathcal{C}$, then $\mathcal{C}^{\perp}=\mathrm{Rowspan}_{q^m}(H)$.
    \end{definition}
    \noindent For completeness, we have recalled the definition of Delsarte duality, however, this work mainly relies on a different notion of orthogonal space based on the trace function that will be introduced in the following section.
    \medskip
    \\
    \noindent A natural link exists between the matrix and vector frameworks. In particular, a linear rank-metric code can be associated with a code over $\mathbb{F}_q^{m\times\ell}$ with the same cardinality and metric properties. Let $\underline{\beta}=(\beta_0,\beta_1,\dots,\beta_{m-1})$ denote an ordered $\mathbb{F}_q$-basis of $\mathbb{F}_{q^m}$; the map $\mathrm{ext}_{\underline{\beta}}$
    \begin{align*}
        \mathrm{ext}_{\underline{\beta}}\colon\mathbb{F}_{q^m}^\ell&\to\mathbb{F}_{q}^{m\times\ell}\\
            \underline{x}&\mapsto X=\begin{pNiceMatrix}
                    x_{0,0} & x_{0,1} & \dots & x_{0,\ell-1} \\
                    x_{1,0} & x_{1,1} & \dots & x_{1,\ell-1} \\
                    \vdots & \vdots & \ddots & \vdots \\
                    x_{m-1,0} & x_{m-1,1} & \dots & x_{m-1,\ell-1} \\
                \end{pNiceMatrix}
    \end{align*}
    where $x_i=x_{0,i}\beta_0+x_{1,i}\beta_1+\dots+x_{m-1,i}\beta_{m-1}$, is an $\mathbb{F}_q$-linear isometry between $(\mathbb{F}_{q^m}^\ell,d)$ and $(\mathbb{F}_{q}^{m\times \ell},d)$. For all $\underline{x}\in\mathbb{F}_{q^m}^\ell$, $\mathrm{ext}_{\underline{\beta}}(\underline{x})$ is called the matrix expansion (by columns) of the vector $\underline{x}$ with respect to the $\mathbb{F}_q$-basis $\underline{\beta}$ of $\mathbb{F}_{q^m}$. Given an $\mathbb{F}_q$-basis $\underline{\beta}$ of $\mathbb{F}_{q^m}$, the $\underline{\beta}$-column support (resp. $\underline{\beta}$-row support) of a vector $\underline{x}\in\mathbb{F}_{q^m}^\ell$ is the $\mathbb{F}_q$-vector space generated by the columns (resp. rows) of $\mathrm{ext}_{\underline{\beta}}(\underline{x})$. It follows from standard linear algebra arguments that the $\underline{\beta}$-column support (resp. $\underline{\beta}$-row support) of a vector $\underline{x}\in\mathbb{F}_{q^m}^\ell$ does not depend on the choice of the basis $\underline{\beta}$. This directly leads to the following definition. 
    \begin{definition}
        Let $\underline{x}\in\mathbb{F}_{q^m}^\ell$. The rank weight of $\underline{x}$ is either the dimension of the $\mathbb{F}_q$-linear subspace of $\mathbb{F}_{q^m}$ generated by its entries or, equivalently, the dimension of the $\mathbb{F}_q$-linear subspace generated by the columns (or the rows) of $\mathrm{ext}_{\underline{\beta}}(\underline{x})$, where $\underline{\beta}$ is any $\mathbb{F}_q$-basis of $\mathbb{F}_{q^m}$.
    \end{definition}
    In analogy to codes in the Hamming metric, the parameters of a rank-metric code must obey a Singleton-like bound on their cardinality.
    \begin{theorem}
        \label{thm: Singleton-like bound}
        Let $\mathcal{C}\subseteq\mathbb{F}_{q^m}^\ell$ be a rank-metric code with $\lvert\mathcal{C}\rvert\geq 2$ and minimum distance $d$. Then
        \[
            \lvert\mathcal{C}\rvert\leq q^{\max\{m,\ell\}(\min\{m,\ell\}-d+1)}.
        \]
        If the parameters of $\mathcal{C}$ meet the bound, then the code is called a Maximum Rank Distance (MRD) code.
    \end{theorem}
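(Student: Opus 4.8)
The plan is to run a puncturing argument. This is the natural choice because $\mathcal{C}$ is only assumed to be a subset of $\mathbb{F}_{q^m}^\ell$ (not $\mathbb{F}_{q^n}$-linear for any $n$), so dimension counts are unavailable, whereas injectivity of a suitable coordinate projection still yields a cardinality bound. First I would pass to the matrix picture: fix an $\mathbb{F}_q$-basis $\underline{\beta}$ of $\mathbb{F}_{q^m}$ and put $D=\mathrm{ext}_{\underline{\beta}}(\mathcal{C})\subseteq\mathbb{F}_q^{m\times\ell}$; since $\mathrm{ext}_{\underline{\beta}}$ is a rank isometry and injective, $|D|=|\mathcal{C}|$ and $D$ has minimum rank distance $d$. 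As transposition of matrices preserves rank and is a bijection, replacing $D$ by $D^{\top}$ if necessary I may assume $m\le\ell$; write $\mu=\min\{m,\ell\}=m$ and $M=\max\{m,\ell\}=\ell$. Since any nonzero matrix in $\mathbb{F}_q^{m\times\ell}$ has rank at most $\mu$, the difference of two distinct codewords of $D$ shows $1\le d\le\mu$, so $\mu-d+1$ is a well-defined positive integer.

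The engine of the proof is the elementary remark that if a matrix $A\in\mathbb{F}_q^{m\times\ell}$ has all its nonzero entries confined to a set $R$ of $s$ rows, then $\mathrm{rank}(A)\le s$, because its row space is spanned by those $s$ rows. Fixing any set $S$ of $\mu-d+1$ row indices and letting $\pi\colon\mathbb{F}_q^{m\times\ell}\to\mathbb{F}_q^{(\mu-d+1)\times\ell}$ be the projection onto the rows in $S$, I would show $\pi|_D$ is injective: if $A,B\in D$ with $A\ne B$ and $\pi(A)=\pi(B)$, then $A-B$ is nonzero with all nonzero entries in the $m-(\mu-d+1)=d-1$ rows outside $S$, whence $\mathrm{rank}(A-B)\le d-1<d$, contradicting that $D$ has minimum rank distance $d$. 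Injectivity then gives $|\mathcal{C}|=|D|=|\pi(D)|\le q^{\ell(\mu-d+1)}=q^{M(\mu-d+1)}=q^{\max\{m,\ell\}(\min\{m,\ell\}-d+1)}$, which is the claimed bound; ``MRD'' is merely the name attached to the equality case, so nothing further is required.

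I would also note that when $\ell\le m$ the matrix detour can be skipped entirely: since the rank weight of a vector never exceeds its Hamming weight, projecting $\mathcal{C}\subseteq\mathbb{F}_{q^m}^\ell$ onto any $\ell-d+1$ coordinates is injective by the very same reasoning, giving $|\mathcal{C}|\le q^{m(\ell-d+1)}$ directly; the matrix/transpose formulation above simply restores the symmetry between $m$ and $\ell$. The argument is otherwise routine; the only points meriting care are this $m\leftrightarrow\ell$ bookkeeping and verifying $d\le\mu$ so that the exponent is a genuine nonnegative integer, so there is no real obstacle to anticipate.
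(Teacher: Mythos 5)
The paper states this Singleton-like bound without proof, citing it as Delsarte's classical result; there is therefore no "paper proof" to compare against, and the right question is simply whether your argument is sound. It is. The puncturing (or row-projection) argument you give is the standard short proof and is entirely correct: after passing to the matrix model via $\mathrm{ext}_{\underline{\beta}}$ (a rank-preserving bijection) and, if needed, transposing so that $m\le\ell$, the observation that a matrix supported on $s$ rows has rank at most $s$ makes the projection onto any fixed $\mu-d+1$ rows injective on the code, because two codewords agreeing there would differ by a matrix supported on $d-1$ rows and hence of rank at most $d-1<d$. The cardinality bound $q^{\ell(\mu-d+1)}=q^{\max\{m,\ell\}(\min\{m,\ell\}-d+1)}$ follows, and your bookkeeping of the $m\leftrightarrow\ell$ symmetry under transposition, as well as the side remark that $1\le d\le\min\{m,\ell\}$ so the exponent is a nonnegative integer, are both handled correctly. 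Your closing remark about the vector picture (rank weight bounded by Hamming weight, so Hamming-style puncturing already works when $\ell\le m$) is also accurate and is a nice way to see why the asymmetry between $m$ and $\ell$ is only superficial. Nothing is missing; this is a complete and correct proof of the stated bound, by the most common route for it.
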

    \medskip
    As mentioned before, rank-metric codes can be also described in terms of linearized polynomials, however, before introducing them in this setting, we first provide a brief overview of linearized polynomials.
    \begin{definition}
        Let $\mathbb{F}_{q^m}$ be an extension field of $\mathbb{F}_q$ of degree $m$ and let us consider $s\in\{1,\dots,m\}$ with $\mathrm{gcd}(s,m)=1$. For a generator $\sigma\colon x\mapsto x^{q^s}$ of $\mathrm{Gal}(\mathbb{F}_{q^m}/\mathbb{F}_{q})$, a $\sigma$-polynomial (or linearized $\sigma$-polynomial) is a polynomial of the form
        \[
        f(x)\coloneqq\displaystyle\sum_{i=0}^rf_ix^{\sigma^i},
        \]
        where $f_i\in\mathbb{F}_{q^m}$ and $r$ is a non negative integer. If $f_r\neq 0$, $r$ is said to be the $\sigma$-degree of $f(x)$ and we will write $\mathrm{deg}_{\sigma}(f(x))=r$. We will denote by $\mathbb{L}_{\sigma}(\mathbb{F}_{q^m})$ the set of all $\sigma$-polynomials over $\mathbb{F}_{q^m}$. If $s=1$, they are known as linearized $q$-polynomials, or simply linearized polynomials.
    \end{definition}
    \noindent The name linearized $\sigma$-polynomial comes from their property of being $\mathbb{F}_q$-linear maps from the $\mathbb{F}_q$-vector space $\mathbb{F}_{q^m}$ to itself.  An example of linearized $q$-polynomial is the trace function.
    \begin{example}
        Let $r,n$ be positive integers and let $\mathbb{F}_{q^n}$ be a subfield of $\mathbb{F}_{q^{rn}}$. The trace $\mathrm{Tr}_{q^{rn}/q^n}(a)$ of $a\in\mathbb{F}_{q^{rn}}$ over $\mathbb{F}_{q^n}$ is defined by
        \[
        \mathrm{Tr}_{q^{rn}/q^n}(a)=a+a^{q^n}+a^{q^{2n}}+\dots+a^{q^{(r-1)n}}.
        \]
        The trace function $\mathrm{Tr}_{q^{rn}/q^n}$ is an $\mathbb{F}_{q^n}$-linear surjective map from $\mathbb{F}_{q^{rn}}$ to $\mathbb{F}_{q^n}$. For further details we refer to \cite{lidl_finite_fields}.
    \end{example}
    \noindent The set of linearized $\sigma$-polynomials $\mathbb{L}_{\sigma}(\mathbb{F}_{q^m})$, with the usual addition and composition between polynomials, forms a non-commutative $\mathbb{F}_q$-algebra where the scalar multiplication is
    \[
        \cdot\colon(\alpha,f)\in\mathbb{F}_q\times\mathbb{L}_{\sigma}(\mathbb{F}_{q^m})\mapsto\alpha\cdot f\in\mathbb{L}_{\sigma}(\mathbb{F}_{q^m}).
    \]
    Since $(x^{\sigma^m}-x)$ is a two-sided ideal, we can consider the following quotient algebra consisting of linearized polynomials of $\sigma$-degree strictly less than $m$
    \[
    \mathcal{L}_{\sigma,m}(\mathbb{F}_{q^m})\coloneqq\frac{\mathbb{L}_{\sigma}(\mathbb{F}_{q^m})}{(x^{\sigma^m}-x)}=\left\{\displaystyle\sum_{i=0}^{m-1}f_ix^{\sigma^i},~f_i\in\mathbb{F}_{q^m}\right\}.
    \]
    If $s=1$ we simply write $\mathcal{L}_{m}(\mathbb{F}_{q^m})$. It is well-known (e.g. \cite{linearized_polynomials_over_finite_fields}), that the $\mathbb{F}_q$-algebras $(\mathcal{L}_{\sigma,m}(\mathbb{F}_{q^m}),+,\circ,\cdot)$ and $(\mathrm{End}(\mathbb{F}_{q^m}),+,\circ,\cdot)$ are isomorphic thanks to the map that associates to each linearized $\sigma$-polynomial $f$ the endomorphism of $\mathbb{F}_{q^m}$ 
    \[
    a\mapsto\sum_{i=0}^{m-1}f_i\sigma^i(a).
    \]
    As an immediate consequence, the \emph{kernel} and \emph{rank} of a $\sigma$-polynomial can be defined as the kernel and the rank of the corresponding endomorphism and they will be denoted respectively by $\ker(f(x))$ and $\mathrm{rank}(f(x))$. Another consequence is that the $\mathbb{F}_q$-algebra $(\mathcal{L}_{\sigma,m}(\mathbb{F}_{q^m}),+,\circ,\cdot)$ is also isomorphic to the $\mathbb{F}_q$-algebra $\mathbb{F}_q^{m\times m}$ as $\mathbb{F}_{q^m}$ is an $\mathbb{F}_q$-vector space of dimension $m$.

    \noindent A significant result (see \cite[Theorem~5]{Galois_extensions_and_subspaces_of_alternating_bilinear_forms_with_special_rank_properties}, \cite[Theorem~10]{Galois_theory_and_linear_algebra}) concerning linearized polynomials is the following bound on the number of roots, which led to the construction of several families of MRD codes \cite{sheekey_twistedGC,TZ-codes_anewfamily}. Before stating it, recall that the norm function is a map from $\mathbb{F}_{q^m}$ to $\mathbb{F}_q$ such that for all $a\in\mathbb{F}_{q^m}$, 
    \[
    \mathrm{N}_{q^m/q}(a)=a^{1+q+\dots+q^{m-1}}=a^{\frac{q^m-1}{q-1}}.
    \]
    \begin{theorem}
        Consider 
        \[
        f(x)=f_0x+f_1x^{\sigma}+\dots+f_{k-1}x^{\sigma^{k-1}}+f_{k}x^{\sigma^{k}}\in\mathcal{L}_{\sigma,m}(\mathbb{F}_{q^m})
        \]
        and let $f_0,f_1,\dots,f_{k}$ be elements of $\mathbb{F}_{q^m}$ not all of them equal to zero. Then
        \[
            \mathrm{dim}_q(\mathrm{ker}(f))\leq k.
        \]
        Furthermore, if $\mathrm{dim}_q(\mathrm{ker}(f))=k$ then $f_k\neq 0$ and $\mathrm{N}_{q^m/q}(f_0)=(-1)^{mk}\mathrm{N}_{q^m/q}(f_k)$.
    \end{theorem}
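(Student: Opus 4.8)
The plan is to work inside the ring $\mathbb{L}_{\sigma}(\mathbb{F}_{q^m})$ of linearized $\sigma$-polynomials, where composition corresponds to multiplication in the Ore extension $\mathbb{F}_{q^m}[t;\sigma]$; this is a non-commutative integral domain, the $\sigma$-degree is additive under composition, and one can divide on the right by a monic $\sigma$-polynomial with remainder of strictly smaller $\sigma$-degree. The whole argument is driven by the elementary polynomials $g_v(x)=x^{\sigma}-v^{\sigma-1}x$ attached to a nonzero $v\in\mathbb{F}_{q^m}$: since $\gcd(s,m)=1$, the fixed field of $\sigma$ is $\mathbb{F}_q$, so $g_v$ has $\sigma$-degree $1$, leading coefficient $1$, constant coefficient $-v^{\sigma-1}$, and $\ker(g_v)=\mathbb{F}_q v$, hence $\dim_q\ker(g_v)=1$.

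For the inequality I would induct on $\deg_{\sigma}(f)$ (the theorem's $f$ satisfies $\deg_{\sigma}(f)\le k$, so the bound $\dim_q\ker(f)\le\deg_{\sigma}(f)\le k$ follows). If $\ker(f)=\{0\}$ there is nothing to prove; otherwise choose $0\neq v\in\ker(f)$ and divide $f$ on the right by $g_v$, writing $f=h\circ g_v+r$ with $\deg_{\sigma}(r)<1$. Evaluating at $v$ and using $g_v(v)=0$ forces $r=0$, so $f=h\circ g_v$ with $\deg_{\sigma}(h)=\deg_{\sigma}(f)-1$ and $h\neq 0$. From the inclusion $\mathbb{F}_q v=\ker(g_v)\subseteq\ker(f)$ one gets the exact sequence
\[
0\longrightarrow\ker(g_v)\longrightarrow\ker(f)\xrightarrow{\ g_v\ }\ker(h),
\]
whence $\dim_q\ker(f)\le 1+\dim_q\ker(h)\le 1+(\deg_{\sigma}(f)-1)$ by the inductive hypothesis applied to $h$. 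In particular, if $f_k=0$ then $\deg_{\sigma}(f)\le k-1$ and the bound already yields $\dim_q\ker(f)\le k-1$, so in the equality case $\dim_q\ker(f)=k$ one necessarily has $f_k\neq 0$ and $\deg_{\sigma}(f)=k$.

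For the norm identity I would iterate the factorization in the equality case. When $\dim_q\ker(f)=\deg_{\sigma}(f)=k$, the exact sequence above forces $\dim_q\ker(h)=k-1=\deg_{\sigma}(h)$, so the induction continues and produces $v_1,\dots,v_k\in\mathbb{F}_{q^m}^{\ast}$ with
\[
f=(f_kx)\circ g_{v_k}\circ g_{v_{k-1}}\circ\cdots\circ g_{v_1},
\]
the leading coefficient being the product of the leading coefficients (all equal to $1$ except that of $f_kx$). Since constant coefficients multiply under composition, this gives $f_0=f_k(-1)^k\prod_{i=1}^k v_i^{\sigma-1}$. Applying $\mathrm{N}_{q^m/q}$, using multiplicativity of the norm together with $\mathrm{N}_{q^m/q}(v^{\sigma-1})=\mathrm{N}_{q^m/q}(v)^{q^s}/\mathrm{N}_{q^m/q}(v)=1$ (since $\mathrm{N}_{q^m/q}(v)\in\mathbb{F}_q$ is fixed by $x\mapsto x^{q^s}$), I obtain $\mathrm{N}_{q^m/q}(f_0)=\mathrm{N}_{q^m/q}(f_k)\cdot\mathrm{N}_{q^m/q}(-1)^k$. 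Finally $\mathrm{N}_{q^m/q}(-1)=(-1)^{(q^m-1)/(q-1)}$ equals $(-1)^m$ (trivially if $q$ is even, since then $-1=1$; and if $q$ is odd because $1+q+\cdots+q^{m-1}\equiv m\pmod 2$), so $\mathrm{N}_{q^m/q}(-1)^k=(-1)^{mk}$ and the asserted identity follows.

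The main obstacle is the bookkeeping in the equality case: one must verify at each step that the quotient $h$ inherits the equality $\dim_q\ker(h)=\deg_{\sigma}(h)$, so that the factorization into $\sigma$-degree-$1$ pieces can be carried all the way down, and one must keep precise track of which coefficient goes where — in particular that the monic elementary factors $g_{v_i}$ never absorb the leading coefficient $f_k$, while the constant coefficients genuinely multiply to $f_0$. The supporting facts (right division and additivity of the $\sigma$-degree in $\mathbb{L}_{\sigma}(\mathbb{F}_{q^m})$, and the description of $\ker(g_v)$ using $\gcd(s,m)=1$) are routine, but the equality statement only comes out cleanly once they are assembled with care.
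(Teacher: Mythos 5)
The paper does not prove this theorem: it is stated as a known result with citations to Gow--Quinlan (\cite{Galois_extensions_and_subspaces_of_alternating_bilinear_forms_with_special_rank_properties,Galois_theory_and_linear_algebra}), so there is no in-paper argument to compare against. Your proof is nonetheless correct and self-contained; it is the classical linear-factorization argument in the twisted polynomial ring $\mathbb{F}_{q^m}[t;\sigma]$, and it is essentially the argument the cited sources use.

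A few points of verification, all of which check out. The fixed field of $\sigma$ is $\mathbb{F}_q$ because $\gcd(s,m)=1$, so $\ker(g_v)=\mathbb{F}_q v$ is exactly one-dimensional as claimed. Right division $f=h\circ g_v+r$ with $\deg_\sigma(r)<1$ is legitimate since $g_v$ is monic; evaluating at $v$ does force $r=0$. The exact sequence $0\to\ker(g_v)\to\ker(f)\xrightarrow{g_v}\ker(h)$ is exact at $\ker(f)$ because $\ker(g_v)\subseteq\ker(f)$, so $\dim_q\ker(f)\le 1+\dim_q\ker(h)$, and the induction on $\deg_\sigma$ closes. Your observation that $f_k=0$ gives $\deg_\sigma(f)\le k-1$ and hence $\dim_q\ker(f)\le k-1$ correctly forces $f_k\ne 0$ in the equality case. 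In the iterated factorization $f=(f_kx)\circ g_{v_k}\circ\cdots\circ g_{v_1}$, the leading coefficient is indeed $f_k$ (composing monic degree-one pieces keeps the leading coefficient $1$, since $1^{\sigma^j}=1$, and prepending $f_kx$ just scales), and constant coefficients multiply, giving $f_0=f_k(-1)^k\prod_i v_i^{\sigma-1}$. The norm computation is correct: $\mathrm{N}_{q^m/q}(v^{\sigma-1})=1$ since $\mathrm{N}_{q^m/q}(v)\in\mathbb{F}_q$ is $\sigma$-fixed, and $\mathrm{N}_{q^m/q}(-1)=(-1)^{1+q+\cdots+q^{m-1}}=(-1)^m$ whether $q$ is even (trivially) or odd (since $1+q+\cdots+q^{m-1}\equiv m\pmod 2$). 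Altogether $\mathrm{N}_{q^m/q}(f_0)=(-1)^{mk}\mathrm{N}_{q^m/q}(f_k)$ as required. No gaps.
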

    \noindent For more details on linearized polynomials, we refer to \cite{lidl_finite_fields}. 
    \medskip
    \\
    In light of the aforementioned remarks, rank-metric codes can be also regarded as a suitable $\mathbb{F}_q$-subspaces of $\mathcal{L}_{\sigma,m}(\mathbb{F}_{q^m})$. 
    \begin{definition}
        An $\mathbb{F}_q$-linear rank-metric code $\mathcal{C}$ in $\mathcal{L}_{\sigma,m}(\mathbb{F}_{q^m})$ is an $\mathbb{F}_q$-subspace of $\mathcal{L}_{\sigma,m}(\mathbb{F}_{q^m})$ endowed with the rank-metric, i.e., for all $f(x),g(x)\in\mathcal{L}_{\sigma,m}(\mathbb{F}_{q^m})$, 
        \[
            d(f(x),g(x))\coloneqq\mathrm{rank}(f-g)=\mathrm{dim}_q(\mathrm{Im}(f-g)).
        \]
    \end{definition}
    \noindent Since we will need it later in reference to Trombetti-Zhou codes, recall that two $\mathbb{F}_q$-linear rank-metric codes $\mathcal{C},\mathcal{D}\subseteq\mathcal{L}_{\sigma,m}(\mathbb{F}_{q^m})$ are equivalent if there exist two invertible (with respect to $\circ$) $\sigma$-polynomials $h(x),g(x)\in\mathcal{L}_{\sigma,m}(\mathbb{F}_{q^m})$ and a field automorphism $\rho\in\mathrm{Aut}(\mathbb{F}_{q})$ such that
    \[
        \mathcal{D}=h\circ\mathcal{C}^{\rho}\circ g=\{h\circ f^{\rho}\circ g\colon f\in\mathcal{C}\},
    \]
    where $f(x)^{\rho}=\displaystyle\sum_{i=0}^{m-1}\rho(f_i)x^{\sigma^i}$ if $f(x)=\displaystyle\sum_{i=0}^{m-1}f_ix^{\sigma^i}$, see \cite[Definition 4.1]{generalized_TGC}.
    \medskip
    \\
    The notion of Delsarte dual code can be expressed in terms of $\sigma$-polynomials as outlined in \cite[Section 2]{generalized_TGC}. Let us consider the following bilinear form
    \begin{align*}
        \mathit{b}\colon\mathcal{L}_{\sigma,m}(\mathbb{F}_{q^m})\times\mathcal{L}_{\sigma,m}(\mathbb{F}_{q^m})&\to\mathbb{F}_q\\
        (f(x),g(x))&\mapsto\mathit{b}(f(x),g(x))\coloneqq\mathrm{Tr}_{q^m/q}\left(\displaystyle\sum_{i=0}^{m-1}f_ig_i\right),
    \end{align*}
    where $f(x)=\displaystyle\sum_{i=0}^{m-1}f_ix^{\sigma^i}$ and $g(x)=\displaystyle\sum_{i=0}^{m-1}g_ix^{\sigma^i}$. 
    \begin{definition}
        \label{def: Delsarte dual code in polynomial framework}
        Let $\mathcal{C}\subseteq\mathcal{L}_{\sigma,m}(\mathbb{F}_{q^m})$ be an $\mathbb{F}_q$-linear rank-metric code. The \textit{Delsarte dual code} $\mathcal{C}^{{\perp}}$ of $\mathcal{C}$ is
    \[
        \mathcal{C}^{{\perp}}=\{f(x)\in\mathcal{L}_{\sigma,m}(\mathbb{F}_{q^m})\colon\mathit{b}(f(x),g(x))=0,\text{ for all }g(x)\in\mathcal{C}\}.
    \]
    \end{definition}
    We can now outline the link between the vector and the linearized polynomial frameworks of rank-metric codes.
    \begin{proposition}
        Let $\underline{\beta}=(\beta_0,\beta_1,\dots,\beta_{m-1})$ be an $\mathbb{F}_q$-basis of $\mathbb{F}_{q^m}$. The map
    \begin{align*}
        \mathit{ev}_{\underline{\beta}}\colon \mathcal{L}_{\sigma,m}(\mathbb{F}_{q^m})&\to\mathbb{F}_{q^m}^m\\
        f(x)&\mapsto\mathit{ev}_{\underline{\beta}}(f(x))\coloneqq(f(\beta_0),f(\beta_1),\dots,f(\beta_{m-1}))
    \end{align*}
    is an $\mathbb{F}_{q^m}$-linear isomorphism which preserves the rank, i.e., $\mathrm{rank}(f(x))=w(\mathit{ev}_{\underline{\beta}}(f(x)))$.
    \end{proposition}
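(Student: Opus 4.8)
The plan is to check, in turn, the three assertions packed into the statement: that $\mathit{ev}_{\underline{\beta}}$ is $\mathbb{F}_{q^m}$-linear, that it is bijective, and that it preserves the rank weight. Throughout I regard $\mathcal{L}_{\sigma,m}(\mathbb{F}_{q^m})$ as an $\mathbb{F}_{q^m}$-vector space through the coefficientwise scalar multiplication $\lambda\cdot\left(\sum_{i=0}^{m-1}f_ix^{\sigma^i}\right)=\sum_{i=0}^{m-1}\lambda f_ix^{\sigma^i}$, so that $\{x^{\sigma^i}\}_{i=0}^{m-1}$ is an $\mathbb{F}_{q^m}$-basis and $\mathrm{dim}_{q^m}\mathcal{L}_{\sigma,m}(\mathbb{F}_{q^m})=m=\mathrm{dim}_{q^m}\mathbb{F}_{q^m}^m$; this equality of (finite) dimensions will let me deduce surjectivity from injectivity.

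For $\mathbb{F}_{q^m}$-linearity I would just expand: for $f=\sum_i f_ix^{\sigma^i}$, $g=\sum_i g_ix^{\sigma^i}$ and $\lambda,\mu\in\mathbb{F}_{q^m}$ one has $(\lambda\cdot f+\mu\cdot g)(\beta_j)=\sum_i(\lambda f_i+\mu g_i)\beta_j^{\sigma^i}=\lambda f(\beta_j)+\mu g(\beta_j)$ for every $j$, hence $\mathit{ev}_{\underline{\beta}}(\lambda\cdot f+\mu\cdot g)=\lambda\,\mathit{ev}_{\underline{\beta}}(f)+\mu\,\mathit{ev}_{\underline{\beta}}(g)$. (It is worth noting that this uses the coefficientwise structure above, not the composition one, which is precisely why the map is $\mathbb{F}_{q^m}$-linear rather than merely $\mathbb{F}_q$-linear.) For injectivity, suppose $\mathit{ev}_{\underline{\beta}}(f)=\underline{0}$, i.e.\ $f(\beta_0)=\dots=f(\beta_{m-1})=0$. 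Since every $\sigma$-polynomial induces an $\mathbb{F}_q$-linear endomorphism of $\mathbb{F}_{q^m}$ and $\underline{\beta}$ is an $\mathbb{F}_q$-basis of $\mathbb{F}_{q^m}$, $f$ then vanishes on all of $\mathbb{F}_{q^m}$; under the $\mathbb{F}_q$-algebra isomorphism $\mathcal{L}_{\sigma,m}(\mathbb{F}_{q^m})\cong\mathrm{End}(\mathbb{F}_{q^m})$ recalled in the excerpt this forces $f=0$ in $\mathcal{L}_{\sigma,m}(\mathbb{F}_{q^m})$. Thus $\ker\mathit{ev}_{\underline{\beta}}=\{0\}$, and by the dimension count $\mathit{ev}_{\underline{\beta}}$ is an $\mathbb{F}_{q^m}$-linear isomorphism.

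It remains to establish the rank identity. The key observation is that, because $f$ is $\mathbb{F}_q$-linear and $\beta_0,\dots,\beta_{m-1}$ span $\mathbb{F}_{q^m}$ over $\mathbb{F}_q$, the image of the endomorphism associated with $f$ is exactly $\mathrm{Im}(f)=\left<f(\beta_0),\dots,f(\beta_{m-1})\right>_{\mathbb{F}_q}$. Consequently $\mathrm{rank}(f(x))=\mathrm{dim}_q(\mathrm{Im}(f))=\mathrm{dim}_q\left(\left<f(\beta_0),\dots,f(\beta_{m-1})\right>_{\mathbb{F}_q}\right)=w(\mathit{ev}_{\underline{\beta}}(f(x)))$, which is the claimed equality.

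The argument is essentially an unwinding of definitions, so I do not anticipate a genuine obstacle; the only points demanding a little care are choosing the coefficientwise $\mathbb{F}_{q^m}$-module structure on the domain so that the two dimensions actually agree, and using the $\mathbb{F}_q$-linearity of $f$ twice — once to kill the kernel and once to identify $\mathrm{Im}(f)$ with the $\mathbb{F}_q$-span of the values $f(\beta_j)$.
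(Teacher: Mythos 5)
Your proof is correct and complete. The paper actually states this proposition without proof, treating it as a well-known fact, so there is no proof in the text to compare against; your argument is the standard verification. You correctly identify the two points that require a modicum of care: using the coefficientwise $\mathbb{F}_{q^m}$-module structure on $\mathcal{L}_{\sigma,m}(\mathbb{F}_{q^m})$ so that both sides have $\mathbb{F}_{q^m}$-dimension $m$ (which makes the dimension count legitimate), and invoking the isomorphism $\mathcal{L}_{\sigma,m}(\mathbb{F}_{q^m})\cong\mathrm{End}(\mathbb{F}_{q^m})$ to pass from "$f$ vanishes as a function" to "$f=0$ in the quotient algebra." The rank-preservation step is exactly the argument the paper later uses for the analogous proposition about $\mathcal{D}_k(\gamma)$: since $f$ is $\mathbb{F}_q$-linear and the $\beta_j$ span $\mathbb{F}_{q^m}$ over $\mathbb{F}_q$, the values $f(\beta_j)$ span $\mathrm{Im}(f)$, whence $\mathrm{rank}(f)=\dim_q\langle f(\beta_0),\dots,f(\beta_{m-1})\rangle_{\mathbb{F}_q}=w(ev_{\underline{\beta}}(f))$. (In the TZ-code version the paper replaces your isomorphism-based injectivity by an appeal to the minimum rank distance, which is natural when the domain is a subcode rather than the whole algebra, but for the present general statement your route is the right one.)
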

    Furthermore, it is possible to evaluate a code $\mathcal{C}$ in less than $m$ elements. This is due to the fact that any rank-metric code of $\mathcal{L}_{\sigma,m}(\mathbb{F}_{q^m})$ defines a rank-metric code in $\mathbb{F}_{q^m}^\ell$ thanks to the following result presented in \cite[Lemma 2.2]{OnTheListDecodabilityOfRank-metricCodesContainingGabidulinCodes}.
    \begin{lemma}
        \label{remark: it is possible to evaluate a code C in less than m elements}
        Let $\mathcal{C}$ be a rank-metric code in $\mathcal{L}_{\sigma,m}(\mathbb{F}_{q^m})$ with $\lvert\mathcal{C}\rvert=q^{m\kappa}$. Let $\ell$ be a positive integer such that $\kappa \leq \ell \leq m$ and let $S=\{\alpha_0,\alpha_1,\dots,\alpha_{\ell-1}\}$ be a set of $\ell$ $\mathbb{F}_q$-linearly independent elements of $\mathbb{F}_{q^m}$. Assume that $d(\mathcal{C})=m-h$, where $h$ is a positive integer such that $\kappa-1\leq h<\ell$. Then the rank-metric code
        \[
            \tilde{\mathcal{C}}=\{(g(\alpha_0),g(\alpha_1),\dots, g(\alpha_{\ell-1}))\colon g\in\mathcal{C}\} \subseteq \mathbb{F}_{q^m}^{\ell}
        \]
        is a code of $\mathbb{F}_{q^m}^\ell$ with $\lvert\tilde{\mathcal{C}}\rvert=q^{m\kappa}$ and $\ell-h \leq d(\tilde{\mathcal{C}}) \leq \ell-\kappa+1$. In particular, if $\mathcal{C}$ is an MRD code then $\tilde{\mathcal{C}}$ is an MRD code, that is $d(\tilde{\mathcal{C}})=\ell-\kappa+1$.
    \end{lemma}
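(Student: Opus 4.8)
The plan is to study the $\mathbb{F}_q$-linear evaluation map $\mathrm{ev}_S\colon\mathcal{C}\to\mathbb{F}_{q^m}^\ell$ given by $g\mapsto(g(\alpha_0),\dots,g(\alpha_{\ell-1}))$, whose image is exactly $\tilde{\mathcal{C}}$, and to derive the three assertions from the $\mathbb{F}_q$-linearity of every $g\in\mathcal{C}$, the weight bound $\mathrm{rank}(g)\ge d(\mathcal{C})$ for nonzero codewords, and the Singleton-like bound (Theorem \ref{thm: Singleton-like bound}). Throughout write $U=\left<\alpha_0,\dots,\alpha_{\ell-1}\right>_{\mathbb{F}_q}$, so that $\dim_q U=\ell$ by hypothesis. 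The first step is to show $\mathrm{ev}_S$ is injective: if $g\in\mathcal{C}$ has $g(\alpha_i)=0$ for all $i$, then $U\subseteq\ker(g)$ by $\mathbb{F}_q$-linearity, hence $\mathrm{rank}(g)=m-\dim_q\ker(g)\le m-\ell<m-h=d(\mathcal{C})$; since $\mathcal{C}$ is $\mathbb{F}_q$-linear, a codeword of weight below the minimum distance must be zero, so $g=0$. Consequently $\lvert\tilde{\mathcal{C}}\rvert=\lvert\mathcal{C}\rvert=q^{m\kappa}$, and in particular $\lvert\tilde{\mathcal{C}}\rvert\ge 2$.

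Next I would establish the lower bound $d(\tilde{\mathcal{C}})\ge\ell-h$. Fix a nonzero $g\in\mathcal{C}$; by injectivity its image is a nonzero codeword of $\tilde{\mathcal{C}}$, and its rank weight equals $\dim_q\left<g(\alpha_0),\dots,g(\alpha_{\ell-1})\right>_{\mathbb{F}_q}=\dim_q g(U)$. Applying the rank-nullity theorem to the restriction $g|_U\colon U\to\mathbb{F}_{q^m}$ gives $\dim_q g(U)=\ell-\dim_q\!\left(U\cap\ker(g)\right)\ge\ell-\dim_q\ker(g)$. Since $g\neq 0$ we have $\mathrm{rank}(g)\ge d(\mathcal{C})=m-h$, i.e.\ $\dim_q\ker(g)\le h$, and therefore $w(\mathrm{ev}_S(g))\ge\ell-h$. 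As $g$ was an arbitrary nonzero codeword, $d(\tilde{\mathcal{C}})\ge\ell-h$.

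For the upper bound I would simply apply Theorem \ref{thm: Singleton-like bound} to $\tilde{\mathcal{C}}\subseteq\mathbb{F}_{q^m}^\ell$: since $\ell\le m$ it reads $\lvert\tilde{\mathcal{C}}\rvert\le q^{m(\ell-d(\tilde{\mathcal{C}})+1)}$, and comparing with $\lvert\tilde{\mathcal{C}}\rvert=q^{m\kappa}$ forces $\kappa\le\ell-d(\tilde{\mathcal{C}})+1$, that is $d(\tilde{\mathcal{C}})\le\ell-\kappa+1$. If moreover $\mathcal{C}$ is MRD, then regarding it as a code in $\mathbb{F}_q^{m\times m}$ the equality case of Theorem \ref{thm: Singleton-like bound} gives $q^{m\kappa}=q^{m(m-d(\mathcal{C})+1)}$, so $d(\mathcal{C})=m-\kappa+1$ and hence $h=\kappa-1$; substituting into the lower bound yields $d(\tilde{\mathcal{C}})\ge\ell-\kappa+1$, which together with the upper bound gives $d(\tilde{\mathcal{C}})=\ell-\kappa+1$, i.e.\ $\tilde{\mathcal{C}}$ is MRD.

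The argument is essentially routine, so I do not expect a real obstacle; the only point that needs genuine care is the rank-nullity bookkeeping in the lower-bound step, where one must restrict $g$ to $U$ and use that the pertinent kernel is $U\cap\ker(g)$ rather than all of $\ker(g)$, together with the identification of $\dim_q g(U)$ with the rank weight of the evaluation vector (which rests on the basis-independence of the rank weight recalled earlier). A secondary detail is checking the hypothesis $\lvert\tilde{\mathcal{C}}\rvert\ge 2$ needed to invoke the Singleton-like bound, which holds because $\mathcal{C}$ is a nonzero code, whence $\kappa\ge 1$.
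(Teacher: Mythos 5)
Your proof is correct and takes essentially the same route as the paper's treatment: the paper cites this lemma from an external reference, but gives a self-contained proof of the specialized version (Lemma 3.8 for Trombetti--Zhou codes) that follows exactly your outline — injectivity of the evaluation map via a kernel-dimension bound, cardinality via bijectivity, the Singleton-like upper bound on $d(\tilde{\mathcal{C}})$, and a rank-nullity computation $w(ev_{\underline{\alpha}}(g))=\ell-\dim_q(\ker(g)\cap\mathcal{U}_{\underline{\alpha}})$ for the lower bound. The only substantive addition in your write-up is the final bookkeeping step that shows $h=\kappa-1$ forces the two bounds to coincide in the MRD case, which the paper doesn't spell out since its Lemma 3.8 is stated directly for the MRD situation.
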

    This section ends with some useful tools that will be employed throughout the paper.
    \begin{definition}
        Given a vector $\underline{a}=(a_0,a_1,\dots,a_{\ell-1})\in\mathbb{F}_{q^m}^\ell$, the $s\times \ell$ $q$-Vandermonde (or Moore) matrix associated with $\underline{a}$ is given by
        \[
        \mathrm{qvan}_s(\underline{a})\coloneqq\begin{pNiceMatrix}
                a_0 & a_1 & \dots  & a_{\ell-1} \\
                a_0^q & a_1^q & \dots  & a_{\ell-1}^q \\
                \vdots  & \vdots & \ddots & \vdots \\
                a_0^{q^{s-1}} & a_1^{q^{s-1}} & \dots  & a_{\ell-1}^{q^{s-1}}
            \end{pNiceMatrix}.
        \]
    \end{definition}
    \begin{lemma}
        \label{lemma: determinant of q-Vandermonde matrix}
        Let $\underline{a}=(a_0,a_1,\dots,a_{\ell-1})\in\mathbb{F}_{q^m}^\ell$. Then, the determinant of the square $\ell\times \ell$ $q$-Vandermonde matrix associated with $\underline{a}$ is 
        \[
            \mathrm{det}(\mathrm{qvan}_\ell(\underline{a}))=a_0\displaystyle\prod_{j=0}^{\ell-2}\displaystyle\prod_{c_1,\dots,c_j\in\mathbb{F}_{q}}\left(a_{j+1}-\displaystyle\sum_{i=0}^jc_ia_i\right).
        \]
        Hence $\mathrm{det}(\mathrm{qvan}_\ell(\underline{a}))\neq 0$ if and only if $a_0,a_1,\dots,a_{\ell-1}$ are linearly independent over $\mathbb{F}_{q}$. Moreover if $a_0,a_1,\dots,a_{\ell-1}$ are linearly independent over $\mathbb{F}_{q}$, then $\mathrm{qvan}_s(\underline{a})$ has rank $\mathrm{min}\{s,\ell\}$. 
    \end{lemma}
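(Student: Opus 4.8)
The plan is to prove the determinant formula by induction on $\ell$ and then read off the other two statements. For $\ell=1$ the matrix is $(a_0)$ and the right-hand side is $a_0$ times an empty product, so there is nothing to prove. For the inductive step, fix $a_0,\dots,a_{\ell-2}$ and regard the determinant as a function $F(x)$ of a fresh variable $x$ occupying the last column, i.e. with last column $(x,x^q,\dots,x^{q^{\ell-1}})^{\top}$. Expanding along that column gives $F(x)=\sum_{i=0}^{\ell-1}(-1)^{i+\ell-1}M_i\,x^{q^i}$, a linearized $q$-polynomial of $q$-degree at most $\ell-1$ whose leading coefficient $M_{\ell-1}$ is precisely $\det(\mathrm{qvan}_{\ell-1}(a_0,\dots,a_{\ell-2}))$ (the top-left block). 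The key observation is that substituting $x=\sum_{i=0}^{\ell-2}c_ia_i$ with all $c_i\in\mathbb{F}_q$ turns the last column into the same $\mathbb{F}_q$-combination of the first $\ell-1$ columns, since $\bigl(\sum_i c_ia_i\bigr)^{q^j}=\sum_i c_ia_i^{q^j}$ when $c_i\in\mathbb{F}_q$; hence $F$ vanishes on the whole subspace $\langle a_0,\dots,a_{\ell-2}\rangle_{\mathbb{F}_q}$.

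I would then split into two cases. If $a_0,\dots,a_{\ell-2}$ are $\mathbb{F}_q$-linearly dependent, the first $\ell-1$ columns of $\mathrm{qvan}_\ell(\underline{a})$ are already $\mathbb{F}_q$-dependent, so the left-hand side is $0$; and taking a minimal dependence shows that either $a_0=0$ or $a_{j+1}\in\langle a_0,\dots,a_j\rangle_{\mathbb{F}_q}$ for some $j\le\ell-3$, so one factor on the right-hand side also vanishes, and both sides agree. If instead $a_0,\dots,a_{\ell-2}$ are $\mathbb{F}_q$-linearly independent, then by the induction hypothesis $M_{\ell-1}\neq 0$, so $F$ has $q$-degree exactly $\ell-1$; by the bound on the number of roots of a linearized polynomial recalled above (applied with $\sigma\colon x\mapsto x^q$ and $k=\ell-1$), $F$ has at most $q^{\ell-1}$ roots, and since the $q^{\ell-1}$ pairwise distinct elements of $\langle a_0,\dots,a_{\ell-2}\rangle_{\mathbb{F}_q}$ are already roots, these are all of them and each is simple. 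Therefore $F(x)=M_{\ell-1}\prod_{(c_0,\dots,c_{\ell-2})\in\mathbb{F}_q^{\ell-1}}\bigl(x-\sum_{i=0}^{\ell-2}c_ia_i\bigr)$; evaluating at $x=a_{\ell-1}$ and inserting the inductive expression for $M_{\ell-1}$ produces exactly the claimed product, the tuple $(c_0,\dots,c_{\ell-2})$ contributing the $j=\ell-2$ factor.

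The first consequence is then immediate: the product is nonzero precisely when $a_0\neq 0$ and $a_{j+1}\notin\langle a_0,\dots,a_j\rangle_{\mathbb{F}_q}$ for every $j$, which is exactly the condition that $a_0,\dots,a_{\ell-1}$ be $\mathbb{F}_q$-linearly independent. For the rank assertion, assume $a_0,\dots,a_{\ell-1}$ independent and set $t=\min\{s,\ell\}$; the leading $t\times t$ submatrix of $\mathrm{qvan}_s(\underline{a})$ is again a square $q$-Vandermonde matrix built from the $\mathbb{F}_q$-independent family $a_0,\dots,a_{t-1}$, hence invertible by the part just proved, so $\mathrm{rank}(\mathrm{qvan}_s(\underline{a}))\ge t$, while $\mathrm{rank}(\mathrm{qvan}_s(\underline{a}))\le t$ is clear. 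The only genuinely delicate point is the dependent case together with checking that the product over \emph{all} coefficient tuples, counted with multiplicity, reproduces $F$ exactly; in the independent case this is forced by the exact root count, and in the dependent case both sides simply collapse to zero.
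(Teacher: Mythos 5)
The paper states this lemma as a classical fact (the Moore determinant formula) without giving a proof, so there is nothing to compare against; on its own merits, your argument is correct and is essentially the standard inductive proof of the Moore determinant. The key steps all hold up: expanding along the last column produces a linearized $q$-polynomial $F$ of $q$-degree at most $\ell-1$ whose leading coefficient is the $(\ell-1)\times(\ell-1)$ Moore determinant; $F$ vanishes on the $\mathbb{F}_q$-span of $a_0,\dots,a_{\ell-2}$ because a linearized polynomial commutes with $\mathbb{F}_q$-linear combinations; in the independent case that span has $q^{\ell-1}$ elements, forcing the full factorization by degree count; and in the dependent case a minimal dependence either forces $a_0=0$ or exhibits a vanishing factor $a_{j+1}-\sum_{i\le j}c_i a_i$ with $j\le\ell-3$, so both sides are zero. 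The equivalence with $\mathbb{F}_q$-linear independence then reads off from the product, and the rank claim follows by applying the determinant criterion to the top-left $\min\{s,\ell\}\times\min\{s,\ell\}$ block, which is again a Moore matrix on an independent sub-family. One incidental remark: the paper's displayed product is indexed by $c_1,\dots,c_j$ while the sum runs over $i=0,\dots,j$, which is an off-by-one typo; you tacitly and correctly read the inner product as ranging over all tuples $(c_0,\dots,c_j)\in\mathbb{F}_q^{j+1}$, which is the version your proof establishes and the version needed for the ``nonzero iff independent'' equivalence to hold.
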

    \section{Overview and new insights into Trombetti-Zhou codes}
    \label{sec: Overview and new insights into Trombetti-Zhou codes}
    Trombetti and Zhou in \cite{TZ-codes_anewfamily} introduced a new family of MRD codes in $\mathbb{F}_q^{2n\times 2n}$ of minimum distance $2\leq d\leq 2n$. Further, by determining their \textit{middle and right nuclei}, they showed that their construction is not equivalent to generalized Gabidulin codes and generalized twisted Gabidulin codes \cite[Corollary 8]{TZ-codes_anewfamily} and to those MRD codes associated with maximum scattered linear sets in \cite[p. 649]{longobardizanellascatt}. 
        \begin{definition}
            Let $n,k,s$ be positive integers satisfying $\mathrm{gcd}(s,2n)=1$, let $q$ be an odd prime power and let $\gamma\in\mathbb{F}_{q^{2n}}$ be such that $\mathrm{N}_{q^{2n}/q}(\gamma)$ is a non-square element in $\mathbb{F}_q$. For a generator $\sigma\colon x\mapsto x^{q^s}$ of $\mathrm{Gal}(\mathbb{F}_{q^{2n}}/\mathbb{F}_q)$, a Trombetti-Zhou code (shortly TZ-code) is the set
            \[
            \mathcal{D}_{k,s}(\gamma)=\left\{ax+\sum_{i=1}^{k-1} f_i x^{\sigma^i}+\gamma bx^{\sigma^k}\colon f_i\in\mathbb{F}_{q^{2n}},~a,b\in \mathbb{F}_{q^n}\right\}\subseteq\mathcal{L}_{\sigma,k+1}(\mathbb{F}_{q^{2n}}).
            \]
        \end{definition}
        Note that the code is defined over $\mathbb{F}_{q^{2n}}$ but it is linear over $\mathbb{F}_{q^n}$. The first and the last coefficients of the polynomials in $\mathcal{D}_{k,s}(\gamma)$ are chosen independently from the field $\mathbb{F}_{q^n}$. If $q$ is even, all the elements of $\mathbb{F}_q$ are square elements, so TZ-codes exist only when the field characteristic is odd. Furthermore, as previously noted, TZ-codes are MRD codes, as proved in \cite[Theorem 3]{TZ-codes_anewfamily}. 
        \begin{theorem}
            A Trombetti-Zhou code $\mathcal{D}_{k,s}(\gamma)$ is an $\mathbb{F}_{q^n}$-linear MRD code over $\mathbb{F}_{q^{2n}}$ of size $q^{2nk}$ and minimum rank distance $2n-k+1$.
        \end{theorem}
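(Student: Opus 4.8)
The plan is to establish, in order, the $\mathbb{F}_{q^n}$-linearity together with the cardinality $q^{2nk}$, the lower bound $d\ge 2n-k+1$ on the minimum distance, and finally the matching upper bound forcing the code to be MRD. The first assertions are immediate: $\mathcal{D}_{k,s}(\gamma)$ is closed under addition and under multiplication by scalars from $\mathbb{F}_{q^n}$ because $\mathbb{F}_{q^n}$ is a subfield and $\mathbb{F}_{q^{2n}}$ is an $\mathbb{F}_{q^n}$-vector space; and the parametrization $(a,f_1,\dots,f_{k-1},b)\mapsto ax+\sum_{i=1}^{k-1}f_ix^{\sigma^i}+\gamma bx^{\sigma^k}$ is injective, since the coefficients of a $\sigma$-polynomial are unique and $\gamma\neq 0$ (as $\mathrm{N}_{q^{2n}/q}(\gamma)$ is a non-square, hence nonzero), so $b$ is recovered from the coefficient of $x^{\sigma^k}$; thus $\lvert\mathcal{D}_{k,s}(\gamma)\rvert=(q^n)^2(q^{2n})^{k-1}=q^{2nk}$.

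For the lower bound, since the code is $\mathbb{F}_{q^n}$-linear one has $d=\min\{\mathrm{rank}(f)\colon f\in\mathcal{D}_{k,s}(\gamma),\,f\neq 0\}=2n-\max\{\dim_q\ker(f)\colon f\neq 0\}$. Every nonzero codeword has $\sigma$-degree at most $k$, so the bound on the number of roots of $\sigma$-polynomials recalled above gives $\dim_q\ker(f)\le k$; the crux is to exclude equality. Suppose $\dim_q\ker(f)=k$ for a nonzero codeword $f=ax+\sum_{i=1}^{k-1}f_ix^{\sigma^i}+\gamma bx^{\sigma^k}$. That result forces the leading coefficient $\gamma b$ to be nonzero, whence $b\neq 0$, and
\[
\mathrm{N}_{q^{2n}/q}(a)=(-1)^{2nk}\,\mathrm{N}_{q^{2n}/q}(\gamma b)=\mathrm{N}_{q^{2n}/q}(\gamma b),
\]
since $2nk$ is even. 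By transitivity of the norm $\mathrm{N}_{q^{2n}/q}=\mathrm{N}_{q^n/q}\circ\mathrm{N}_{q^{2n}/q^n}$ and the identity $\mathrm{N}_{q^{2n}/q^n}(z)=z\cdot z^{q^n}=z^2$ valid for $z\in\mathbb{F}_{q^n}$, the left-hand side equals $\mathrm{N}_{q^n/q}(a)^2$, a square in $\mathbb{F}_q$, while the right-hand side equals $\mathrm{N}_{q^{2n}/q}(\gamma)\,\mathrm{N}_{q^n/q}(b)^2$, a non-square times a nonzero square, hence a non-square in $\mathbb{F}_q$ (the degenerate case $a=0$ is excluded the same way). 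A square cannot equal a non-square, a contradiction; therefore $\dim_q\ker(f)\le k-1$ for every nonzero codeword, i.e. $d\ge 2n-k+1$.

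Finally, viewing $\mathcal{D}_{k,s}(\gamma)$ as a rank-metric code in $\mathbb{F}_{q^{2n}}^{2n}$ through the rank- and cardinality-preserving evaluation isomorphism, the Singleton-like bound (Theorem \ref{thm: Singleton-like bound}) gives $q^{2nk}=\lvert\mathcal{D}_{k,s}(\gamma)\rvert\le q^{2n(2n-d+1)}$, i.e. $d\le 2n-k+1$; combined with the previous paragraph, $d=2n-k+1$ and the Singleton-like bound is met with equality, so $\mathcal{D}_{k,s}(\gamma)$ is MRD. The one genuinely delicate step is the second paragraph: invoking the root-counting theorem, noticing that the sign $(-1)^{mk}$ vanishes because $m=2n$ is even, and carrying out the square/non-square dichotomy for the two norms --- this is exactly where the hypotheses that $q$ is odd and that $\mathrm{N}_{q^{2n}/q}(\gamma)$ is a non-square enter the argument.
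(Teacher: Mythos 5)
Your proof is correct and follows the intended argument: the paper itself states this theorem only with a citation to the original Trombetti--Zhou paper, but it recalls the root-counting theorem with its norm condition in the preliminaries precisely because that is the key ingredient. Your square/non-square dichotomy via $\mathrm{N}_{q^{2n}/q^n}(z)=z^2$ for $z\in\mathbb{F}_{q^n}$, the observation that the sign $(-1)^{2nk}$ is $1$, and the Singleton bound for the reverse inequality constitute exactly the standard route to this theorem, and every step is sound.
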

         Before presenting the Delsarte dual code of $\mathcal{D}_{k,s}(\gamma)$, it is crucial to acknowledge the following point.
         \begin{remark}
            \label{remark: existence of xi}
            There exists $\xi\in\mathbb{F}_{q^{2n}}\setminus\{0\}$ such that $\mathrm{Tr}_{q^{2n}/q^n}(\gamma\xi)=\gamma\xi+(\gamma\xi)^{q^n}=0$. Since $\mathrm{Tr}_{q^{2n}/q^n}\colon\mathbb{F}_{q^{2n}}\to\mathbb{F}_{q^{n}}$ is an $\mathbb{F}_{q^{n}}$-linear map such that $\mathrm{dim}_{q^n}\left(\ker \left(\mathrm{Tr}_{q^{2n}/q^n}\right)\right)=1$, there exists $\eta\in\mathbb{F}_{q^{2n}}\setminus\{0\}$ such that $\ker \left(\mathrm{Tr}_{q^{2n}/q^n}\right)=\left<\eta\right>_{\mathbb{F}_{q^{n}}}$. Therefore, if we define $\xi\coloneqq\frac{\eta}{\gamma}$, then $\eta=\gamma\xi$ and $\mathrm{Tr}_{q^{2n}/q^n}(\gamma\xi)=\mathrm{Tr}_{q^{2n}/q^n}(\eta)=0$.
         \end{remark}
         Let us take into account $\xi\in\mathbb{F}_{q^{2n}}\setminus\{0\}$ such that $\mathrm{Tr}_{q^{2n}/q^n}(\gamma\xi)=\gamma\xi+(\gamma\xi)^{q^n}=0$, then $\mathcal{D}_{k,s}(\gamma)^{\perp}$ equals
         \[
         \Gamma\coloneqq\left\{\xi\gamma \tilde{b}x+\xi\tilde{a}x^{\sigma^k}+\sum_{i=k+1}^{2n-1} \xi\tilde{f}_ix^{\sigma^i}\colon \tilde{a},\tilde{b}\in \mathbb{F}_{q^n},~\tilde{f}_i \in \mathbb{F}_{q^{2n}}\right\}.
         \]
         Indeed for any 
         \[
         f(x)=ax+\sum_{i=1}^{k-1} f_i x^{\sigma^i}+\gamma bx^{\sigma^k}\in\mathcal{D}_{k,s}(\gamma)
         \]
         and 
         \[
         \tilde{f}(x)=\xi\gamma\tilde{b}x+\xi \tilde{a}x^{\sigma^k}+\sum_{i=k+1}^{2n-1} \xi \tilde{f}_ix^{\sigma^i}\in\Gamma,
         \]
         we have
         \begin{align*}
             b(f(x),\tilde{f}(x))&=\mathrm{Tr}_{q^{2n}/q}\left(\gamma\xi a\tilde{b}+\gamma\xi \tilde{a}b\right)\\
             &=\mathrm{Tr}_{q^n/q}\left(\mathrm{Tr}_{q^{2n}/q^n}\left(\gamma\xi a\tilde{b}+\gamma\xi \tilde{a}b\right)\right)\\
             &=\mathrm{Tr}_{q^n/q}\left((a\tilde{b}+\tilde{a}b)\mathrm{Tr}_{q^{2n}/q^n}\left(\gamma\xi\right)\right)\\
             &=\mathrm{Tr}_{q^n/q}(0)=0.
         \end{align*}
         We can say even more thanks to \cite[Proposition 4]{TZ-codes_anewfamily}.
        \begin{proposition}
            The Delsarte dual code of $\mathcal{D}_{k,s}(\gamma)$ is equivalent to $\mathcal{D}_{2n-k,s}(\gamma)$.    
        \end{proposition}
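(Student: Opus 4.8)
The plan is to exhibit an explicit equivalence between $\mathcal{D}_{k,s}(\gamma)^{\perp}$ and $\mathcal{D}_{2n-k,s}(\gamma)$, using the identification $\mathcal{D}_{k,s}(\gamma)^{\perp}=\Gamma$ recorded above. Since equivalence of $\mathbb{F}_q$-linear rank-metric codes in $\mathcal{L}_{\sigma,2n}(\mathbb{F}_{q^{2n}})$ is witnessed by a pair of invertible (for $\circ$) $\sigma$-polynomials and a field automorphism, I would look for the simplest possible data: take $\rho=\mathrm{id}_{\mathbb{F}_q}$, the scalar map $h(x)=\xi^{-1}x$ acting on the left, and the monomial $g(x)=x^{\sigma^{2n-k}}$ acting on the right. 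Here $h$ is invertible with inverse $\xi x$, since $\xi\in\mathbb{F}_{q^{2n}}\setminus\{0\}$, and $g$ is invertible in $\mathcal{L}_{\sigma,2n}(\mathbb{F}_{q^{2n}})$ with inverse $x^{\sigma^k}$, because $\sigma$ has order $2n$ and hence $x^{\sigma^{2n-k}}\circ x^{\sigma^k}=x^{\sigma^{2n}}=x$.

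The core step is then to compute $h\circ\tilde f\circ g$ for a generic element
\[
\tilde f(x)=\xi\gamma\tilde b\,x+\xi\tilde a\,x^{\sigma^k}+\sum_{i=k+1}^{2n-1}\xi\tilde f_i\,x^{\sigma^i}\in\Gamma ,\qquad \tilde a,\tilde b\in\mathbb{F}_{q^n},\ \tilde f_i\in\mathbb{F}_{q^{2n}}.
\]
Right-composition with $x^{\sigma^{2n-k}}$ sends $x^{\sigma^i}$ to $x^{\sigma^{i+2n-k}}=x^{\sigma^{(i-k)\bmod 2n}}$, so it shifts the support indices $\{0\}\cup\{k,k+1,\dots,2n-1\}$ onto $\{2n-k\}\cup\{0,1,\dots,2n-1-k\}$: concretely the term $\xi\gamma\tilde b\,x$ moves to $x^{\sigma^{2n-k}}$, the term $\xi\tilde a\,x^{\sigma^k}$ moves to $x$, and each term $\xi\tilde f_i\,x^{\sigma^i}$ with $k+1\le i\le 2n-1$ moves to $x^{\sigma^{i-k}}$. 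Left-multiplication by $\xi^{-1}$ then clears the common factor $\xi$, yielding
\[
(h\circ\tilde f\circ g)(x)=\tilde a\,x+\sum_{j=1}^{2n-1-k}\tilde f_{k+j}\,x^{\sigma^j}+\gamma\tilde b\,x^{\sigma^{2n-k}} .
\]
As $\tilde a,\tilde b$ range over $\mathbb{F}_{q^n}$ and the $\tilde f_{k+j}$ range freely over $\mathbb{F}_{q^{2n}}$, the right-hand side is exactly a generic element of $\mathcal{D}_{2n-k,s}(\gamma)$; and since $h,g$ are invertible, $\tilde f\mapsto h\circ\tilde f\circ g$ is a bijection $\Gamma\to\mathcal{D}_{2n-k,s}(\gamma)$. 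Hence $h\circ\Gamma\circ g=\mathcal{D}_{2n-k,s}(\gamma)$, which gives the asserted equivalence $\mathcal{D}_{k,s}(\gamma)^{\perp}\sim\mathcal{D}_{2n-k,s}(\gamma)$.

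The only genuine care required is bookkeeping: tracking the index arithmetic modulo $2n$ so that the wrap-around $x^{\sigma^0}\mapsto x^{\sigma^{2n-k}}$ is handled correctly in the quotient algebra $\mathcal{L}_{\sigma,2n}(\mathbb{F}_{q^{2n}})$, and checking that the two distinguished coefficients land where they must — the leading one in $\mathbb{F}_{q^n}$ and the trailing one as $\gamma$ times an element of $\mathbb{F}_{q^n}$ — which is precisely what the choices $h(x)=\xi^{-1}x$ and $g(x)=x^{\sigma^{2n-k}}$ are engineered to accomplish. Note that no property of $\xi$ beyond $\xi\neq 0$ is needed here; the trace condition $\mathrm{Tr}_{q^{2n}/q^n}(\gamma\xi)=0$ was already consumed in identifying $\mathcal{D}_{k,s}(\gamma)^{\perp}$ with $\Gamma$. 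Alternatively, one may simply invoke \cite[Proposition~4]{TZ-codes_anewfamily}, but the computation above is short and self-contained.
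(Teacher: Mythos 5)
Your computation is correct and self-contained. The paper itself gives no proof of this proposition — it simply points to \cite[Proposition~4]{TZ-codes_anewfamily} — whereas you exhibit the equivalence explicitly by taking $\rho=\mathrm{id}_{\mathbb{F}_q}$, $h(x)=\xi^{-1}x$, and $g(x)=x^{\sigma^{2n-k}}$ and verifying that right composition with $g$ cyclically shifts the support $\{0\}\cup\{k,\ldots,2n-1\}$ of $\Gamma$ to $\{0,1,\ldots,2n-k\}$, after which left multiplication by $\xi^{-1}$ clears the common factor. The index arithmetic modulo $2n$ and the bookkeeping on the two distinguished coefficients (the constant term landing in $\mathbb{F}_{q^n}$, the top term landing in $\gamma\,\mathbb{F}_{q^n}$) all check out, $h$ and $g$ are clearly invertible for $\circ$ in $\mathcal{L}_{\sigma,2n}(\mathbb{F}_{q^{2n}})$, and the parameters $\tilde a,\tilde b,\tilde f_{k+j}$ sweep out exactly the defining sets of $\mathcal{D}_{2n-k,s}(\gamma)$. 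Your closing remark that the trace condition on $\gamma\xi$ is already spent in the identification $\mathcal{D}_{k,s}(\gamma)^{\perp}=\Gamma$ (so that only $\xi\neq 0$ matters for the composition step) is a nice observation. In short, where the paper outsources this to a citation, you supply a clean direct proof; both are sound, and nothing is missing.
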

        \noindent For the sake of simplicity, we assume that $s=1$, and denote $\mathcal{D}_{k,1}(\gamma)$ by $\mathcal{D}_{k}(\gamma)$. 
        \begin{remark}
            It is worth noting that $\gamma\in\mathbb{F}_{q^{2n}}\setminus\mathbb{F}_{q^{n}}$. Indeed, if this is not the case, then
         \[
         \mathrm{N}_{q^{2n}/q}(\gamma)=\gamma^{\frac{q^{2n}-1}{q-1}}=\left(\gamma^{\frac{q^{n}-1}{q-1}}\right)^{q^{n}+1}=\left(\gamma^{\frac{q^{n}-1}{q-1}}\right)^2,
         \]
         which leads to a contradiction. Therefore, $\mathbb{F}_{q^{2n}}$ can be seen as a vector space of dimension two over $\mathbb{F}_{q^n}$, i.e., $\mathbb{F}_{q^{2n}}=\mathbb{F}_{q^n}(\gamma)$, and hence
         \[
            \mathcal{D}_{k}(\gamma)=\left<x,x^q,\gamma x^q,x^{q^2},\gamma x^{q^2},\dots,x^{q^{k-1}},\gamma x^{q^{k-1}},\gamma x^{q^k}\right>_{\mathbb{F}_{q^n}}.   
         \] 
        \end{remark}
        \noindent As outlined in Section \ref{sec: Preliminaries}, rank-metric codes can be also regarded as a set of vectors in $\mathbb{F}_{q^{2n}}^{2n}$.
        \begin{proposition}
            Let $\mathcal{D}_{k}(\gamma)$ be a Trombetti-Zhou code and let $\underline{\lambda}=(\lambda_0,\lambda_1,\ldots,\lambda_{2n-1})$ be an $\mathbb{F}_q$-basis of $\mathbb{F}_{q^{2n}}$. The map 
            \begin{align*}
                ev_{\underline{\lambda}}\colon \mathcal{D}_{k}(\gamma)=\left<x,x^q,\gamma x^q,\dots,x^{q^{k-1}},\gamma x^{q^{k-1}},\gamma x^{q^k}\right>_{\mathbb{F}_{q^n}}&\to\mathbb{F}_{q^{2n}}^{2n}\\
                f(x)&\mapsto(f(\lambda_0),f(\lambda_1),\dots,f(\lambda_{2n-1})) \notag
            \end{align*}
            is an $\mathbb{F}_{q^{n}}$-linear injective map which preserves the rank, i.e., $\mathrm{rank}(f(x))=w(ev_{\underline{\lambda}}(f(x)))$.
        \end{proposition}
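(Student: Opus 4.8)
The plan is to derive the statement from the evaluation isomorphism recalled in Section~\ref{sec: Preliminaries}, essentially as a restriction to a subfield-subspace. First I would record three preliminary observations. Since the minimum distance of $\mathcal{D}_k(\gamma)$ is $2n-k+1\geq 2$, we have $k\leq 2n-1$, so $k+1\leq 2n$ and $\mathcal{D}_k(\gamma)$ is genuinely contained in the quotient algebra $\mathcal{L}_{2n}(\mathbb{F}_{q^{2n}})$ of linearized $q$-polynomials of $q$-degree at most $2n-1$. Next, $\mathcal{D}_k(\gamma)$ is, by its very definition, the $\mathbb{F}_{q^n}$-span of $x,x^q,\gamma x^q,\dots,x^{q^{k-1}},\gamma x^{q^{k-1}},\gamma x^{q^k}$, hence an $\mathbb{F}_{q^n}$-linear subspace of $\mathcal{L}_{2n}(\mathbb{F}_{q^{2n}})$. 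Finally, since $\underline{\lambda}$ is an $\mathbb{F}_q$-basis of $\mathbb{F}_{q^{2n}}$ it has exactly $2n$ entries, so $ev_{\underline{\lambda}}$ is precisely the restriction to $\mathcal{D}_k(\gamma)$ of the full evaluation map $\mathcal{L}_{2n}(\mathbb{F}_{q^{2n}})\to\mathbb{F}_{q^{2n}}^{2n}$, $f\mapsto(f(\lambda_0),\dots,f(\lambda_{2n-1}))$, which is an $\mathbb{F}_{q^{2n}}$-linear, rank-preserving isomorphism.

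Granting this, the three assertions follow at once. An $\mathbb{F}_{q^{2n}}$-linear map is a fortiori $\mathbb{F}_{q^n}$-linear (as $\mathbb{F}_{q^n}\subseteq\mathbb{F}_{q^{2n}}$), and its restriction to the $\mathbb{F}_{q^n}$-subspace $\mathcal{D}_k(\gamma)$ remains $\mathbb{F}_{q^n}$-linear; the restriction of an injective map is injective; and the identity $\mathrm{rank}(f(x))=w(ev_{\underline{\lambda}}(f(x)))$, which holds for all $f\in\mathcal{L}_{2n}(\mathbb{F}_{q^{2n}})$, holds in particular on $\mathcal{D}_k(\gamma)$.

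Should a self-contained proof be preferred, I would argue directly: $\mathbb{F}_{q^n}$-linearity is immediate from the $\mathbb{F}_q$-linearity of each evaluation functional $f\mapsto f(\lambda_j)$; injectivity holds because a nonzero $f\in\mathcal{L}_{2n}(\mathbb{F}_{q^{2n}})$ has $q$-degree at most $2n-1$ and therefore cannot vanish on the $2n$ elements $\lambda_0,\dots,\lambda_{2n-1}$, which are $\mathbb{F}_q$-independent — this is the root bound recalled above, or equivalently the nonsingularity of $\mathrm{qvan}_{2n}(\underline{\lambda})$ from Lemma~\ref{lemma: determinant of q-Vandermonde matrix}; and rank preservation follows from the fact that, $f$ being $\mathbb{F}_q$-linear and $\underline{\lambda}$ spanning $\mathbb{F}_{q^{2n}}$ over $\mathbb{F}_q$, the image of $f$ as an endomorphism of $\mathbb{F}_{q^{2n}}$ equals $\langle f(\lambda_0),\dots,f(\lambda_{2n-1})\rangle_{\mathbb{F}_q}$, whose $\mathbb{F}_q$-dimension is by definition $w(ev_{\underline{\lambda}}(f))$.

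I do not expect any real obstacle; the only points that truly need checking are the inclusion $\mathcal{D}_k(\gamma)\subseteq\mathcal{L}_{2n}(\mathbb{F}_{q^{2n}})$ and the identification of $ev_{\underline{\lambda}}$ with the restriction of the already-established isomorphism, after which every claimed property is transported verbatim to the subspace.
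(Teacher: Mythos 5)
Your proof is correct, and it differs mildly but genuinely from the paper's in how injectivity is established. The paper proves injectivity by invoking the MRD property of $\mathcal{D}_k(\gamma)$: if $f\neq 0$ vanished on all of $\underline{\lambda}$, then $\ker f=\mathbb{F}_{q^{2n}}$, contradicting $\dim_q(\ker f)\leq k-1$ which follows from $d(\mathcal{D}_k(\gamma))=2n-k+1$. You instead obtain injectivity (and the other two properties) by viewing $ev_{\underline{\lambda}}$ as the restriction to $\mathcal{D}_k(\gamma)$ of the already-recalled $\mathbb{F}_{q^{2n}}$-linear rank-preserving isomorphism $\mathcal{L}_{2n}(\mathbb{F}_{q^{2n}})\to\mathbb{F}_{q^{2n}}^{2n}$, whose injectivity rests only on the root bound $\dim_q(\ker f)\leq\deg_q f\leq 2n-1<2n$ (equivalently, nonsingularity of $\mathrm{qvan}_{2n}(\underline{\lambda})$). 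Your route is more modular and slightly more elementary, since it does not need the minimum-distance theorem for TZ codes, only the $q$-degree bound valid for any subspace of $\mathcal{L}_{2n}(\mathbb{F}_{q^{2n}})$; the paper's route is self-contained within the TZ-code discussion and foreshadows the recurring use of the MRD property in the subsequent decoding analysis. The linearity and rank-preservation arguments are essentially identical in both.
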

        \begin{proof}
            Trivially
            \[
            ev_{\underline{\lambda}}(\alpha f(x)+\beta g(x))=\alpha~ev_{\underline{\lambda}}(f(x))+\beta~ev_{\underline{\lambda}}(g(x)),
            \]
            for all $\alpha,\beta\in\mathbb{F}_{q^n}$ and $f(x),g(x)\in\mathcal{D}_k(\gamma)$. Further, $ev_{\underline{\lambda}}$ is injective. Suppose on the contrary $f(\lambda_i)=0$ for all $i\in\{0,1,\ldots,2n-1\}$ and $f\neq 0$. Since $f\in\mathcal{D}_k(\gamma)$, $\mathrm{rank}(f)\geq d=2n-k+1$, therefore $2n\leq\mathrm{dim}_q(\ker (f))<k$, a contradiction. Finally, since $\underline{\lambda}=(\lambda_0,\lambda_1,\ldots,\lambda_{2n-1})$ is an $\mathbb{F}_q$-basis of $\mathbb{F}_{q^{2n}}$, $\{f(\lambda_0),f(\lambda_1),\dots,f(\lambda_{2n-1})\}$ spans $\mathrm{Im}(f)$, therefore
            \begin{align*}
                \mathrm{rank}(f(x))&=\mathrm{dim}_q(\mathrm{Im}(f))\\
                &=\mathrm{dim}_q(\left<f(\lambda_0),f(\lambda_1),\dots,f(\lambda_{2n-1})\right>_{\mathbb{F}_q})\\
                &=w(ev_{\underline{\lambda}}(f(x))).
            \end{align*}
        \end{proof}
        \noindent As a consequence, if we define 
        \[
        \mathcal{TZ}_k(\gamma)[\underline{\lambda}]\coloneqq ev_{\underline{\lambda}}(\mathcal{D}_k(\gamma)),
        \]
        then
        \[
        ev_{\underline{\lambda}}\colon \mathcal{D}_{k}(\gamma)\to\mathcal{TZ}_k(\gamma)[\underline{\lambda}]
        \]
        is an $\mathbb{F}_{q^n}$-linear isomorphism which preserves the rank\footnote{For the sake of simplicity, the same notation $ev_{\underline{\lambda}}$ is employed.}. As a result $\mathcal{TZ}_k(\gamma)[\underline{\lambda}]\subseteq\mathbb{F}_{q^{2n}}^{2n}$ is an $\mathbb{F}_{q^n}$-linear MRD code of size $q^{2nk}$ and minimum rank distance $2n-k+1$, i.e., it is a $q^n\minus[2n,2k,2n-k+1]_{q^{2n}/q}$ code.
        \medskip
        \\
        As a matter of fact, thanks to Lemma \ref{remark: it is possible to evaluate a code C in less than m elements}, it is possible to evaluate a TZ-code $\mathcal{D}_k(\gamma)$ in less than $2n$ elements. For the sake of completeness, we will state and prove Lemma \ref{remark: it is possible to evaluate a code C in less than m elements} in the specific case of TZ-codes.
    \begin{lemma}
        \label{lemma: evaluate in less elements}
        Let $\mathcal{D}_{k}(\gamma)$ be a Trombetti-Zhou code. Let $\ell$ be a positive integer such that $k \leq\ell\leq 2n$ and let $\alpha_0,\alpha_1,\dots,\alpha_{\ell-1}$ be $\mathbb{F}_q$-linearly independent elements of $\mathbb{F}_{q^{2n}}$. Then the rank-metric code
        \[
            \mathcal{TZ}_k(\gamma)[\underline{\alpha}]\coloneqq\{(g(\alpha_0),g(\alpha_1),\dots, g(\alpha_{\ell-1}))\colon g\in\mathcal{D}_{k}(\gamma)\} \subseteq \mathbb{F}_{q^{2n}}^{\ell}
        \]
        is an MRD code of $\mathbb{F}_{q^{2n}}^{\ell}$ with $\left\lvert\mathcal{TZ}_k(\gamma)[\underline{\alpha}]\right\rvert=\left\lvert\mathcal{D}_{k}(\gamma)\right\rvert=q^{2nk}$ and minimum rank distance $\ell-k+1$.
    \end{lemma}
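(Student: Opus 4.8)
The plan is to reproduce the proof of Lemma~\ref{remark: it is possible to evaluate a code C in less than m elements} in the concrete situation $m = 2n$, arguing directly with $\mathcal{D}_k(\gamma)$. The only structural input I would rely on is the already-recalled fact that $\mathcal{D}_k(\gamma)$ is an $\mathbb{F}_{q^n}$-linear MRD code of size $q^{2nk}$ and minimum rank distance $2n-k+1$; reading this through the identification of $\sigma$-polynomials modulo $x^{\sigma^{2n}}-x$ with endomorphisms of $\mathbb{F}_{q^{2n}}$, it states precisely that every nonzero $g\in\mathcal{D}_k(\gamma)$ has $\mathrm{rank}(g)\ge 2n-k+1$, equivalently $\mathrm{dim}_q\ker(g)\le k-1$. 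This uniform kernel bound is what makes the truncated evaluation well behaved.

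The first step I would carry out is the injectivity of $ev_{\underline{\alpha}}$ on $\mathcal{D}_k(\gamma)$, which gives $|\mathcal{TZ}_k(\gamma)[\underline{\alpha}]| = |\mathcal{D}_k(\gamma)| = q^{2nk}$. Indeed, if $g\in\mathcal{D}_k(\gamma)$ satisfies $g(\alpha_i)=0$ for every $i$, then $\ker(g)$ contains the $\mathbb{F}_q$-subspace $V\coloneqq\langle\alpha_0,\ldots,\alpha_{\ell-1}\rangle_{\mathbb{F}_q}$, which has $\mathbb{F}_q$-dimension $\ell\ge k$ by the linear independence hypothesis; this contradicts $\mathrm{dim}_q\ker(g)\le k-1$ unless $g=0$. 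As $ev_{\underline{\alpha}}$ is $\mathbb{F}_{q^n}$-linear, $\mathcal{TZ}_k(\gamma)[\underline{\alpha}]$ is then an $\mathbb{F}_{q^n}$-linear code in $\mathbb{F}_{q^{2n}}^{\ell}$ of cardinality $q^{2nk}$.

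Next I would determine the minimum rank distance. For the lower bound, I would take a nonzero $g\in\mathcal{D}_k(\gamma)$ and observe that $w(ev_{\underline{\alpha}}(g))=\mathrm{dim}_q g(V)$ since $g$ is $\mathbb{F}_q$-linear; from $\mathrm{dim}_q(V\cap\ker g)\le\mathrm{dim}_q\ker g\le k-1$ and rank-nullity for $g|_V$ it follows that $\mathrm{dim}_q g(V)\ge\ell-(k-1)=\ell-k+1$, hence $d(\mathcal{TZ}_k(\gamma)[\underline{\alpha}])\ge\ell-k+1$. For the matching upper bound I would invoke the Singleton-like bound of Theorem~\ref{thm: Singleton-like bound}: applied to a length-$\ell$ code over $\mathbb{F}_{q^{2n}}$ with $\ell\le 2n$ and cardinality $q^{2nk}$ it forces $d(\mathcal{TZ}_k(\gamma)[\underline{\alpha}])\le\ell-k+1$. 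The two bounds coincide, so the minimum rank distance is exactly $\ell-k+1$ and the code meets the Singleton-like bound, i.e., it is MRD. For completeness I would also note that the statement is nothing more than Lemma~\ref{remark: it is possible to evaluate a code C in less than m elements} read with $m=2n$, $\kappa=k$ and $h=k-1$, the requirement $\kappa-1\le h<\ell$ there becoming exactly $k\le\ell$.

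I do not expect a genuine obstacle: the argument is a short application of rank-nullity together with a cardinality count against the Singleton-like bound. The only subtlety to keep in mind is that, since we evaluate at only $\ell\le 2n$ elements, the evaluation map is in general no longer rank-preserving — one has merely $w(ev_{\underline{\alpha}}(g))\le\mathrm{rank}(g)$ — so the lower bound on the minimum distance has to be squeezed out of the restricted map $g|_V$ rather than out of $g$ itself, and this is exactly the place where the kernel bound $\mathrm{dim}_q\ker(g)\le k-1$ inherited from the MRD property of $\mathcal{D}_k(\gamma)$ enters.
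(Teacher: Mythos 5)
Your argument is correct and follows essentially the same route as the paper's own proof: injectivity of $ev_{\underline{\alpha}}$ from the kernel bound $\dim_q\ker(g)\le k-1$ (inherited from $d(\mathcal{D}_k(\gamma))=2n-k+1$), the upper bound $d\le\ell-k+1$ from the Singleton-like bound once the cardinality $q^{2nk}$ is established, and the matching lower bound via rank--nullity applied to $g$ restricted to $\mathcal{U}_{\underline{\alpha}}=\langle\alpha_0,\ldots,\alpha_{\ell-1}\rangle_{\mathbb{F}_q}$, which is exactly the identity $w(ev_{\underline{\alpha}}(g))=\ell-\dim_q(\ker(g)\cap\mathcal{U}_{\underline{\alpha}})$ used in the paper. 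Your closing remark correctly identifies the one real subtlety — that truncated evaluation only satisfies $w(ev_{\underline{\alpha}}(g))\le\mathrm{rank}(g)$, so the lower bound must come from the restriction $g|_{\mathcal{U}_{\underline{\alpha}}}$ — which is precisely what the paper's computation encodes.
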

    \begin{proof}
       Let us consider $\mathcal{U}_{\underline{\alpha}}\coloneqq\left<\alpha_0,\alpha_1,\dots,\alpha_{\ell-1}\right>_{\mathbb{F}_q}\subseteq\mathbb{F}_{q^{2n}}$ and let $g\in\mathcal{D}_k(\gamma)$; since $d(\mathcal{D}_k(\gamma))=2n-k+1$, then $\mathrm{dim}_q(\ker (g))\leq k-1$. Now, consider
        \[
        ev_{\underline{\alpha}}\colon g\in\mathcal{D}_k(\gamma)\mapsto(g(\alpha_0),g(\alpha_1),\dots,g(\alpha_{\ell-1}))\in\mathcal{TZ}_k(\gamma)[\underline{\alpha}].
        \]
        First note that $ev_{\underline{\alpha}}$ is injective. Indeed if $g(\alpha_i)=0$ for all $i\in\{0,\dots,\ell-1\}$ and $g\neq 0$, then $\ell\leq\mathrm{dim}_q(\mathrm{ker}(g))<k$, which leads to a contradiction. As $ev_{\underline{\alpha}}$ is also trivially surjective, it can be concluded that it is bijective. So we have that $\left\lvert\mathcal{TZ}_k(\gamma)[\underline{\alpha}]\right\rvert=\left\lvert\mathcal{D}_{k}(\gamma)\right\rvert=q^{2nk}$ and the Singleton bound (Theorem \ref{thm: Singleton-like bound}) implies $d\left(\mathcal{TZ}_k(\gamma)[\underline{\alpha}]\right)\leq \ell-k+1$. Moreover, 
        \[
        w(ev_{\underline{\alpha}}(g))=\ell-\mathrm{dim}_q(\mathrm{ker}(g)\cap\mathcal{U}_{\underline{\alpha}})\geq\ell-k+1,
        \]
        therefore $\mathcal{TZ}_k(\gamma)[\underline{\alpha}]$ is an MRD code with minimum distance $d\left(\mathcal{TZ}_k(\gamma)[\underline{\alpha}]\right)=\ell-k+1$.
    \end{proof}
        In the remainder of this work, we focus on the case $\ell=2n$ and refer the reader to Section \ref{sec: Conclusions and open problems} for the case $k \leq\ell<2n$.
        \medskip
        \\
        It is well known that a linear code can be compactly represented by a generator matrix, whose rows form a basis of the code. However, as previously noted, TZ-codes are $\mathbb{F}_{q^n}$-linear and not $\mathbb{F}_{q^{2n}}$-linear, therefore, one must work on the subfield $\mathbb{F}_{q^n}$ and construct a matrix that plays an analogous role to that of the usual generator matrix. Since $\mathcal{D}_{k}(\gamma)$ is an $\mathbb{F}_{q^n}$-linear space spanned by the $\mathbb{F}_{q^n}$-basis $x,x^q,\gamma x^q,\dots,x^{q^{k-1}},\gamma x^{q^{k-1}},\gamma x^{q^k}$, it follows naturally that an $\mathbb{F}_{q^n}$-basis for $\mathcal{TZ}_k(\gamma)[\underline{\lambda}]$ is given by evaluating $x,x^q,\gamma x^q,\dots,x^{q^{k-1}},\gamma x^{q^{k-1}},\gamma x^{q^k}$ over an $\mathbb{F}_q$-basis of $\mathbb{F}_{q^{2n}}$. If we consider the $2k\times 2n$ matrix $G$ whose rows are an $\mathbb{F}_{q^n}$-basis for $\mathcal{TZ}_k(\gamma)[\underline{\lambda}]$, then   
        \[
        \mathcal{TZ}_k(\gamma)[\underline{\lambda}]=\left\{\underline{x}G\colon \underline{x}\in\mathbb{F}_{q^n}^{2k}\right\}.
        \]
        Because of the latter, we introduce the following more general definition. Henceforth, the term $\mathbb{F}_{q^{n}}$-rank will exclusively connote $\mathbb{F}_{q^{n}}$-rank by rows.
        \begin{definition}
        \label{def: Fqn gen matr}
        Let $n,\kappa,r,\ell$ be positive integers such that $\kappa\leq r\ell$ and let $\mathcal{C}$ be a $q^n\minus[\ell,\kappa]_{q^{rn}/q}$ code. Then, a matrix $G\in\mathbb{F}_{q^{rn}}^{\kappa\times\ell}$ of $\mathbb{F}_{q^{n}}$-rank $\kappa$ is called an $\mathbb{F}_{q^n}$-generator matrix of $\mathcal{C}$ if
        \[
        \mathcal{C}=\left\{\underline{x}G\colon \underline{x}\in\mathbb{F}_{q^n}^{\kappa}\right\}.
        \]
        \end{definition}
        \noindent Although a linear code can be also compactly represented by a parity-check matrix, which has the code as kernel, this does not directly apply to a code that is only linear on a subfield of its whole field of definition. For example, in the case of TZ-codes, a matrix $H\in\mathbb{F}_{q^{2n}}^{(2n-2k)\times 2n}$ such that $GH^{\top}=0$ is not automatically an $\mathbb{F}_{q^n}$-generator matrix of the dual. This is a consequence of the fact that the \emph{natural} orthogonal complement of an $\mathbb{F}_{q^n}$-linear subspace of $\mathbb{F}_{q^{2n}}^{2n}$ is, in general, not an $\mathbb{F}_{q^n}$-linear subspace of $\mathbb{F}_{q^{2n}}^{2n}$. In other words, the inner product \textit{does not preserve $\mathbb{F}_{q^n}$-linearity}. Therefore, the product between a \textit{TZ-code generator matrix} and one of its \textit{TZ-code parity-check matrix} does not have to be the null matrix as in the case of $\mathbb{F}_{q^{2n}}$-linear codes, instead, it has to be a matrix whose entries have trace null over $\mathbb{F}_{q^n}$.
        \par Due to this lack of linearity, an \textit{alternative inner product} must be employed when dealing with the notion of duality.
        \begin{definition}
            \label{def: Fqn dual code}
            Let $n,\kappa,r,\ell$ be positive integers such that $\kappa\leq r\ell$ and let $\mathcal{C}$ be a $q^n\minus[\ell,\kappa]_{q^{rn}/q}$ code. The $\mathbb{F}_{q^n}$-dual code $\mathcal{C}^{\upmodels}$ is an $\mathbb{F}_{q^n}$-linear code of length $\ell$ and $\mathbb{F}_{q^n}$-dimension $r\ell-\kappa$ defined as 
            \[
            \mathcal{C}^{\upmodels}=\left\{\underline{y}\in\mathbb{F}_{q^{rn}}^{\ell}\colon \mathrm{Tr}_{q^{rn}/q^n}\left(\langle\underline{y},\underline{c}\rangle\right)=0\text{ for all }\underline{c}\in\mathcal{C}\right\}.
            \]
        \end{definition}
        \begin{proposition}
            Let $n,\kappa,r,\ell$ be positive integers such that $\kappa\leq r\ell$ and let $\mathcal{C}$ be a $q^n\minus[\ell,\kappa]_{q^{rn}/q}$ code. Consider an $\mathbb{F}_{q^n}$-generator matrix $H\in\mathbb{F}_{q^{rn}}^{(r\ell-\kappa)\times\ell}$ of $\mathcal{C}^{\upmodels}$, then 
            \[
                \mathcal{C}=\left\{\underline{y}\in\mathbb{F}_{q^{rn}}^{\ell}\colon \mathrm{Tr}_{q^{rn}/q^n}\left(\underline{y}H^{\top}\right)=\underline{0}\right\}.
            \]
        \end{proposition}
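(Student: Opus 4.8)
The plan is to recognize the right-hand side as the $\mathbb{F}_{q^n}$-double dual of $\mathcal{C}$, and then to show that $\upmodels$ is an involution on $\mathbb{F}_{q^n}$-linear codes by combining an easy inclusion with a dimension count. First I would unpack the matrix condition. Write $\underline{h}_0,\ldots,\underline{h}_{r\ell-\kappa-1}$ for the rows of $H$; then the $j$-th coordinate of $\underline{y}H^{\top}$ is $\langle \underline{y},\underline{h}_j\rangle=\sum_{i=0}^{\ell-1}y_i(\underline{h}_j)_i$, so the condition $\mathrm{Tr}_{q^{rn}/q^n}(\underline{y}H^{\top})=\underline{0}$ is exactly $\mathrm{Tr}_{q^{rn}/q^n}(\langle \underline{y},\underline{h}_j\rangle)=0$ for every $j$. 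Since $H$ is an $\mathbb{F}_{q^n}$-generator matrix of $\mathcal{C}^{\upmodels}$ (Definition \ref{def: Fqn gen matr}), its rows span $\mathcal{C}^{\upmodels}$ over $\mathbb{F}_{q^n}$, and since both $\langle\cdot,\cdot\rangle$ (in its second argument) and $\mathrm{Tr}_{q^{rn}/q^n}$ are $\mathbb{F}_{q^n}$-linear, writing an arbitrary $\underline{c}'\in\mathcal{C}^{\upmodels}$ as an $\mathbb{F}_{q^n}$-combination of the $\underline{h}_j$ shows that the vanishing of $\mathrm{Tr}_{q^{rn}/q^n}(\langle \underline{y},\underline{h}_j\rangle)$ for all $j$ is equivalent to $\mathrm{Tr}_{q^{rn}/q^n}(\langle \underline{y},\underline{c}'\rangle)=0$ for all $\underline{c}'\in\mathcal{C}^{\upmodels}$. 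Hence the right-hand side equals $(\mathcal{C}^{\upmodels})^{\upmodels}$, and the proposition reduces to the identity $(\mathcal{C}^{\upmodels})^{\upmodels}=\mathcal{C}$.

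For the inclusion $\mathcal{C}\subseteq(\mathcal{C}^{\upmodels})^{\upmodels}$ I would invoke the symmetry of the standard inner product: if $\underline{c}\in\mathcal{C}$, then for every $\underline{y}\in\mathcal{C}^{\upmodels}$ one has $\mathrm{Tr}_{q^{rn}/q^n}(\langle \underline{c},\underline{y}\rangle)=\mathrm{Tr}_{q^{rn}/q^n}(\langle \underline{y},\underline{c}\rangle)=0$ directly from the definition of $\mathcal{C}^{\upmodels}$, so $\underline{c}\in(\mathcal{C}^{\upmodels})^{\upmodels}$.

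The equality then follows by comparing $\mathbb{F}_{q^n}$-dimensions. By Definition \ref{def: Fqn dual code}, applied to $\mathcal{C}$, the code $\mathcal{C}^{\upmodels}$ has $\mathbb{F}_{q^n}$-dimension $r\ell-\kappa$; applying the same definition to $\mathcal{C}^{\upmodels}$ in place of $\mathcal{C}$, its $\mathbb{F}_{q^n}$-dual $(\mathcal{C}^{\upmodels})^{\upmodels}$ has $\mathbb{F}_{q^n}$-dimension $r\ell-(r\ell-\kappa)=\kappa=\dim_{q^n}(\mathcal{C})$. Since $\mathcal{C}\subseteq(\mathcal{C}^{\upmodels})^{\upmodels}$ and both are $\mathbb{F}_{q^n}$-subspaces of $\mathbb{F}_{q^{rn}}^{\ell}$ of the same finite dimension, they coincide, which is the assertion.

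The only non-formal ingredient is the dimension statement $\dim_{q^n}(\mathcal{C}^{\upmodels})=r\ell-\kappa$ that I am quoting from Definition \ref{def: Fqn dual code}; this is where the real content lies, and a self-contained argument would verify that the $\mathbb{F}_{q^n}$-bilinear form $(\underline{x},\underline{y})\mapsto\mathrm{Tr}_{q^{rn}/q^n}(\langle\underline{x},\underline{y}\rangle)$ on the $r\ell$-dimensional $\mathbb{F}_{q^n}$-vector space $\mathbb{F}_{q^{rn}}^{\ell}$ is non-degenerate (which reduces to non-degeneracy of the trace form of $\mathbb{F}_{q^{rn}}/\mathbb{F}_{q^n}$, taking $\underline{y}$ supported on a single coordinate where $\underline{x}$ is nonzero), whence the classical orthogonal-complement dimension identity gives $\dim_{q^n}\mathcal{C}+\dim_{q^n}\mathcal{C}^{\upmodels}=r\ell$. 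Everything else is routine linear-algebra bookkeeping.
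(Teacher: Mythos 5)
Your proof is correct and essentially matches the paper's: both reduce the statement to the inclusion $\mathcal{C}\subseteq(\mathcal{C}^{\upmodels})^{\upmodels}$ (the paper writes the right-hand side as $\mathrm{Rowspan}_{\mathbb{F}_{q^n}}(H)^{\upmodels}$ and likewise notes that $\mathrm{Rowspan}_{\mathbb{F}_{q^n}}(H)=\mathcal{C}^{\upmodels}$), and then conclude by the orthogonal-complement dimension identity for the non-degenerate reflexive $\mathbb{F}_{q^n}$-bilinear form $(\underline{x},\underline{y})\mapsto\mathrm{Tr}_{q^{rn}/q^n}(\langle\underline{x},\underline{y}\rangle)$. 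Your closing observation --- that the equality $\dim_{q^n}(\mathcal{C}^{\upmodels})=r\ell-\kappa$ quoted from Definition \ref{def: Fqn dual code} is really a consequence of non-degeneracy of the trace pairing rather than a free assumption --- together with your sketch of why that pairing is non-degenerate, is exactly the content the paper outsources to its citations, so the argument is sound.
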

        
        \begin{proof}
            Recall that $\langle\cdot,\cdot\rangle\colon\mathbb{F}_{q^{rn}}^{\ell}\times\mathbb{F}_{q^{rn}}^{\ell}\to \mathbb{F}_{q^{rn}}$ is a non-degenerate reflexive bilinear form on the $\ell$-dimensional $\mathbb{F}_{q^{rn}}$-vector space $\mathbb{F}_{q^{rn}}^{\ell}$. Let us regard $\mathbb{F}_{q^{rn}}^{\ell}$ as a $rl$-dimensional $\mathbb{F}_{q^n}$-vector space, then the map 
            \[
            (\underline{u},\underline{v})\in\mathbb{F}_{q^{rn}}^{\ell}\times\mathbb{F}_{q^{rn}}^{\ell}\mapsto \mathrm{Tr}_{q^{rn}/q^n}\left(\langle\underline{u},\underline{v}\rangle\right)\in\mathbb{F}_{q^n}
            \]
            turns out to be a non-degenerate reflexive bilinear form on $\mathbb{F}_{q^{rn}}^{\ell}$ seen as an $\mathbb{F}_{q^n}$-vector space (see \cite[p. 9]{PhDthesis_Zullo} for more details). By Definition \ref{def: Fqn dual code},
            \[
            \mathcal{C}\subseteq\left\{\underline{y}\in\mathbb{F}_{q^{rn}}^{\ell}\colon \mathrm{Tr}_{q^{rn}/q^n}\left(\langle\underline{y},\underline{h}\rangle\right)=0\text{ for all }\underline{h}\in\mathrm{Rowspan}_{\mathbb{F}_{q^n}}(H)=\mathcal{C}^{\upmodels}\right\}=\mathrm{Rowspan}_{\mathbb{F}_{q^n}}(H)^{\upmodels}.
            \]
            The only thing left to prove is that an equality holds, which means that $\mathcal{C}$ is the orthogonal complement with respect to $\mathrm{Tr}_{q^{rn}/q^n}\left(\langle\cdot,\cdot\rangle\right)$ of the $\mathbb{F}_{q^n}$-span of the rows of $H$, i.e., $\mathcal{C}^{\upmodels}$. Thanks to \cite[p. 52]{taylor1992geometry}, 
            \[
            \mathrm{dim}_{q^n}\left(\mathrm{Rowspan}_{\mathbb{F}_{q^n}}(H)\right)+ \mathrm{dim}_{q^n}\left(\mathrm{Rowspan}_{\mathbb{F}_{q^n}}(H)^{\upmodels}\right)=\mathrm{dim}_{q^n}\left(\mathbb{F}_{q^{rn}}^{\ell}\right),
            \]
            therefore, since $H\in\mathbb{F}_{q^{rn}}^{(r\ell-\kappa)\times\ell}$ is a matrix of $\mathbb{F}_{q^{n}}$-rank $r\ell-\kappa$, 
            \[
            \mathrm{dim}_{q^n}(\mathcal{C})=\mathrm{dim}_{q^n}\left(\mathrm{Rowspan}_{\mathbb{F}_{q^n}}(H)^{\upmodels}\right)
            \] 
            and hence the equality holds.
        \end{proof}
        \noindent The proposition above leads to the next definition.
        \begin{definition}
        \label{def: Parity Check Matrix Fqn}
        Let $n,\kappa,r,\ell$ be positive integers such that $\kappa\leq r\ell$ and let $\mathcal{C}$ be a $q^n\minus[\ell,\kappa]_{q^{rn}/q}$ code. Then, a matrix $H\in\mathbb{F}_{q^{rn}}^{(r\ell-\kappa)\times\ell}$ of $\mathbb{F}_{q^{n}}$-rank $r\ell-\kappa$ is called an $\mathbb{F}_{q^n}$-parity-check matrix of $\mathcal{C}$ if
        \[
        \mathcal{C}=\left\{\underline{y}\in\mathbb{F}_{q^{rn}}^{\ell}\colon \mathrm{Tr}_{q^{rn}/q^n}\left(\underline{y}H^{\top}\right)=\underline{0}\right\}.
        \]
        For any $\underline{y}\in\mathbb{F}_{q^{rn}}^{\ell}$, we call $\underline{y}H^{\top}$, the $\mathbb{F}_{q^n}$-syndrome of $\underline{y}$ through $H$.
        \end{definition}
        \noindent Similarly to the classical notion of syndrome for linear codes, the $\mathbb{F}_{q^n}$-syndrome also provides a convenient way to determine whether a vector in the ambient space is a codeword. Indeed if $\underline{y}\in\mathbb{F}_{q^{rn}}^{\ell}$, $\mathrm{Tr}_{q^{rn}/q^n}\left(\underline{y}H^{\top}\right)=\underline{0}$ if and only if $\underline{y}\in\mathcal{C}$. As an immediate consequence we also get the following result.
        \begin{proposition}
            \label{prop: Tr(GH^T)=O}
            Let $n,\kappa,r,\ell$ be positive integers such that $\kappa\leq r\ell$ and let $\mathcal{C}$ be a $q^n\minus[\ell,\kappa]_{q^{rn}/q}$ code. Let $G\in\mathbb{F}_{q^{rn}}^{\kappa\times\ell}$ and $H\in\mathbb{F}_{q^{rn}}^{(r\ell-\kappa)\times\ell}$ be, respectively, an $\mathbb{F}_{q^n}$-generator matrix and $\mathbb{F}_{q^n}$-parity-check matrix of the code. Then 
            \[
            \mathrm{Tr}_{q^{rn}/q^n}\left(GH^{\top}\right)=0.
            \]
        \end{proposition}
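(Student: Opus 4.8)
The plan is to reduce the matrix identity to the codeword-level defining property of the $\mathbb{F}_{q^n}$-parity-check matrix, applied to the rows of $G$ one at a time. Since $\mathrm{Tr}_{q^{rn}/q^n}$ is applied entrywise, it suffices to show that each row of $GH^\top$ is annihilated by the trace.

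First I would observe that every row of $G$ is a codeword of $\mathcal{C}$. Indeed, by Definition \ref{def: Fqn gen matr} we have $\mathcal{C}=\{\underline{x}G\colon \underline{x}\in\mathbb{F}_{q^n}^{\kappa}\}$, and taking $\underline{x}$ to be the standard basis vectors $\underline{e}_1,\ldots,\underline{e}_{\kappa}$ of $\mathbb{F}_{q^n}^{\kappa}$ (whose entries lie in $\mathbb{F}_{q^n}$), it follows that the rows $\underline{g}_1,\ldots,\underline{g}_{\kappa}$ of $G$ all belong to $\mathcal{C}$.

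Next I would invoke Definition \ref{def: Parity Check Matrix Fqn}: for every $\underline{y}\in\mathcal{C}$ one has $\mathrm{Tr}_{q^{rn}/q^n}\!\left(\underline{y}H^{\top}\right)=\underline{0}$. Applying this to $\underline{y}=\underline{g}_i$ for each $i\in\{1,\ldots,\kappa\}$ gives $\mathrm{Tr}_{q^{rn}/q^n}\!\left(\underline{g}_iH^{\top}\right)=\underline{0}$. Since the $i$-th row of the product $GH^{\top}$ is precisely $\underline{g}_iH^{\top}$, and since $\mathrm{Tr}_{q^{rn}/q^n}$ is $\mathbb{F}_{q^n}$-linear and hence the $i$-th row of $\mathrm{Tr}_{q^{rn}/q^n}(GH^{\top})$ equals $\mathrm{Tr}_{q^{rn}/q^n}(\underline{g}_iH^{\top})$, every row of $\mathrm{Tr}_{q^{rn}/q^n}(GH^{\top})$ is zero, whence $\mathrm{Tr}_{q^{rn}/q^n}(GH^{\top})=0$.

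There is no genuine obstacle here: the statement is an immediate corollary of the definitions, and the only minor point requiring care is the bookkeeping that the entrywise application of $\mathrm{Tr}_{q^{rn}/q^n}$ commutes with reading off rows of $GH^{\top}$, which follows at once from $\mathbb{F}_{q^n}$-linearity of the trace.
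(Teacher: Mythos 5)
Your argument is correct and is exactly the one the paper has in mind: it states the result as an immediate consequence of Definitions \ref{def: Fqn gen matr} and \ref{def: Parity Check Matrix Fqn} without writing out a separate proof, and your row-by-row application of the parity-check condition to the rows of $G$ (which are codewords by taking standard basis vectors) is precisely that consequence spelled out.
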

        \begin{remark}
            Observe that Definitions \ref{def: Fqn dual code}, \ref{def: Fqn gen matr}, and \ref{def: Parity Check Matrix Fqn} generalize the notions of Delsarte duality, generator matrix, and parity-check matrix for a linear rank-metric code, respectively. Indeed, when $r=1$, $\mathcal{C}$ is an $[\ell,\kappa]_{q^n/q}$ code and the trace map coincides with the identity map on $\mathbb{F}_{q^n}$.
        \end{remark}
        Subsequently we will focus again on TZ-codes which is $\kappa=2k$, $r=2$ and $\ell=2n$. \medskip
        \\
        \noindent The construction of a TZ-code $\mathbb{F}_{q^n}$-parity-check matrix necessitates the introduction of what will henceforth be referred to as a \textit{trace almost dual basis}.
        \begin{theorem}
            \label{thm: existence of trace almost dual basis}
            Consider an ordered $\mathbb{F}_q$-basis $\underline{\lambda}=(\lambda_0,\lambda_1,\dots,\lambda_{2n-1})$ of $\mathbb{F}_{q^{2n}}$ and let $\gamma\in\mathbb{F}_{q^{2n}}$ be such that $\mathrm{N}_{q^{2n}/q}(\gamma)$ is a non-square element in $\mathbb{F}_q$. Let $\xi\in\mathbb{F}_{q^{2n}}\setminus\{0\}$ with
            \[
            \mathrm{Tr}_{q^{2n}/q^n}(\gamma\xi)=\gamma\xi+(\gamma\xi)^{q^n}=0.
            \]
            Then there exists a unique ordered $\mathbb{F}_{q}$-basis $\underline{\mu}=(\mu_0,\mu_1,\dots,\mu_{2n-1})$ of $\mathbb{F}_{q^{2n}}$ such that for $i,j\in\{0,1,\dots,2n-1\}$ holds
            \begin{equation}
                \label{eq: property almost dual basis}
                \left<\underline{\lambda}^{q^i},\underline{\mu}^{q^j}\right>=\displaystyle\sum_{s=0}^{2n-1}\lambda_s^{q^i}\mu_s^{q^j}=\begin{cases}
                                                0\quad\text{for }i\neq j\\
                                                \neq 0\quad\text{for }i=j
                                            \end{cases}
            \end{equation}
            and, in particular
            \begin{equation}
                \label{eq: condition kth power almost dual basis}
                \left<\underline{\lambda}^{q^k},\underline{\mu}^{q^k}\right>=\xi.
            \end{equation}
        \end{theorem}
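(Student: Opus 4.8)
The plan is to translate the two requirements into a single linear-algebraic statement about the Moore matrix of $\underline{\lambda}$ and then solve the resulting linear system. Set $M\coloneqq\mathrm{qvan}_{2n}(\underline{\lambda})\in\mathbb{F}_{q^{2n}}^{2n\times 2n}$; since $\underline{\lambda}$ is an $\mathbb{F}_q$-basis, Lemma~\ref{lemma: determinant of q-Vandermonde matrix} gives $\det(M)\neq 0$. For a candidate tuple $\underline{\mu}=(\mu_0,\dots,\mu_{2n-1})\in\mathbb{F}_{q^{2n}}^{2n}$ put $N\coloneqq\mathrm{qvan}_{2n}(\underline{\mu})$ and $\Theta\coloneqq M N^{\top}$, so that $\Theta_{ij}=\langle\underline{\lambda}^{q^i},\underline{\mu}^{q^j}\rangle$ for all $i,j\in\{0,\dots,2n-1\}$; condition \eqref{eq: property almost dual basis} says precisely that $\Theta$ is diagonal with all diagonal entries nonzero, while \eqref{eq: condition kth power almost dual basis} reads $\Theta_{kk}=\xi$.

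The key observation is the Frobenius shift $\Theta_{i+1,j+1}=\Theta_{ij}^{q}$, indices read modulo $2n$, which is immediate from $x^{q^{2n}}=x$ on $\mathbb{F}_{q^{2n}}$; iterating it yields $\Theta_{ij}=\bigl(\Theta_{(i-j)\bmod 2n,\,0}\bigr)^{q^j}$. Hence $\Theta$ is diagonal with nonzero diagonal if and only if its first column $M\underline{\mu}^{\top}=(\Theta_{0,0},\Theta_{1,0},\dots,\Theta_{2n-1,0})^{\top}$ equals $(c,0,\dots,0)^{\top}$ for some $c\in\mathbb{F}_{q^{2n}}\setminus\{0\}$, in which case $\Theta=\operatorname{diag}(c,c^q,\dots,c^{q^{2n-1}})$ and in particular $\Theta_{kk}=c^{q^k}$. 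Since $M$ is invertible, for each such $c$ there is exactly one $\underline{\mu}$ meeting \eqref{eq: property almost dual basis}, namely $\underline{\mu}^{\top}=c\,M^{-1}\mathbf{e}_0$, where $\mathbf{e}_0=(1,0,\dots,0)^{\top}$.

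Imposing \eqref{eq: condition kth power almost dual basis} forces $c^{q^k}=\xi$; since $x\mapsto x^{q^k}$ is a bijection of $\mathbb{F}_{q^{2n}}$, this pins down $c=\xi^{q^{2n-k}}$, which is nonzero because $\xi\neq 0$. This already gives uniqueness of $\underline{\mu}$. For existence it remains only to verify that the tuple defined by $\underline{\mu}^{\top}\coloneqq\xi^{q^{2n-k}}M^{-1}\mathbf{e}_0$ is genuinely an $\mathbb{F}_q$-basis of $\mathbb{F}_{q^{2n}}$; by construction it satisfies $M\underline{\mu}^{\top}=\xi^{q^{2n-k}}\mathbf{e}_0$ and hence, by the reduction above, both \eqref{eq: property almost dual basis} and \eqref{eq: condition kth power almost dual basis}. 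For the basis property I would identify $M^{-1}\mathbf{e}_0$ with the \emph{trace-dual basis} $\underline{\nu}$ of $\underline{\lambda}$, i.e. the one with $\mathrm{Tr}_{q^{2n}/q}(\lambda_s\nu_t)=\delta_{st}$: since $\mathrm{Tr}_{q^{2n}/q}(\lambda_s\nu_t)=\sum_{i=0}^{2n-1}\lambda_s^{q^i}\nu_t^{q^i}=\bigl(M^{\top}\mathrm{qvan}_{2n}(\underline{\nu})\bigr)_{st}$, one gets $\mathrm{qvan}_{2n}(\underline{\nu})=(M^{\top})^{-1}$, whose first column is $M^{-1}\mathbf{e}_0$, so $\underline{\mu}=\xi^{q^{2n-k}}\underline{\nu}$ is a nonzero scalar multiple of a basis, hence a basis. (Equivalently, from the first two paragraphs $\Theta=M\,\mathrm{qvan}_{2n}(\underline{\mu})^{\top}$ is invertible, so $\mathrm{qvan}_{2n}(\underline{\mu})$ is invertible and Lemma~\ref{lemma: determinant of q-Vandermonde matrix} again yields $\mathbb{F}_q$-linear independence of $\mu_0,\dots,\mu_{2n-1}$.)

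The only genuine obstacle is this last point, namely verifying that the solution of the linear system has $\mathbb{F}_q$-linearly independent entries; everything else is routine bookkeeping with the Moore matrix and the Frobenius shift. I would also note that neither the hypothesis that $\mathrm{N}_{q^{2n}/q}(\gamma)$ is a non-square nor the defining property $\mathrm{Tr}_{q^{2n}/q^n}(\gamma\xi)=0$ of $\xi$ is used here beyond the fact that $\xi\neq 0$; these enter only later, when the basis $\underline{\mu}$ is used to build an $\mathbb{F}_{q^n}$-parity-check matrix of the Trombetti--Zhou code.
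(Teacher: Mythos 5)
Your proof is correct and takes essentially the same route as the paper's: both set up the Moore-matrix linear system $M\underline{\mu}^{\top}=\xi^{q^{2n-k}}\mathbf{e}_0$, use the Frobenius shift $\Theta_{i+1,j+1}=\Theta_{ij}^{q}$ (indices mod $2n$) to upgrade the vanishing of one column to the full diagonal structure, and conclude that $\underline{\mu}$ is a basis from the invertibility of the Gram matrix $\Theta=M\,\mathrm{qvan}_{2n}(\underline{\mu})^{\top}$ — your parenthetical argument is word-for-word what the paper does. (One small slip in the trace-dual detour: the vector $\underline{\nu}$ is the first \emph{row} of $(M^{\top})^{-1}=(M^{-1})^{\top}$, i.e.\ $(M^{-1}\mathbf{e}_0)^{\top}$, not its first column; the resulting identity $\underline{\mu}=\xi^{q^{2n-k}}\underline{\nu}$ is nevertheless correct.) Your remark that only $\xi\neq 0$ is used here, and not the non-square-norm hypothesis on $\gamma$, is also accurate.
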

        \begin{proof}
            Remark \ref{remark: existence of xi} implies the existence of a $\xi\in\mathbb{F}_{q^{2n}}\setminus\{0\}$ such that $\mathrm{Tr}_{q^{2n}/q^n}(\gamma\xi)=\gamma\xi+(\gamma\xi)^{q^n}=0$. We need to find a vector that satisfies the following linear system arising from conditions (\ref{eq: property almost dual basis}) and (\ref{eq: condition kth power almost dual basis}): 
            \begin{equation}
                \label{eq: syst to get trace almost dual basis}
                \begin{pNiceMatrix}
                    \lambda_0 & \lambda_1 & \dots & \lambda_{2n-1} \\
                    \lambda_0^q & \lambda_1^q & \dots & \lambda_{2n-1}^q \\
                    \vdots & \vdots & \ddots & \vdots \\
                    \lambda_0^{q^{2n-1}} & \lambda_1^{q^{2n-1}} & \dots & \lambda_{2n-1}^{q^{2n-1}}
                \end{pNiceMatrix}\cdot\begin{pNiceMatrix}
                                        x_0 \\
                                        x_1 \\
                                        \vdots \\
                                        x_{2n-1}
                                       \end{pNiceMatrix}=\begin{pNiceMatrix}
                                       \xi^{q^{2n-k}}\\
                                        0 \\
                                        \vdots \\
                                        0
                                       \end{pNiceMatrix}.
            \end{equation}
            Since (\ref{eq: syst to get trace almost dual basis}) is a linear system of $2n$ equations in $2n$ unknowns and its coefficient matrix is non-singular by being a $q$-Vandermonde matrix, there exists a unique solution $\underline{\mu}=(\mu_0,\mu_1,\dots,\mu_{2n-1})$ to (\ref{eq: syst to get trace almost dual basis}). Hence
            $\left<\underline{\lambda},\underline{\mu}\right>=\xi^{q^{2n-k}}$ and thus
            \[
            \left<\underline{\lambda}^{q^k},\underline{\mu}^{q^k}\right>=\left<\underline{\lambda},\underline{\mu}\right>^{q^k}=\xi^{q^{2n}}=\xi.
            \]
            Moreover, as
            \[
            \left<\underline{\lambda}^{q^h},\underline{\mu}\right>=0
            \]
            for all $h\in\{1,\dots,2n-1\}$, then for all $i,j\in\{0,1,\dots,2n-1\}$ with $i\neq j$ 
            \[
            \left<\underline{\lambda}^{q^{i}},\underline{\mu}^{q^{j}}\right>=\left<\underline{\lambda}^{q^{i-j}},\underline{\mu}\right>=0
            \]
            as $i-j\in\left(\mathbb{Z}/2n\mathbb{Z}\right)\setminus\{0\}$. Finally, we observe that
            \[
            \begin{pNiceMatrix}
                \underline{\lambda} \\
                 \underline{\lambda}^q \\
                \vdots \\
                 \underline{\lambda}^{q^{2n-1}}
            \end{pNiceMatrix}\cdot\begin{pNiceMatrix}
                 \underline{\mu} \\
                \underline{\mu}^q \\
                \vdots \\
                \underline{\mu}^{q^{2n-1}}
            \end{pNiceMatrix}^{\top}=\begin{pNiceMatrix}
                \xi^{q^{2n-k}} & 0 & \dots & 0 \\
                0 & \xi^{q^{2n-k+1}} & \dots & 0 \\
                \vdots & \vdots & \ddots & \vdots \\
                0 & 0 & \dots & \xi^{q^{2n-k-1}}
            \end{pNiceMatrix},
            \]
            which is an invertible matrix; therefore, since 
            \[
            \begin{pNiceMatrix}
                 \underline{\mu} \\
                \underline{\mu}^q \\
                \vdots \\
                \underline{\mu}^{q^{2n-1}}
            \end{pNiceMatrix},
            \]
            is a $q$-Vandermonde invertible matrix, $(\mu_0,\mu_1,\dots,\mu_{2n-1})$ is a basis of $\mathbb{F}_{q^{2n}}$ over $\mathbb{F}_{q}$.
        \end{proof}
        The above result enables us to provide the following definition.
        \begin{definition}
            \label{def: almost dual basis}
            Let $\underline{\lambda}=(\lambda_0,\lambda_1,\dots,\lambda_{2n-1})$ be an ordered $\mathbb{F}_{q}$-basis of $\mathbb{F}_{q^{2n}}$. The unique ordered $\mathbb{F}_{q}$-basis $\underline{\mu}=(\mu_0,\mu_1,\dots,\mu_{2n-1})$ of $\mathbb{F}_{q^{2n}}$ arising from Theorem \ref{thm: existence of trace almost dual basis}, is called \textit{trace almost dual basis} of $\underline{\lambda}$.
        \end{definition}
        Thanks to the new tools just outlined, we can now state and prove the following lemma, and thereby furnish an $\mathbb{F}_{q^n}$-generator and an $\mathbb{F}_{q^n}$-parity-check matrix for TZ-codes.
        \begin{lemma}
            \label{lemma: FqnGenerator and Fqnparitycheck of TZ codes}
            Let $\mathcal{D}_{k}(\gamma)$ be a Trombetti-Zhou code, let $\underline{\lambda}=(\lambda_0,\lambda_1,\dots,\lambda_{2n-1})$ be an ordered $\mathbb{F}_q$-basis  of $\mathbb{F}_{q^{2n}}$ and $\underline{\mu}=(\mu_0,\mu_1,\ldots,\mu_{2n-1})$ be its trace almost dual basis. An $\mathbb{F}_{q^n}$-generator matrix and an $\mathbb{F}_{q^n}$-parity-check matrix of $\mathcal{TZ}_k(\gamma)[\underline{\lambda}]$ are, respectively,
            \begin{equation}
                \label{eq: FqnGenerator and Fqn parity check matrix of TZcodes}
                G=\begin{pNiceMatrix}
                \underline{\lambda}\\
                \underline{\lambda}^q\\
                \gamma\underline{\lambda}^q\\
                \vdots\\
                \underline{\lambda}^{q^{k-1}}\\
                \gamma\underline{\lambda}^{q^{k-1}}\\
                \gamma\underline{\lambda}^{q^{k}}
            \end{pNiceMatrix}\in\mathbb{F}_{q^{2n}}^{2k\times 2n}\quad\text{and}\quad H=\begin{pNiceMatrix}
                \gamma^{q^{2n-k}}\underline{\mu}\\
                \underline{\mu}^{q^{k+1}}\\
                \gamma\underline{\mu}^{q^{k+1}}\\
                \vdots\\
                \underline{\mu}^{q^{2n-1}}\\
                \gamma\underline{\mu}^{q^{2n-1}}\\
                \underline{\mu}^{q^k}
            \end{pNiceMatrix}\in\mathbb{F}_{q^{2n}}^{(4n-2k)\times 2n}.
            \end{equation}
        Therefore
        \[
        \mathcal{TZ}_k(\gamma)[\underline{\lambda}]=\left\{\underline{\tilde{f}}G\in\mathbb{F}_{q^{2n}}^{2n}\colon\underline{\tilde{f}}=(a,f_{1,1},f_{1,2},\dots,f_{k-1,1},f_{k-1,2},b)\in\mathbb{F}_{q^n}^{2k}\right\}.
        \]
        \end{lemma}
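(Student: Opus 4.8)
The plan is to verify the two claims in turn. \textbf{That $G$ is an $\mathbb{F}_{q^n}$-generator matrix} is essentially a restatement of what precedes. The rows of $G$ are the images under the $\mathbb{F}_{q^n}$-linear isomorphism $ev_{\underline{\lambda}}$ of the $\mathbb{F}_{q^n}$-basis $x,x^q,\gamma x^q,\dots,x^{q^{k-1}},\gamma x^{q^{k-1}},\gamma x^{q^k}$ of $\mathcal{D}_k(\gamma)$, hence they form an $\mathbb{F}_{q^n}$-basis of $\mathcal{TZ}_k(\gamma)[\underline{\lambda}]$; in particular $G$ has $\mathbb{F}_{q^n}$-rank $2k$ and $\mathcal{TZ}_k(\gamma)[\underline{\lambda}]=\{\underline{x}G\colon\underline{x}\in\mathbb{F}_{q^n}^{2k}\}$, which is Definition \ref{def: Fqn gen matr}. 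Naming the $2k$ coordinates of $\underline{x}$ as $(a,f_{1,1},f_{1,2},\dots,f_{k-1,1},f_{k-1,2},b)$ and observing that $f_i:=f_{i,1}+\gamma f_{i,2}$ runs over all of $\mathbb{F}_{q^{2n}}$ yields the last displayed identity.

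\textbf{That $H$ is an $\mathbb{F}_{q^n}$-parity-check matrix} is the substantive part. First I would check that $H$ has $\mathbb{F}_{q^n}$-rank $4n-2k$: a vanishing $\mathbb{F}_{q^n}$-linear combination of the rows of $H$ translates into a linearized polynomial of $\sigma$-degree $<2n$ that vanishes on the $\mathbb{F}_q$-basis $\mu_0,\dots,\mu_{2n-1}$ of $\mathbb{F}_{q^{2n}}$, hence on all of $\mathbb{F}_{q^{2n}}$, hence is the zero polynomial; since $\gamma\notin\mathbb{F}_{q^n}$, so that $1,\gamma$ are $\mathbb{F}_{q^n}$-independent, every coefficient must vanish and the combination was trivial. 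Next I would show $\mathrm{Tr}_{q^{2n}/q^n}(GH^{\top})=0$ by a direct computation of the entries of $GH^{\top}$. Writing each row of $G$ (resp. $H$) as a scalar multiple of $\underline{\lambda}^{q^i}$ (resp. $\underline{\mu}^{q^j}$), property (\ref{eq: property almost dual basis}) shows that the $\langle\cdot,\cdot\rangle$-pairing of two such rows vanishes unless $i\equiv j\pmod{2n}$. The Frobenius indices occurring in $G$ are $0,1,\dots,k$ and those occurring in $H$ are $0,k,k+1,\dots,2n-1$ (with no wraparound, using $1\le k\le 2n-1$), so the only shared indices are $0$ and $k$, each occurring exactly once on each side; hence the only possibly nonzero entries of $GH^{\top}$ are the pairing of $\underline{\lambda}$ with $\gamma^{q^{2n-k}}\underline{\mu}$ and the pairing of $\gamma\underline{\lambda}^{q^k}$ with $\underline{\mu}^{q^k}$. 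Using $\langle\underline{\lambda},\underline{\mu}\rangle=\xi^{q^{2n-k}}$ (from the proof of Theorem \ref{thm: existence of trace almost dual basis}) and (\ref{eq: condition kth power almost dual basis}), these two entries equal $(\gamma\xi)^{q^{2n-k}}$ and $\gamma\xi$ respectively; since $\mathrm{Tr}_{q^{2n}/q^n}$ commutes with the $q$-power Frobenius and $\mathrm{Tr}_{q^{2n}/q^n}(\gamma\xi)=0$ by the choice of $\xi$, both entries have trace $0$.

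To conclude, $\mathrm{Tr}_{q^{2n}/q^n}(GH^{\top})=0$ means that every row of $H$ is orthogonal to every row of $G$ under the bilinear form $\mathrm{Tr}_{q^{2n}/q^n}(\langle\cdot,\cdot\rangle)$, hence $\mathrm{Rowspan}_{\mathbb{F}_{q^n}}(H)\subseteq\mathcal{TZ}_k(\gamma)[\underline{\lambda}]^{\upmodels}$; the left-hand side has $\mathbb{F}_{q^n}$-dimension $4n-2k$ and by Definition \ref{def: Fqn dual code} so does the right-hand side, so they coincide, i.e., $H$ is an $\mathbb{F}_{q^n}$-generator matrix of $\mathcal{TZ}_k(\gamma)[\underline{\lambda}]^{\upmodels}$. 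The Proposition preceding Definition \ref{def: Parity Check Matrix Fqn} then gives that $H$ is an $\mathbb{F}_{q^n}$-parity-check matrix of $\mathcal{TZ}_k(\gamma)[\underline{\lambda}]$. I expect the main obstacle to be purely the bookkeeping in the $GH^{\top}$ computation: tracking the $q^{2n-k}$ twist carried both by $\gamma$ in the top row of $H$ and by $\xi$ in $\langle\underline{\lambda},\underline{\mu}\rangle$, and making sure no index coincidence beyond $\{0,k\}$ is missed — which is exactly where the hypothesis $1\le k\le 2n-1$ (equivalently $2\le d\le 2n$) is used.
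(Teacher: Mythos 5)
Your proof is correct and follows essentially the same route as the paper's: first transport the $\mathbb{F}_{q^n}$-basis of $\mathcal{D}_k(\gamma)$ through $ev_{\underline{\lambda}}$ to get $G$; then verify that $H$ has full $\mathbb{F}_{q^n}$-rank and that $\mathrm{Tr}_{q^{2n}/q^n}(GH^{\top})=0$, exploiting the defining properties (\ref{eq: property almost dual basis})--(\ref{eq: condition kth power almost dual basis}) of the trace almost dual basis. The only cosmetic differences are that you phrase the rank argument via a linearized polynomial of degree $<2n$ vanishing on a basis (the paper instead invokes $\mathbb{F}_{q^{2n}}$-linear independence of $\underline{\mu},\underline{\mu}^q,\dots,\underline{\mu}^{q^{2n-1}}$ directly, writing $c_i=a_i+\gamma b_i$), and that you compute the matrix $GH^{\top}$ itself and then pass to the dual-dimension count, whereas the paper computes $\mathrm{Tr}_{q^{2n}/q^n}(\underline{\tilde f}GH^{\top})$ and leaves the matching-dimension step implicit in its appeal to Definition \ref{def: Parity Check Matrix Fqn}. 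If anything, your version is slightly more explicit in the final step, since showing $\mathrm{Tr}_{q^{2n}/q^n}(\underline{c}H^{\top})=\underline 0$ for codewords alone gives only a containment, and the equality still needs the $4n-2k$ dimension argument that you spell out.
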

        \begin{proof}
            Since 
            \[
                ev_{\underline{\lambda}}\colon \mathcal{D}_{k}(\gamma)\to\mathcal{TZ}_k(\gamma)[\underline{\lambda}]
            \]
            is an $\mathbb{F}_{q^n}$-linear isomorphism, it transforms the $\mathbb{F}_{q^n}$-basis $x,x^q,\gamma x^q,\dots,x^{q^{k-1}},\gamma x^{q^{k-1}},\gamma x^{q^k}$ of $\mathcal{D}_{k}(\gamma)$ into the $\mathbb{F}_{q^n}$-basis $\underline{\lambda},\underline{\lambda}^q,\gamma \underline{\lambda}^q,\dots,\underline{\lambda}^{q^{k-1}},\gamma \underline{\lambda}^{q^{k-1}},\gamma \underline{\lambda}^{q^k}$ of $\mathcal{TZ}_k(\gamma)[\underline{\lambda}]$. Consequently $G\in\mathbb{F}_{q^{2n}}^{2k\times 2n}$ is a matrix of $\mathbb{F}_{q^n}$-rank $2k$ such that 
            \[
            \mathcal{TZ}_k(\gamma)[\underline{\lambda}]=\left\{\underline{\tilde{f}}G\in\mathbb{F}_{q^{2n}}^{2n}\colon\underline{\tilde{f}}=(a,f_{1,1},f_{1,2},\dots,f_{k-1,1},f_{k-1,2},b)\in\mathbb{F}_{q^n}^{2k}\right\}.
            \]
            \par In order to proceed with the second part of the proof, let $\xi\in\mathbb{F}_{q^{2n}}\setminus\{0\}$ from Theorem \ref{thm: existence of trace almost dual basis} be the one associated to the trace almost dual basis $\underline{\mu}$. Note that, as $\underline{\mu}$ is an $\mathbb{F}_q$-basis of $\mathbb{F}_{q^{2n}}$, $\underline{\mu},\underline{\mu}^q,\ldots,\underline{\mu}^{q^{2n-1}}$ are $\mathbb{F}_{q^{2n}}$-linearly independent vectors in $\mathbb{F}_{q^{2n}}^{2n}$, hence $H$ has $\mathbb{F}_{q^n}$-rank $4n-2k$. Indeed, if we suppose by contradiction that there exist $b_0,a_1,b_1,\ldots,a_{2n-k-1},b_{2n-k-1},a_{4n-2k-1}\in\mathbb{F}_{q^n}$ not all zero such that 
            \begin{align*}
                \underline{0}&=b_0\gamma^{q^{2n-k}}\underline{\mu}+a_1\underline{\mu}^{q^{k+1}}+b_1\gamma\underline{\mu}^{q^{k+1}}+\ldots+a_{2n-k-1}\underline{\mu}^{q^{2n-1}}+b_{2n-k-1}\gamma\underline{\mu}^{q^{2n-1}}+a_{4n-2k-1}\underline{\mu}^{q^k}\\
                &=b_0\gamma^{q^{2n-k}}\underline{\mu}+c_1\underline{\mu}^{q^{k+1}}+\ldots+c_{2n-k-1}\underline{\mu}^{q^{2n-1}}+a_{4n-2k-1}\underline{\mu}^{q^k},
            \end{align*}
            where $c_i=a_i+\gamma b_i\in\mathbb{F}_{q^{2n}}$ for all $i\in\{1,\ldots,2n-k-1\}$, then we immediately get a contradiction as $\underline{\mu},\underline{\mu}^q,\ldots,\underline{\mu}^{q^{2n-1}}$ are $\mathbb{F}_{q^{2n}}$-linearly independent vectors in $\mathbb{F}_{q^{2n}}^{2n}$. Due to Definition \ref{def: Parity Check Matrix Fqn}, the only thing left to prove is that for any $\underline{\tilde{f}}\in\mathbb{F}_{q^n}^{2k}$, the codeword $\underline{c}=\underline{\tilde{f}}G$ is such that $\mathrm{Tr}_{q^{2n}/q^n}(\underline{c}H^{\top})=\underline{0}$. Trivially,
            
        \[
                \mathrm{Tr}_{q^{2n}/q^n}(\underline{\tilde{f}}GH^{\top})=\left(a\mathrm{Tr}_{q^{2n}/q^n}\left((\gamma\xi)^{q^{2n-k}}\right),0,\dots,0,b\mathrm{Tr}_{q^{2n}/q^n}(\gamma\xi)\right)=\underline{0}
            \]
        due to the choice of $\xi$, i.e., $\mathrm{Tr}_{q^{2n}/q^n}(\gamma\xi)=\mathrm{Tr}_{q^{2n}/q^n}\left((\gamma\xi)^{q^{2n-k}}\right)=0$. 
        \end{proof}
        \begin{remark}
        Note that
        \begin{equation}
            GH^{\top}= 
            \begin{pNiceMatrix}
                (\gamma\xi)^{q^{2n-k}} & 0 & \dots & 0\\
                0 & 0 & \dots & 0\\
                \vdots & \vdots & \ddots & \vdots \\
                0 & 0 & \dots & \gamma\xi\\
            \end{pNiceMatrix},
        \end{equation}
        is not far from being the null matrix and, in accordance with Proposition \ref{prop: Tr(GH^T)=O}, $\mathrm{Tr}_{q^{2n}/q^n}(GH^{\top})=0$.
        \end{remark}
        As an immediate consequence, one can observe that an $\mathbb{F}_{q^n}$-parity-check matrix $H\in\mathbb{F}_{q^{2n}}^{(4n-2k)\times 2n}$ of a $\mathcal{TZ}_k(\gamma)[\underline{\lambda}]$ code is an $\mathbb{F}_{q^n}$-generator matrix of its dual code $\mathcal{TZ}_k(\gamma)[\underline{\lambda}]^{\upmodels}$ with respect to the inner product given by the trace function, Definition \ref{def: Fqn dual code}, and that, such a code, is in turn $ev_{\underline{\mu}}\left(\mathcal{D}_k(\gamma)^{\perp}\right)$, where $\mathcal{D}_k(\gamma)^{\perp}$ is the Delsarte dual code in Definition \ref{def: Delsarte dual code in polynomial framework}. Figure \ref{tikz: diagram} provides a visual representation of the latter consideration.
        \begin{figure}[H]
            \centering
            \begin{tikzpicture}[node distance=5cm, auto]
                \node (A) at (0, 1) {\textbf{\normalsize $\mathcal{D}_k(\gamma)$}}; 
                \node (B) at (6, 1) {\textbf{\normalsize $\mathcal{TZ}_k(\gamma)[\underline{\lambda}]$}};
                \node (C) at (0, -1) {\textbf{\normalsize $\mathcal{D}_k(\gamma)^{\perp}$}}; 
                \node (D) at (6, -1) {\textbf{\normalsize $\mathcal{TZ}_k(\gamma)[\underline{\lambda}]^{\upmodels}$}};

                \draw[->] ([yshift=1mm]A.east) -- ([yshift=1mm]B.west) node[midway, above] {$ev_{\underline{\lambda}}$};
                \draw[->] ([yshift=-1mm]B.west) -- ([yshift=-1mm]A.east) node[midway, below] {$ev_{\underline{\lambda}}^{-1}$};

                \draw[->] ([yshift=1mm]C.east) -- ([yshift=1mm]D.west) node[midway, above] {$ev_{\underline{\mu}}$};
                \draw[->] ([yshift=-1mm]D.west) -- ([yshift=-1mm]C.east) node[midway, below] {$ev_{\underline{\mu}}^{-1}$};

                \draw[->] ([xshift=-1mm]A.south) -- ([xshift=-1mm]C.north) node[midway, left] {$\perp$};
                \draw[->] ([xshift=1mm]C.north) -- ([xshift=1mm]A.south) node[midway, right] {$\perp$};

                \draw[->] ([xshift=-1mm]B.south) -- ([xshift=-1mm]D.north) node[midway, left] {$\upmodels$};
                \draw[->] ([xshift=1mm]D.north) -- ([xshift=1mm]B.south) node[midway, right] {$\upmodels$};
            \end{tikzpicture}
            \caption{Relationships among the rank-metric codes $\mathcal{D}_k(\gamma)$,$\mathcal{D}_k(\gamma)^{\perp}$, $\mathcal{TZ}_k(\gamma)[\underline{\lambda}]$ and $\mathcal{TZ}_k(\gamma)[\underline{\lambda}]^{\upmodels}$.} 
            \label{tikz: diagram}
        \end{figure}
        \noindent Despite the fact that a $\mathcal{TZ}_k(\gamma)[\underline{\lambda}]$ code is not $\mathbb{F}_{q^{2n}}$-linear, henceforth the terms $\mathbb{F}_{q^n}$\textit{-generator matrix}, $\mathbb{F}_{q^n}$\textit{-parity-check matrix} and $\mathbb{F}_{q^n}$\textit{-syndrome} shall be occasionally referred to as generator matrix, parity-check matrix and syndrome, in accordance to the role the former will play in the new syndrome-based decoding of TZ-codes. In addition, we will assume that an $\mathbb{F}_q$-basis $\underline{\lambda}$ is fixed and use the following notation for the remainder of this work $\mathcal{TZ}_k(\gamma)\coloneqq\mathcal{TZ}_k(\gamma)[\underline{\lambda}]$.
        \medskip
        \\
        \par To conclude this section, we present a small but illustrative example of a generator matrix and a parity-check matrix for a TZ code.
        \begin{example}
            Let $q=5$, $n=2$ and $k=2$, and let us consider the field $\mathbb{F}_{625}=\mathbb{F}_{5}(\alpha)$ where $\alpha^4+2=0$. Then $\gamma=\alpha^3+\alpha^2+2\alpha+3$ has norm $\mathrm{N}_{625/5}(\gamma)=2$, that is a non-square element in $\mathbb{F}_5$. Consider the $\mathbb{F}_5$-basis $\underline{\lambda}=(1,\alpha,\alpha^2,\alpha^3)$ of $\mathbb{F}_{625}$ and the TZ-code 
            \[
            \mathcal{TZ}_2(\gamma)=\{(f(1),f(\alpha),f(\alpha^2),f(\alpha^3))\colon f\in\mathcal{D}_2(\gamma)\}\subseteq\mathbb{F}_{625}^4,
            \]
            where 
            \[
            \mathcal{D}_2(\gamma)=\{ax+f_1x^5+\gamma bx^{25}\colon f_1\in\mathbb{F}_{625},~ a,b\in\mathbb{F}_{25}\}.
            \]
            Remark \ref{remark: existence of xi}, implies the existence of $\eta=4\alpha^3+\alpha$ such that $\mathrm{ker}\left(\mathrm{Tr}_{625/25}\right)=\left<4\alpha^3+\alpha\right>_{\mathbb{F}_{25}}$, while $\xi$ in Theorem \ref{thm: existence of trace almost dual basis} is equal to
            \[
            \xi=\frac{\eta}{\gamma}=\frac{4\alpha^3+\alpha}{\alpha^3+\alpha^2+2\alpha+3}=4\alpha^2+2\alpha+4.
            \]
            Consequently the trace almost dual basis $\underline{\mu}=(\mu_0,\mu_1,\mu_2,\mu_3)$ of $\underline{\lambda}$ is the solution of the following linear system
            \[
            \begin{pmatrix}
                1 & \alpha & \alpha^2 & \alpha^3\\
                1 & 3\alpha & 4\alpha^2 & 2\alpha^3\\
                1 & 4\alpha & \alpha^2 & 4\alpha^3\\
                1 & 2\alpha & 4\alpha^2 & 3\alpha^3
            \end{pmatrix}\cdot\begin{pmatrix}
               \mu_0\\
               \mu_1\\
               \mu_2\\
               \mu_3
            \end{pmatrix}=\begin{pmatrix}
               4\alpha^2 + 3\alpha + 4\\
               0\\
               0\\
               0
            \end{pmatrix},
            \]
            i.e., $\underline{\mu}=(\alpha^2 + 2\alpha + 1, 2\alpha^3 + \alpha + 2, 4\alpha^3 + 2\alpha^2 + 1, 2\alpha^3 + 4\alpha^2 + 2\alpha)$. Therefore the generator matrix and parity-check matrix of $\mathcal{TZ}_2(\gamma)$ are
            \[
            G=\begin{pmatrix}
                1 & \alpha & \alpha^2 & \alpha^3\\
                1 & 3\alpha & 4\alpha^2 & 2\alpha^3\\
                \alpha^3 + \alpha^2 + 2\alpha + 3 & 3\alpha^3 + \alpha^2 + 4\alpha + 4 & 3\alpha^3 + 2\alpha^2 + 2\alpha + 2 & \alpha^3 + \alpha^2 + \alpha + 2\\
                \alpha^3 + \alpha^2 + 2\alpha + 3 & 4\alpha^3 + 3\alpha^2 + 2\alpha + 2  & 2\alpha^3 + 3\alpha^2 + 3\alpha + 3 & 2\alpha^3 + 2\alpha^2 + 2\alpha + 4
               \end{pmatrix}\in\mathbb{F}_{625}^{4\times4}
            \]
            and
            \[
               H=\begin{pmatrix}
                4\alpha^3 + \alpha & 4\alpha^2 + 1 & 2\alpha^3 + 4\alpha & 2\alpha^2 + 4\\
                4\alpha^2 + 4\alpha + 1 & \alpha^3 + 2\alpha + 2 & 2\alpha^3 + 3\alpha^2 + 1 & \alpha^3 + \alpha^2 + 4\alpha\\
                3\alpha^3 + \alpha^2 + \alpha + 2 & 2\alpha^3 + 4\alpha^2 + 3\alpha + 3 & 3\alpha^3 + \alpha^2 + 2\alpha + 4 & 4\alpha^3 + 4\alpha^2 + 3\alpha + 1\\
                \alpha^2 + 3\alpha + 1 & 3\alpha^3 + 4\alpha + 2  & \alpha^3 + 2\alpha^2 + 1 & 3\alpha^3 + 4\alpha^2 + 3\alpha
               \end{pmatrix}\in\mathbb{F}_{625}^{4\times4}.
            \]
            To conclude, we note that
            \[
            GH^{\top}=\begin{pmatrix}
                \alpha^3 + 4\alpha & 0 & 0 & 0\\
                    0 & 0 & 0 & 0\\
                    0 & 0 & 0 & 0\\
                    0 & 0 & 0 & 4\alpha^3 + \alpha
               \end{pmatrix}
            \]
            is a matrix whose entries have null trace over $\mathbb{F}_{25}$.
        \end{example}
        \section{Decoding Trombetti-Zhou codes: a new syndrome-based decoding approach}
        \label{sec: Decoding of Trombetti-Zhou codes: a new syndrome-based decoding approach}
        In the following, we adapt the well-known syndrome-based decoding for Gabidulin codes to TZ-codes. Our approach follows \cite[Subsection 3.2.1]{PhdThesis_WZeh}, however, the original algorithm was firstly proposed in \cite{gabidulin_TheoryOfCodesWithMaximumRankDistance}.\\
        
        \noindent Before proceeding, it is worth spending some words on the encoding of TZ-codes. Since $\mathcal{TZ}_k(\gamma)$ is an $\mathbb{F}_{q^n}$-linear space of dimension $2k$ spanned by the $\mathbb{F}_{q^n}$-basis $\underline{\lambda},\underline{\lambda}^q,\gamma \underline{\lambda}^q,\dots,\underline{\lambda}^{q^{k-1}},\gamma \underline{\lambda}^{q^{k-1}},\gamma \underline{\lambda}^{q^k}$, information words in $\mathbb{F}_{q^n}^{2k}$ can be encoded into codewords in $\mathcal{TZ}_k(\gamma)\subseteq\mathbb{F}_{q^{2n}}^{2n}$ using an $\mathbb{F}_{q^n}$-generator matrix. In other words, if $G$ is an $\mathbb{F}_{q^n}$-generator matrix of $\mathcal{TZ}_k(\gamma)$, the encoding of a message $\underline{\tilde{f}}=(a,f_{1,1},f_{1,2},\dots,f_{k-1,1},f_{k-1,2},b)\in\mathbb{F}_{q^n}^{2k}$ is given by the following map:
            \[
                \underline{\tilde{f}}=(a,f_{1,1},f_{1,2},\dots,f_{k-1,1},f_{k-1,2},b)\in\mathbb{F}_{q^n}^{2k}\mapsto\underline{c}\coloneqq\underline{\tilde{f}}G\in\mathcal{TZ}_k(\gamma)\subseteq\mathbb{F}_{q^{2n}}^{2n}.
            \]
        Let $\mathcal{D}_{k}(\gamma)$ be a Trombetti-Zhou code and let $G\in\mathbb{F}_{q^{2n}}^{2k\times 2n}$ be a generator matrix of $\mathcal{TZ}_k(\gamma)$ as in Lemma \ref{lemma: FqnGenerator and Fqnparitycheck of TZ codes}. Further, let $\underline{r}=\underline{\tilde{f}}G+\underline{e}\in\mathbb{F}_{q^{2n}}^{2n}$ be the received word, where $\underline{\tilde{f}}\in\mathbb{F}_{q^n}^{2k}$ is the sent message, $\underline{c}=\underline{\tilde{f}}G$ is the encoded message and $\underline{e}$ is the error vector. In this work we are interested in a bounded minimum distance decoding algorithm which guarantees to find either a unique codeword within a radius no greater than $\left\lfloor\frac{d-1}{2}\right\rfloor$ or the empty set. In the latter case, a decoding failure is declared. We refer to \cite[Subsection 2.1.3]{PhdThesis_WZeh} for basics of block codes and decoding principles.\par
        To remain within the hypothesis of unique decoding, our theoretical analysis for designing a decoding algorithm for Trombetti-Zhou codes assumes that the rank weight $t$ of the error vector $\underline{e}$ is 
        \[
        t\leq\left\lfloor\frac{d-1}{2}\right\rfloor=\left\lfloor\frac{2n-k}{2}\right\rfloor
        \]
        and that, when the rank $t$ achieves the unique error correcting radius (that is when $k$ is even and $t=n-\frac{k}{2}$), the entries of the error vector are constrained to the subfield $\mathbb{F}_{q^n}$ of $\mathbb{F}_{q^{2n}}$, i.e., $\underline{e}\in\mathbb{F}_{q^n}^{2n}\subseteq\mathbb{F}_{q^{2n}}^{2n}$. The latter assumption is strictly connected to the $\mathbb{F}_{q^n}$-linearity of TZ-codes and it will allow us to exploit the trace map to reconstruct the error even when the unique decoding radius is met. To sum up, our goal is to find the unique codeword $\underline{c}$ such that $d(\underline{c},\underline{r})\leq \left\lfloor\frac{2n-k}{2}\right\rfloor$.
        \medskip
        \\
        The initial step in the decoding of TZ-codes is the decomposition of the error, which is based on the following rank decomposition theorem.
        \begin{theorem}{\textnormal{\textbf{(Rank decomposition \cite{PhdThesis_WZeh}, Lemma 3.8).}}}
            \label{thm: Rank decomposition}
            For any matrix $X\in\mathbb{F}_q^{m\times h}$ of rank $t$ there exist full rank matrices $Y\in\mathbb{F}_q^{m\times t}$ and $Z\in\mathbb{F}_q^{t\times h}$ such that $X=YZ$. Moreover $\mathrm{Colspan}_{\mathbb{F}_q}(X)=\mathrm{Colspan}_{\mathbb{F}_q}(Y)$ and $\mathrm{Rowspan}_{\mathbb{F}_q}(X)=\mathrm{Rowspan}_{\mathbb{F}_q}(Z)$.
        \end{theorem}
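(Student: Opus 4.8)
The plan is to produce the factorization directly from a basis of the column space and then obtain the rank and span conditions by a dimension count. First I would set $t=\mathrm{rank}(X)$ and choose vectors $y_1,\dots,y_t\in\mathbb{F}_q^m$ forming an $\mathbb{F}_q$-basis of $\mathrm{Colspan}_{\mathbb{F}_q}(X)$; let $Y\in\mathbb{F}_q^{m\times t}$ be the matrix whose columns are $y_1,\dots,y_t$. By construction $Y$ has full column rank $t$ and $\mathrm{Colspan}_{\mathbb{F}_q}(Y)=\mathrm{Colspan}_{\mathbb{F}_q}(X)$. Since each column of $X$ lies in $\mathrm{Colspan}_{\mathbb{F}_q}(Y)$ and the columns of $Y$ are $\mathbb{F}_q$-linearly independent, there is a unique matrix $Z\in\mathbb{F}_q^{t\times h}$ whose $j$-th column records the coordinates of the $j$-th column of $X$ in the basis $y_1,\dots,y_t$; equivalently $X=YZ$.

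Next I would check that $Z$ has full row rank. From $X=YZ$ and submultiplicativity of rank, $t=\mathrm{rank}(X)\le\min\{\mathrm{rank}(Y),\mathrm{rank}(Z)\}\le\mathrm{rank}(Z)$; since $Z$ has only $t$ rows, $\mathrm{rank}(Z)=t$, so $Z$ is full rank. The column-span claim is already settled by the choice of $Y$. For the row-span claim, every row of $X=YZ$ is an $\mathbb{F}_q$-linear combination of the rows of $Z$, hence $\mathrm{Rowspan}_{\mathbb{F}_q}(X)\subseteq\mathrm{Rowspan}_{\mathbb{F}_q}(Z)$; both spaces have $\mathbb{F}_q$-dimension $t$ (the former equals $\mathrm{rank}(X)$, the latter $\mathrm{rank}(Z)$), so they coincide.

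There is no genuine obstacle here: the statement is the classical full-rank factorization of a matrix, and the argument goes through verbatim over an arbitrary field, in particular over $\mathbb{F}_q$. The only points deserving a sentence of care are the existence and uniqueness of $Z$ (which follows because the columns of $Y$ are linearly independent and span a space containing every column of $X$) and the two dimension counts that upgrade the inclusions $\mathrm{Colspan}_{\mathbb{F}_q}(X)\subseteq\mathrm{Colspan}_{\mathbb{F}_q}(Y)$ and $\mathrm{Rowspan}_{\mathbb{F}_q}(X)\subseteq\mathrm{Rowspan}_{\mathbb{F}_q}(Z)$ to equalities. An alternative route, if one prefers not to invoke submultiplicativity of rank, is to build both factors from a reduced row echelon form of $X$: take $Y$ to be the pivot columns of $X$ and $Z$ to be the nonzero rows of the echelon form, after which the same verifications apply.
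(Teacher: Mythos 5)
Your proof is correct. Note that the paper itself does not supply a proof of this statement: it is quoted verbatim from the cited reference (Lemma 3.8 of that thesis), so there is no in-paper argument to compare against. Your argument is the standard full-rank factorization: take $Y$ to have as columns a basis of $\mathrm{Colspan}_{\mathbb{F}_q}(X)$, let $Z$ record the coordinates of the columns of $X$ in that basis, deduce $\mathrm{rank}(Z)=t$ from $t=\mathrm{rank}(X)\le\mathrm{rank}(Z)\le t$, and upgrade the two span inclusions to equalities by comparing dimensions. All steps are sound over an arbitrary field, hence over $\mathbb{F}_q$, and the alternative echelon-form construction you mention is an equally valid route.
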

        Therefore, once fixed a basis $\underline{\beta}=(\beta_0,\beta_1,\dots,\beta_{2n-1})$ of $\mathbb{F}_{q^{2n}}$ over $\mathbb{F}_{q}$, if $\mathrm{wt}(\underline{e})=t$, the matrix representation of $\underline{e}$ can be rewritten as follows:
        \[
            E=\mathrm{ext}_{\underline{\beta}}(\underline{e})=AB,
        \]
        where $A\in\mathbb{F}_q^{2n\times t}$, $B\in\mathbb{F}_q^{t\times 2n}$ and $\mathrm{rank}(A)=\mathrm{rank}(B)=t$. If we define $\underline{a}=(a_0,a_1,\ldots,a_{t-1})\coloneqq\mathrm{ext}_{\underline{\beta}}^{-1}(A)\in\mathbb{F}_{q^{2n}}^{t}$, then
        \begin{align}
            \label{eq: rank decomposition of the error vector}
            \underline{e}&=\mathrm{ext}_{\underline{\beta}}^{-1}(E)=\mathrm{ext}_{\underline{\beta}}^{-1}(AB)\notag\\
            &=(\beta_0,\beta_1,\dots,\beta_{2n-1})AB \notag\\
            &=\left(\mathrm{ext}_{\underline{\beta}}^{-1}(A)\right)B \notag \\ 
            &=(a_0,a_1,\dots,a_{t-1})B \notag\\
            &=\left(\displaystyle\sum_{l=0}^{t-1}a_lB_{l,0},\displaystyle\sum_{l=0}^{t-1}a_lB_{l,1},\dots,\displaystyle\sum_{l=0}^{t-1}a_lB_{l,2n-1}\right).
        \end{align}
        This decomposition is not unique, however, each of them is suitable for decoding (see the Appendix) and once $\underline{a}$ is fixed, $B$ is unique. Another key point of the algorithm is the introduction of an intermediate unknown $\underline{d}\in\mathbb{F}_{q^{2n}}^t$, called \textit{vector of error locators}, which satisfies $\underline{d}^{\top}=B\underline{\mu}^{{q^k}^{\top}}$.
        \medskip
        \\
        Algorithm \ref{alg:Decoding of Trombetti-Zhou codes: a new syndrome-based decoding approach.} summarizes the final decoding procedure. However, to understand how it works, it is first necessary to provide a comprehensive illustration of the theoretical foundation upon which it is based. To facilitate the comprehension of the following sections, we also furnish a preliminary toy version of the decoding process that combines theoretical considerations with practical steps of the final algorithm.
        \begin{itemize}
            \renewcommand{\labelitemi}{$\circledast$}
            \item Upon reception of the vector $\underline{r}$, compute the syndrome as $\underline{s}=\underline{r}\cdot H^{\top}$, where $H$ is as in (\ref{eq: FqnGenerator and Fqn parity check matrix of TZcodes}).
            \item If $\mathrm{Tr}_{q^{2n}/q^n}(\underline{s})=\underline{0}$, then $\underline{c}=\underline{r}$, otherwise it is necessary to reconstruct a potential decomposition of the error vector as $\underline{e}=\underline{a}B$.
            \item Identify $\underline{a}$ of a potential decomposition from
            \[
                \mathrm{ker}(\Lambda_0x+\Lambda_1x^q+\ldots+\Lambda_tx^{q^t})=\left<a_0,a_1,\ldots,a_{t-1}\right>_{\mathbb{F}_q},
            \] 
            where the coefficients $\Lambda_0,\Lambda_1,\ldots,\Lambda_t\in\mathbb{F}_{q^{2n}}$ are obtained as solution af a homogeneous linear system whose matrix coefficients involve powers of the entries of the syndrome.
            \item Determine $\underline{d}$ by solving a uniquely solvable linear system arising from $\underline{a}$ and the syndrome $\underline{s}$.
            \item Determine $B$ from $\underline{d}^{\top}=B\underline{\mu}^{{q^k}^{\top}}$
            \item Reconstruct the sent codeword $\underline{c}=\underline{r}-\underline{a}B$.
        \end{itemize}
        \subsection{Syndrome Calculation}\label{sec: Syndrome Calculation} Let $H\in\mathbb{F}_{q^{2n}}^{(4n-2k)\times2n}$ be a parity-check matrix of $\mathcal{TZ}_k(\gamma)$ as in (\ref{eq: FqnGenerator and Fqn parity check matrix of TZcodes}). The first step in decoding TZ-codes is to compute the syndrome of the received word $\underline{r}=\underline{\tilde{f}}G+\underline{e}\in\mathbb{F}_{q^{2n}}^{2n}$, that is
        \[
        \underline{s}=(s_0,s_1,\dots,s_{4n-2k-1})=\underline{r}H^{\top}=(\underline{\tilde{f}}G+\underline{e}) H^{\top}.
        \]
        More explicitly
        \begin{align*}
            \underline{s}&=(s_0,s_1,\dots,s_{4n-2k-1})=(a(\gamma\xi)^{q^{2n-k}},0,\dots,0,b\gamma\xi)+\underline{e}\begin{pNiceMatrix}
                \gamma^{q^{2n-k}}\underline{\mu}\\
                \underline{\mu}^{q^{k+1}}\\
                \gamma\underline{\mu}^{q^{k+1}}\\
                \vdots\\
                \underline{\mu}^{q^{2n-1}}\\
                \gamma\underline{\mu}^{q^{2n-1}}\\
                \underline{\mu}^{q^{k}}
            \end{pNiceMatrix}^{\top}\\
                        &=\left(a(\gamma\xi)^{q^{2n-k}}+\left<\underline{e},\gamma^{q^{2n-k}}\underline{\mu}\right>,\left<\underline{e},\underline{\mu}^{q^{k+1}}\right>,\dots,\gamma\left<\underline{e},\underline{\mu}^{q^{2n-1}}\right>,b\gamma\xi+\left<\underline{e},\underline{\mu}^{q^k}\right>\right).
            \end{align*}
            Keep in mind that this analysis assumes that an error has occurred, therefore $\mathrm{Tr}_{q^{2n}/q^n}(\underline{s})\neq\underline{0}$.
            \medskip
            \\
            In order to highlight the shape of the entries of the syndrome, as already outlined, for all $\ell\in\{0,1,\dots,t-1\}$ we define the \textit{error locators} associated with $B$ to be
            \[
            d_{\ell}\coloneqq d_{\ell}(B)=\displaystyle\sum_{j=0}^{2n-1}B_{\ell,j}\mu_j^{q^k}=\left<\underline{B}_{\ell},\underline{\mu}^{q^k}\right>=\underline{B}_{\ell}\cdot\underline{\mu}^{{q^k}^{\top}}
            \]
            and the \textit{vector of error locators} to be $\underline{d}\coloneqq(d_0,d_1,\dots,d_{t-1})$; therefore
            \begin{equation}
                \label{eqn: vector of error locators}
                    \underline{d}^{\top}=\begin{pmatrix}
                                            d_0\\
                                            d_1\\
                                            \vdots\\
                                            d_{t-1}
                                            \end{pmatrix}=B\underline{\mu}^{{q^k}^{\top}}.
            \end{equation}
            Note that, since $B_{l,j}\in\mathbb{F}_q$, we have that $B_{l,j}^{q^i}=B_{l,j}$ and hence $\underline{d}^{{q^i}^\top}=B\underline{\mu}^{{q^{i+k}}^{\top}}$. 
            \begin{proposition}
            \label{prop: the error locators are linearly independent}
                The error locators $d_0,d_1,\dots,d_{t-1}\in\mathbb{F}_{q^{2n}}$ form a set of linearly independent elements over $\mathbb{F}_q$.
            \end{proposition}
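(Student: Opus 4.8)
The plan is to turn an $\mathbb{F}_q$-linear dependence among the $d_\ell$ into one among the rows of $B$, exploiting only that a power of the Frobenius is an $\mathbb{F}_q$-linear automorphism of $\mathbb{F}_{q^{2n}}$ and that $B$ has full row rank. First I would take a relation $\sum_{\ell=0}^{t-1}c_\ell d_\ell=0$ with $c_\ell\in\mathbb{F}_q$ and set $\underline{c}=(c_0,\dots,c_{t-1})\in\mathbb{F}_q^{t}$. Using the defining identity $\underline{d}^{\top}=B\underline{\mu}^{{q^k}^{\top}}$ from \eqref{eqn: vector of error locators}, this rewrites as $0=\underline{c}\,\underline{d}^{\top}=(\underline{c}B)\underline{\mu}^{{q^k}^{\top}}$. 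Since $\underline{c}\in\mathbb{F}_q^t$ and $B\in\mathbb{F}_q^{t\times 2n}$, the vector $\underline{v}\coloneqq\underline{c}B$ has entries $v_0,\dots,v_{2n-1}\in\mathbb{F}_q$, and the previous equality reads $\sum_{j=0}^{2n-1}v_j\mu_j^{q^k}=0$.

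Next I would observe that $\underline{\mu}=(\mu_0,\dots,\mu_{2n-1})$ is an $\mathbb{F}_q$-basis of $\mathbb{F}_{q^{2n}}$ by Theorem \ref{thm: existence of trace almost dual basis}, and that $x\mapsto x^{q^k}$ is an $\mathbb{F}_q$-linear bijection of $\mathbb{F}_{q^{2n}}$; hence $(\mu_0^{q^k},\dots,\mu_{2n-1}^{q^k})$ is again an $\mathbb{F}_q$-basis of $\mathbb{F}_{q^{2n}}$. Because every $v_j$ lies in $\mathbb{F}_q$, the relation $\sum_{j=0}^{2n-1}v_j\mu_j^{q^k}=0$ forces $v_j=0$ for all $j$, i.e. $\underline{c}B=\underline{0}$. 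Finally, $B$ has $\mathbb{F}_q$-rank $t$ by the rank decomposition (Theorem \ref{thm: Rank decomposition}), so its rows are $\mathbb{F}_q$-linearly independent; thus $\underline{c}B=\underline{0}$ with $\underline{c}\in\mathbb{F}_q^{t}$ yields $\underline{c}=\underline{0}$, that is $c_0=\dots=c_{t-1}=0$. This establishes the linear independence of $d_0,\dots,d_{t-1}$ over $\mathbb{F}_q$.

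There is essentially no obstacle here: the only ingredients are that the $q^k$-Frobenius permutes $\mathbb{F}_q$-bases of $\mathbb{F}_{q^{2n}}$ and that $B$ has full row rank. Equivalently, one may phrase the whole argument in one line by remarking that $\underline{w}\mapsto\underline{w}\,\underline{\mu}^{{q^k}^{\top}}$ is an injective $\mathbb{F}_q$-linear map $\mathbb{F}_q^{2n}\to\mathbb{F}_{q^{2n}}$ (its matrix in the basis $\underline{\mu}^{q^k}$ is the identity), so it sends the $\mathbb{F}_q$-independent rows of $B$ to the $\mathbb{F}_q$-independent elements $d_0,\dots,d_{t-1}$.
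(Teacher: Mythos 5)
Your proof is correct and is essentially the same as the paper's: both rewrite a putative $\mathbb{F}_q$-relation among the $d_\ell$ as $\underline{c}\,B\,\underline{\mu}^{{q^k}^{\top}}=0$, use that $\underline{\mu}^{q^k}$ is an $\mathbb{F}_q$-basis of $\mathbb{F}_{q^{2n}}$ and $\underline{c}B$ has entries in $\mathbb{F}_q$ to conclude $\underline{c}B=\underline{0}$, and then invoke the full row rank of $B$ to force $\underline{c}=\underline{0}$. The paper states this as $B^{\top}\underline{\delta}^{\top}=\underline{0}$ being a rank-$t$ homogeneous system in $t$ unknowns, which is the same conclusion.
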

            \begin{proof}
                Note that since $\underline{\mu}$ is an $\mathbb{F}_q$-basis of $\mathbb{F}_{q^{2n}}$, $\underline{\mu}^{q^k}$ is also an $\mathbb{F}_q$-basis of $\mathbb{F}_{q^{2n}}$. Assume that $\{d_0,d_1,\dots,d_{t-1}\}$ are linearly dependent over $\mathbb{F}_q$. By definition there exists $(\delta_0,\delta_1,\dots,\delta_{t-1})\in\mathbb{F}_q^t\setminus \{\underline{0}\}$ such that
                \[
                    \delta_0d_0+\delta_1d_1+\dots+\delta_{t-1}d_{t-1}=0.
                \]
                This happens if and only if 
                \[
                    0=\delta_0\displaystyle\sum_{j=0}^{2n-1}B_{0,j}\mu_j^{q^k}+\delta_1\displaystyle\sum_{j=0}^{2n-1}B_{1,j}\mu_j^{q^k}+\dots+\delta_{t-1}\displaystyle\sum_{j=0}^{2n-1}B_{t-1,j}\mu_j^{q^k}=(\delta_0,\delta_1,\dots,\delta_{t-1})\cdot B\cdot \underline{\mu}^{{q^{k}}^{\top}},
                \]
                which leads to the following homogeneous system
                \[
                    B^{\top}\begin{pNiceMatrix}
                            \delta_0 \\
                            \delta_1 \\
                            \vdots \\
                            \delta_{t-1}
                          \end{pNiceMatrix}=\underline{0},
                \]
                since $\underline{\mu}^{q^k}$ is an $\mathbb{F}_q$-basis of $\mathbb{F}_{q^{2n}}$. The latter is a homogeneous system of rank $t$ in $t$ unknowns, therefore $(\delta_0,\delta_1,\dots,\delta_{t-1})=\underline{0}$, a contradiction. 
            \end{proof}
            With the introduction of the vector of error locators (\ref{eqn: vector of error locators}), the entries of the syndrome are as follows.
            \begin{theorem}
                \label{thm: how we can rewrite the entries of the syndrome thanks to the vector of error locators}
                Let $H\in\mathbb{F}_{q^{2n}}^{(4n-2k)\times2n}$ be a parity-check matrix of $\mathcal{TZ}_k(\gamma)$ as in (\ref{eq: FqnGenerator and Fqn parity check matrix of TZcodes}). Let $\underline{r}=\underline{\tilde{f}}G+\underline{e}\in\mathbb{F}_{q^{2n}}^{2n}$ be the received word and let $\underline{s}$ be the syndrome of $\underline{r}$ through $H$. If $\underline{d}$ is the vector of error locators, then
                \begin{equation}
                    \label{eqn: s_0 coefficient of syndrome polynomial in TZ codes}
                    s_0=a(\gamma\xi)^{q^{2n-k}}+\gamma^{q^{2n-k}}\underline{a}\cdot\underline{d}^{{q^{2n-k}}^{\top}},
                \end{equation}
                \begin{equation}
                    \label{eqn: s_i coefficients of syndrome polynomial in TZ codes}
                    \begin{cases}
                    s_{2i-1}=\underline{a}\cdot\underline{d}^{{q^i}^{\top}}\\
                    s_{2i}=\gamma s_{2i-1}
                \end{cases}
                \end{equation}
                for all $i\in\{1,\dots,2n-k-1\}$, and 
                \begin{equation}
                    \label{eqn: s_last coefficient of syndrome polynomial in TZ codes}
                    s_{4n-2k-1}=b\gamma\xi+\underline{a}\cdot\underline{d}^{\top}.
                \end{equation}
            \end{theorem}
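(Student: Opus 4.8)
The plan is to start from the explicit description of the syndrome obtained at the beginning of Subsection~\ref{sec: Syndrome Calculation}, where $\underline{s}=\underline{\tilde f}GH^{\top}+\underline{e}H^{\top}$ was expanded coordinatewise: the contribution $\underline{\tilde f}GH^{\top}$ is $(a(\gamma\xi)^{q^{2n-k}},0,\dots,0,b\gamma\xi)$ by the remark after Lemma~\ref{lemma: FqnGenerator and Fqnparitycheck of TZ codes}, while each coordinate of $\underline{e}H^{\top}$ is an inner product $\langle\underline{e},\,\cdot\,\rangle$ against the corresponding row of $H$ in~\eqref{eq: FqnGenerator and Fqn parity check matrix of TZcodes}. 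So it remains only to rewrite these inner products in terms of $\underline{a}$ and $\underline{d}$.

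The single fact doing all the work is the following: plugging the rank decomposition $\underline{e}=\underline{a}B$ from~\eqref{eq: rank decomposition of the error vector} into an inner product gives, for every integer $j$,
\[
\bigl\langle\underline{e},\underline{\mu}^{q^{j}}\bigr\rangle=\underline{a}\,B\,\underline{\mu}^{{q^{j}}^{\top}},
\]
and writing $j=k+i$ and invoking the identity $\underline{d}^{{q^{i}}^{\top}}=B\,\underline{\mu}^{{q^{i+k}}^{\top}}$ recorded right after~\eqref{eqn: vector of error locators}---which is valid for all $i$, with exponents read modulo $2n$, precisely because the entries of $B$ lie in $\mathbb{F}_q$ and $\mu_\ell^{q^{2n}}=\mu_\ell$---this equals $\underline{a}\cdot\underline{d}^{{q^{i}}^{\top}}$.

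With this identity the three cases follow by inspection. First I would handle the middle coordinates: for $i\in\{1,\dots,2n-k-1\}$ the row of $H$ producing $s_{2i-1}$ is $\underline{\mu}^{q^{k+i}}$, so $s_{2i-1}=\underline{a}\cdot\underline{d}^{{q^{i}}^{\top}}$, and the next row is $\gamma$ times it, whence $s_{2i}=\gamma s_{2i-1}$; this is~\eqref{eqn: s_i coefficients of syndrome polynomial in TZ codes}. Next, the final row of $H$ is $\underline{\mu}^{q^{k}}$, i.e.\ the case $i=0$, giving $\langle\underline{e},\underline{\mu}^{q^{k}}\rangle=\underline{a}\cdot\underline{d}^{\top}$; adding the contribution $b\gamma\xi$ from $\underline{\tilde f}GH^{\top}$ yields~\eqref{eqn: s_last coefficient of syndrome polynomial in TZ codes}. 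Finally, for $s_0$ the relevant row is $\gamma^{q^{2n-k}}\underline{\mu}$: pulling the scalar out of the bilinear form and noting $\underline{\mu}=\underline{\mu}^{q^{2n}}=\underline{\mu}^{q^{k+(2n-k)}}$, the displayed identity with $i=2n-k$ gives $\langle\underline{e},\gamma^{q^{2n-k}}\underline{\mu}\rangle=\gamma^{q^{2n-k}}\,\underline{a}\cdot\underline{d}^{{q^{2n-k}}^{\top}}$, and combining with $a(\gamma\xi)^{q^{2n-k}}$ from $\underline{\tilde f}GH^{\top}$ produces~\eqref{eqn: s_0 coefficient of syndrome polynomial in TZ codes}.

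The argument involves no genuine difficulty; the only point requiring care is the index bookkeeping---matching each row of the parity-check matrix in~\eqref{eq: FqnGenerator and Fqn parity check matrix of TZcodes} to the correct power $q^{j}$ of $\underline{\mu}$, and performing the reduction modulo $2n$ of the exponent in the $s_0$ term, where $\underline{\mu}^{q^{0}}$ must be rewritten as $\underline{\mu}^{q^{k+(2n-k)}}$ to bring it into the form to which~\eqref{eqn: vector of error locators} applies.
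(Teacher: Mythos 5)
Your proof is correct and takes essentially the same route as the paper's: both expand each syndrome coordinate as $a(\gamma\xi)^{q^{2n-k}}$ or $b\gamma\xi$ plus an inner product $\langle\underline{e},\underline{\mu}^{q^{j}}\rangle$, plug in the rank decomposition $\underline{e}=\underline{a}B$, and use the identity $\underline{d}^{{q^i}^{\top}}=B\underline{\mu}^{{q^{i+k}}^{\top}}$ (with exponents read modulo $2n$) to land on $\underline{a}\cdot\underline{d}^{{q^i}^{\top}}$. The only difference is cosmetic: you state the reduction $\underline{\mu}=\underline{\mu}^{q^{k+(2n-k)}}$ explicitly for the $s_0$ term, while the paper handles it implicitly through $B\underline{\mu}^{\top}=\underline{d}^{{q^{2n-k}}^{\top}}$.
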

            \begin{proof}
                See the Appendix.
            \end{proof}
            \subsection{Determine The Error Span Polynomial} In order to determine the vector $\underline{a}$ of a potential decomposition, we need to employ the error span polynomial, which is given by the subspace polynomial (of minimum degree) of the $\mathbb{F}_q$-linear subspace of $\mathbb{F}_{q^{2n}}$ generated by the entries of the vector $\underline{a}\in\mathbb{F}_{q^{2n}}^{t}$:
            \begin{equation}
                \label{eq: error span polynomial TZ codes when the error does not reach the unique decoding radius}
                    \Lambda(x)\coloneqq\displaystyle\prod_{\vartheta\in\left<a_0,\dots,a_{t-1}\right>_{\mathbb{F}_q}}\left(x-\vartheta\right).
            \end{equation}
            In accordance with \cite[Theorem 3.52]{lidl_finite_fields}, (\ref{eq: error span polynomial TZ codes when the error does not reach the unique decoding radius}) is a linearized polynomial of $q$-degree $t$ with $\Lambda_i\in\mathbb{F}_{q^{2n}}$ for all $i\in\{0,1,\dots,t\}$, i.e.,
            \[
                \Lambda(x)=\Lambda_0x+\Lambda_1x^{q}+\dots+\Lambda_tx^{q^t}.
            \]
            The coefficients of the error span polynomial can be determined by solving a homogeneous linear system.
            \begin{proposition}
                Let $H\in\mathbb{F}_{q^{2n}}^{(4n-2k)\times2n}$ be a parity-check matrix of $\mathcal{TZ}_k(\gamma)$ as in (\ref{eq: FqnGenerator and Fqn parity check matrix of TZcodes}) and let $\underline{s}$ be the syndrome of the received word $\underline{r}$ through $H$. If $\Lambda_0,\Lambda_1,\ldots,\Lambda_t$ are the coefficients of the error span polynomial in (\ref{eq: error span polynomial TZ codes when the error does not reach the unique decoding radius}), then
                 \begin{equation}
                \label{eq:system of equations to obtain the error span polynomial when the error does not reach the unique decoding radius}
                \begin{pNiceMatrix}
                    s_{2(t+1)-1} & \dots & s_3^{q^{t-1}} & s_1^{q^t}  \\
                    s_{2(t+2)-1} & \dots & s_5^{q^{t-1}} & s_3^{q^t}  \\
                    \vdots & \ddots & \vdots & \vdots \\
                    s_{4n-2k-3} & \dots & s_{4n-2k-1-2t}^{q^{t-1}} & s_{4n-2k-3-2t}^{q^t} 
                    \end{pNiceMatrix}\begin{pNiceMatrix}
                                                    \Lambda_0 \\
                                                    \Lambda_1 \\
                                                    \vdots \\
                                                    \Lambda_t
                                                \end{pNiceMatrix}=\underline{0}^{\top}.
            \end{equation}
            \end{proposition}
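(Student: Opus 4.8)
\noindent The plan is a direct verification: I will show that, with $\Lambda_0,\dots,\Lambda_t$ the coefficients of the error span polynomial (\ref{eq: error span polynomial TZ codes when the error does not reach the unique decoding radius}), every row of the displayed linear system evaluates to $0$, using only the description of the syndrome coordinates from Theorem~\ref{thm: how we can rewrite the entries of the syndrome thanks to the vector of error locators} together with the fact that $\Lambda$ vanishes on $\left<a_0,\dots,a_{t-1}\right>_{\mathbb{F}_q}$.

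\noindent First I would pin down the combinatorial shape of the coefficient matrix in (\ref{eq:system of equations to obtain the error span polynomial when the error does not reach the unique decoding radius}): its rows are indexed by $m\in\{1,\dots,2n-k-1-t\}$, its columns by $j\in\{0,1,\dots,t\}$, and a short inspection shows that the $(m,j)$-entry equals $s_{2(t+m-j)-1}^{q^{j}}$. As $(m,j)$ ranges over this set, the subscript $2(t+m-j)-1$ runs precisely through the odd integers with $1\le 2(t+m-j)-1\le 4n-2k-3$; in particular it is never $0$ and never $4n-2k-1$, so only ``clean'' syndrome coordinates occur and the two anomalous coordinates $s_0$ and $s_{4n-2k-1}$ — which carry the extra summands $a(\gamma\xi)^{q^{2n-k}}$ and $b\gamma\xi$ from (\ref{eqn: s_0 coefficient of syndrome polynomial in TZ codes}) and (\ref{eqn: s_last coefficient of syndrome polynomial in TZ codes}) — never intervene. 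Applying (\ref{eqn: s_i coefficients of syndrome polynomial in TZ codes}) with $i=t+m-j\in\{1,\dots,2n-k-1\}$, each matrix entry becomes $s_{2(t+m-j)-1}^{q^{j}}=\bigl(\underline{a}\cdot\underline{d}^{{q^{t+m-j}}^{\top}}\bigr)^{q^{j}}=\Bigl(\sum_{\ell=0}^{t-1}a_{\ell}d_{\ell}^{q^{t+m-j}}\Bigr)^{q^{j}}$.

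\noindent Next I would push the Frobenius automorphism $x\mapsto x^{q^{j}}$ of $\mathbb{F}_{q^{2n}}$ inside the sum, obtaining $s_{2(t+m-j)-1}^{q^{j}}=\sum_{\ell=0}^{t-1}a_{\ell}^{q^{j}}d_{\ell}^{q^{t+m}}$ (the exponents of $d_\ell$ lie in $\{1,\dots,2n-1\}$, so no reduction modulo $2n$ is needed). Multiplying by $\Lambda_j$, summing over $j=0,\dots,t$, and interchanging the two finite sums, the $m$-th entry of the left-hand side of (\ref{eq:system of equations to obtain the error span polynomial when the error does not reach the unique decoding radius}) becomes \[\sum_{j=0}^{t}\Lambda_j\,s_{2(t+m-j)-1}^{q^{j}}=\sum_{\ell=0}^{t-1}d_{\ell}^{q^{t+m}}\sum_{j=0}^{t}\Lambda_j a_{\ell}^{q^{j}}=\sum_{\ell=0}^{t-1}d_{\ell}^{q^{t+m}}\,\Lambda(a_{\ell}).\] Since $\Lambda(x)=\prod_{\vartheta\in\left<a_0,\dots,a_{t-1}\right>_{\mathbb{F}_q}}(x-\vartheta)$ vanishes on all of $\left<a_0,\dots,a_{t-1}\right>_{\mathbb{F}_q}$, in particular $\Lambda(a_{\ell})=0$ for every $\ell$, so this entry is $0$; as $m$ was arbitrary, the product equals $\underline{0}^{\top}$, which is the claim.

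\noindent The argument is essentially routine — it is the rank-metric incarnation of the classical ``key equation'' of syndrome decoding — and the only step that needs genuine care, and the only place where the specific structure of the parity-check matrix $H$ in (\ref{eq: FqnGenerator and Fqn parity check matrix of TZcodes}) enters, is the index bookkeeping of the second step: one must check that the window of syndrome coordinates used in the system stays strictly inside the band $s_1,s_3,\dots,s_{4n-2k-3}$ for which (\ref{eqn: s_i coefficients of syndrome polynomial in TZ codes}) supplies the bilinear form $s_{2i-1}=\underline{a}\cdot\underline{d}^{{q^{i}}^{\top}}$, so that the Frobenius-and-swap computation applies uniformly to every entry. (Implicitly one uses $t\le 2n-k-1$, which holds since $t\le\lfloor(2n-k)/2\rfloor$; note that when $t$ reaches the unique decoding radius the system may have fewer than $t$ rows and then no longer pins down $\Lambda$ up to a scalar, but the displayed identity itself remains valid.)
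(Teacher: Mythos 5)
Your proposal is correct and follows essentially the same route as the paper: both start from the syndrome identity $s_{2i-1}=\underline{a}\cdot\underline{d}^{{q^i}^{\top}}$ of Theorem \ref{thm: how we can rewrite the entries of the syndrome thanks to the vector of error locators}, apply Frobenius, swap the two finite sums, and conclude from $\Lambda(a_{\ell})=0$; the only difference is presentational, since the paper phrases the same computation as a membership of the row vectors $\underline{s}_j$ in $\left<\underline{a}_0,\dots,\underline{a}_{t-1}\right>_{\mathbb{F}_{q^{2n}}}=[(\Lambda_0,\dots,\Lambda_t)]^{\perp}$, whereas you verify the vanishing entrywise.
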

            \begin{proof}
                By the definition of error span polynomial (\ref{eq: error span polynomial TZ codes when the error does not reach the unique decoding radius}), for all $\ell\in\{0,1,\dots,t-1\}$
            \begin{equation}
                \label{eq: a_l are roots}
                0=\Lambda(a_{\ell})=\Lambda_0a_{\ell}+\Lambda_1a_{\ell}^{q}+\dots+\Lambda_ta_{\ell}^{q^t}.
            \end{equation}
            Therefore, if we define 
            \[
                \underline{a}_{\ell}\coloneqq(a_{\ell},a_{\ell}^{q},a_{\ell}^{q^2},\dots,a_{\ell}^{q^t})\in\mathbb{F}_{q^{2n}}^{t+1}
            \]
            for all $\ell\in\{0,1,\dots,t-1\}$, (\ref{eq: a_l are roots}) means that
            \[
            (\Lambda_0,\Lambda_1,\Lambda_2,\dots,\Lambda_t)\in\left<\underline{a}_0,\underline{a}_1,\dots,\underline{a}_{t-1}\right>^{\perp}_{\mathbb{F}_{q^{2n}}}.
            \]
            Meanwhile, thanks to (\ref{eqn: s_i coefficients of syndrome polynomial in TZ codes}), we obtain $2n-k-t-1$ systems of the following shape
            \begin{equation}
                \label{eq: link between s_{2(t+j)-1} and vectors a_l}
                \left\{
                \begin{aligned}
                    &s_{2(t+j)-1}=\sum_{\ell=0}^{t-1}a_\ell d_\ell^{q^{t+j}}\\
                    &s_{2(t+j-1)-1}^{q}=\sum_{\ell=0}^{t-1}a_\ell^{q}d_\ell^{q^{t+j}}\\
                    &\quad\vdots \\
                    &s_{2j-1}^{q^t}=\sum_{\ell=0}^{t-1}a_\ell^{q^t}d_\ell^{q^{t+j}},
                \end{aligned}
                \right.
            \end{equation}
            where $j\in\{1,\dots,2n-k-1-t\}$. Hence, if we set
            \[
            \underline{s}_{j}=(s_{2(t+j)-1},s_{2(t+j-1)-1}^{q},\dots,s_{2j-1}^{q^t})
            \]
            for all $j\in\{1,\dots,2n-k-1-t\}$, then
            \begin{align}
                \label{eq: link among vectors s_{2(t+j)-1} and vectors a_l as sum}
                \underline{s}_{j}&=(s_{2(t+j)-1},s_{2(t+j-1)-1}^{q},\dots,s_{2j-1}^{q^t}) \notag \\
                &=\left(\displaystyle\sum_{\ell=0}^{t-1}a_{\ell}d_{\ell}^{q^{t+j}},\displaystyle\sum_{\ell=0}^{t-1}a_{\ell}^{q}d_{\ell}^{q^{t+j}},\dots,\displaystyle\sum_{\ell=0}^{t-1}a_{\ell}^{q^t}d_{\ell}^{q^{t+j}}\right)\notag \\
                &=\displaystyle\sum_{\ell=0}^{t-1}d_{\ell}^{q^{t+j}}(a_{\ell},a_{\ell}^{q},\dots,a_{\ell}^{q^t})=\displaystyle\sum_{\ell=0}^{t-1}d_{\ell}^{q^{t+j}}\underline{a}_{\ell}\notag\\
                &=\begin{pNiceMatrix}
                        d_0^{q^{t+j}} & d_{1}^{q^{t+j}} & \dots & d_{t-1}^{q^{t+j}} \end{pNiceMatrix}\begin{pNiceMatrix}
                                            a_0 & a_{0}^{q} & \dots & a_{0}^{q^t} \\
                                            a_{1} & a_{1}^{q} & \dots & a_{1}^{q^t} \\
                                            \vdots & \vdots & \ddots & \vdots \\
                                            a_{t-1} & a_{t-1}^{q} & \dots & a_{t-1}^{q^t} 
                                        \end{pNiceMatrix}.
            \end{align}
            Consequently,
            \begin{equation}
                \label{eq: condition 1 system to gain error span polynomial coefficients}
                \underline{s}_{1},\dots,\underline{s}_{2n-k-1-t}\in[\underline{a}_0,\underline{a}_1,\dots,\underline{a}_{t-1}]_{\mathbb{F}_{q^{2n}}}=[(\Lambda_0,\Lambda_1,\dots,\Lambda_t)]_{\mathbb{F}_{q^{2n}}}^{\perp},
            \end{equation}
            which leads to the homogeneous linear system of $2n-k-1-t$ equations in $t+1$ unknowns in (\ref{eq:system of equations to obtain the error span polynomial when the error does not reach the unique decoding radius}).
            \end{proof}
            If the dimension of the solution space of (\ref{eq:system of equations to obtain the error span polynomial when the error does not reach the unique decoding radius}) is one, then any solution of (\ref{eq:system of equations to obtain the error span polynomial when the error does not reach the unique decoding radius}) provides the coefficients of the error span polynomial $\Lambda(x)$, as defined in (\ref{eq: error span polynomial TZ codes when the error does not reach the unique decoding radius}), up to a scalar factor. This scalar factor does not present a problem, as it does not alter the root space.\\
            \par
            At this step in the analysis, it is necessary to divide the discussion into two cases, depending on the rank of the error.
            \subsection*{Case 1: $2t + k < 2n$}
            The following lemma provides a criterion for determining the actual rank weight of the error vector in the case where $2t + k < 2n$. Our approach is inspired by \cite[Lemma 3.9]{PhdThesis_WZeh}, as it deals with a related scenario.
        \begin{lemma}{\textnormal{\textbf{(Rank of Syndrome Matrix).}}}
            \label{lemma: Rank of Syndrome Matrix}
            Let $\underline{r}=\underline{c}+\underline{e}\in\mathbb{F}_{q^{2n}}^{2n}$ be given, where $\underline{c}\in\mathcal{TZ}_k(\gamma)$ and $wt(\underline{e})=t$ such that $2t+k<2n$ and let
            \[
            (s_0,s_1,s_2,s_3,\dots,s_{4n-2k-3},s_{4n-2k-2},s_{4n-2k-1})\in\mathbb{F}_{q^{2n}}^{4n-2k}
            \]
            denote the corresponding syndrome. Let $t\leq u\leq\left\lfloor\frac{2n-(k+1)}{2}\right\rfloor$ and let us consider the following $u\times(u+1)$ matrix 
            \begin{equation}
                \label{eq:shape matrix in theorem rank of syndrome matrix}
                S^{(u)}=\begin{pNiceMatrix}
                    s_{2(u+1)-1} & \dots & s_3^{q^{u-1}} & s_1^{q^u}  \\
                    s_{2(u+2)-1} & \dots & s_5^{q^{u-1}} & s_3^{q^u}  \\
                    \vdots & \ddots & \vdots & \vdots \\
                    s_{2(2u)-1} & \dots & s_{2u+1}^{q^{u-1}} & s_{2u-1}^{q^u}
                    \end{pNiceMatrix}
            \end{equation}
            built from the syndrome entries with odd subscript except the last one, i.e., $s_{2j-1}$ for all $j\in\{1,\ldots,2n-k-1\}$. Then $S^{(u)}$ has full rank $u$ if and only if $u=t$.
        \end{lemma}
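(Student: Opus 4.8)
The plan is to use Theorem~\ref{thm: how we can rewrite the entries of the syndrome thanks to the vector of error locators} to factor $S^{(u)}$ as a product of two $q$-Vandermonde-type matrices whose ranks are controlled by Lemma~\ref{lemma: determinant of q-Vandermonde matrix}. Concretely, I would first fix a rank decomposition $E=AB$ of the error (Theorem~\ref{thm: Rank decomposition}), set $\underline{a}=\mathrm{ext}_{\underline{\beta}}^{-1}(A)$ and take the associated error locators $\underline{d}$ as in (\ref{eqn: vector of error locators}), and then rewrite a generic entry of $S^{(u)}$. Indexing the rows of (\ref{eq:shape matrix in theorem rank of syndrome matrix}) by $m\in\{1,\dots,u\}$ and the columns by $c\in\{0,1,\dots,u\}$, inspection shows that the $(m,c)$ entry equals $s_{2(u+m-c)-1}^{q^{c}}$. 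As $m$ runs over $\{1,\dots,u\}$ and $c$ over $\{0,\dots,u\}$, the index $j=u+m-c$ runs over $\{1,\dots,2u\}$, and the hypothesis $u\le\left\lfloor\frac{2n-(k+1)}{2}\right\rfloor$ guarantees $2u\le 2n-k-1$; hence every entry of $S^{(u)}$ is of the form $s_{2j-1}^{q^{c}}$ with $1\le j\le 2n-k-1$, so (\ref{eqn: s_i coefficients of syndrome polynomial in TZ codes}) applies and gives $s_{2j-1}=\sum_{\ell=0}^{t-1}a_\ell d_\ell^{q^{j}}$. Raising this identity to the $q^{c}$-th power with $j=u+m-c$ yields
\[
S^{(u)}_{m,c}=\sum_{\ell=0}^{t-1}a_\ell^{q^{c}}\,d_\ell^{q^{u+m}}.
\]

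This exhibits $S^{(u)}=M\,N$, where $M\in\mathbb{F}_{q^{2n}}^{u\times t}$ has $(m,\ell)$ entry $d_\ell^{q^{u+m}}$ — that is, $M=\mathrm{qvan}_u\!\left(d_0^{q^{u+1}},\dots,d_{t-1}^{q^{u+1}}\right)$ — while $N\in\mathbb{F}_{q^{2n}}^{t\times(u+1)}$ has $(\ell,c)$ entry $a_\ell^{q^{c}}$, so that $N^{\top}=\mathrm{qvan}_{u+1}(a_0,\dots,a_{t-1})$. Next I would pin down the ranks of the two factors: the error locators $d_0,\dots,d_{t-1}$ are $\mathbb{F}_q$-linearly independent by Proposition~\ref{prop: the error locators are linearly independent}, hence so are $d_0^{q^{u+1}},\dots,d_{t-1}^{q^{u+1}}$ since the Frobenius map is an $\mathbb{F}_q$-linear bijection; and $a_0,\dots,a_{t-1}$ are $\mathbb{F}_q$-linearly independent because $A$ has $\mathbb{F}_q$-rank $t$ (Theorem~\ref{thm: Rank decomposition}) and $\underline{a}=\mathrm{ext}_{\underline{\beta}}^{-1}(A)$ collects its columns. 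Lemma~\ref{lemma: determinant of q-Vandermonde matrix} then yields $\mathrm{rank}(M)=\min\{u,t\}=t$ and $\mathrm{rank}(N)=\mathrm{rank}(N^{\top})=\min\{u+1,t\}=t$, where we used $t\le u$.

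Finally, since $M$ has full column rank $t$ and $N$ has full row rank $t$, the product satisfies $\mathrm{rank}(S^{(u)})=\mathrm{rank}(MN)=t$: the column span of $MN$ is the image under $M$ of the column span of $N$, which is all of $\mathbb{F}_{q^{2n}}^{t}$, and $M$ restricted to $\mathbb{F}_{q^{2n}}^{t}$ is injective. As $S^{(u)}$ is a $u\times(u+1)$ matrix, it has full rank $u$ precisely when $t=u$; under the standing assumption $t\le u$, this is exactly $u=t$. I expect the only genuinely delicate point to be the index bookkeeping in the first step — one must confirm that every entry of $S^{(u)}$ really is of the form $s_{2j-1}^{q^{c}}$ with $j\le 2n-k-1$ so that (\ref{eqn: s_i coefficients of syndrome polynomial in TZ codes}) is applicable — which is precisely where the bound $u\le\left\lfloor\frac{2n-(k+1)}{2}\right\rfloor$ is needed; the rest is a direct application of the $q$-Vandermonde rank formula.
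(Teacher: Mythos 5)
Your proof is correct and uses the same governing idea as the paper's: apply the syndrome identity $s_{2j-1}=\sum_{\ell=0}^{t-1}a_\ell d_\ell^{q^j}$ from (\ref{eqn: s_i coefficients of syndrome polynomial in TZ codes}) to factor $S^{(u)}$ into a product of $q$-Vandermonde-type matrices, then invoke Lemma~\ref{lemma: determinant of q-Vandermonde matrix} to control the ranks. The only divergence is in the shape of the factorization and the resulting case split. The paper extends $\underline{a}$ and $\underline{d}$ by zeros to length $u$ (the stated ``abuse of notation'' $a_\ell=d_\ell\coloneqq 0$ for $\ell\in\{t,\dots,u-1\}$) and writes $S^{(u)}=D^{(u)}\cdot A^{(u)}$ with $D^{(u)}\in\mathbb{F}_{q^{2n}}^{u\times u}$ and $A^{(u)}\in\mathbb{F}_{q^{2n}}^{u\times(u+1)}$; it then argues by cases, noting that when $u>t$ the padded sets are $\mathbb{F}_q$-dependent so both factors have rank $<u$, while when $u=t$ both factors have full rank. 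You skip the padding entirely, keeping the native $u\times t$ and $t\times(u+1)$ factors $M$ and $N$, observe that $M$ always has full column rank $t$ and $N$ always has full row rank $t$, and conclude $\mathrm{rank}(S^{(u)})=t$ outright. The two arguments are morally identical, but yours is marginally tighter: it pins down the rank of $S^{(u)}$ exactly for every admissible $u$ rather than only giving the one-sided bound $\mathrm{rank}(S^{(u)})<u$ in the $u>t$ case, and it avoids introducing the formally zero $a_\ell,d_\ell$.
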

        \begin{proof}
            With a slight abuse of notation, let us define $a_{\ell},d_{\ell}\coloneqq0$ for $\ell\in\{t,\ldots,u-1\}$. Under this assumption, since $t\leq u\leq\left\lfloor\frac{2n-(k+1)}{2}\right\rfloor$, (\ref{eq: link between s_{2(t+j)-1} and vectors a_l}) can be generalized in the following way
            \[
                \left\{
                    \begin{aligned}
                        &s_{2(u+j)-1}=\sum_{l=0}^{u-1}a_ld_l^{q^{u+j}}\\
                        &s_{2(u+j-1)-1}^{q}=\sum_{l=0}^{u-1}a_l^{q}d_l^{q^{u+j}}\\
                        &\quad\vdots \\
                        &s_{2j-1}^{q^u}=\sum_{l=0}^{u-1}a_l^{q^u}d_l^{q^{u+j}},
                    \end{aligned}
                \right.
            \]
            for all $j\in\{1,\dots,u\}$. Therefore we can decompose $S^{(u)}$ as follows:
            \[
            S^{(u)}=D^{(u)}\cdot A^{(u)}=\begin{pNiceMatrix}
                        d_0^{q^{u+1}} & d_{1}^{q^{u+1}} & \dots & d_{u-1}^{q^{u+1}} \\
                        d_{0}^{q^{u+2}} & d_{1}^{q^{u+2}} & \dots & d_{u-1}^{q^{u+2}} \\
                        \vdots & \vdots & \ddots & \vdots \\
                        d_{0}^{q^{2u}} & d_{1}^{q^{2u}} & \dots & d_{u-1}^{q^{2u}} 
                    \end{pNiceMatrix}\cdot\begin{pNiceMatrix}
                                            a_0 & a_{0}^{q} & \dots & a_{0}^{q^u} \\
                                            a_{1} & a_{1}^{q} & \dots & a_{1}^{q^u} \\
                                            \vdots & \vdots & \ddots & \vdots \\
                                            a_{u-1} & a_{u-1}^{q} & \dots & a_{u-1}^{q^u} 
                                        \end{pNiceMatrix}.
            \]
            We now can observe that both matrices are $q$-Vandermonde matrices therefore they both have full rank if and only if $\{d_0,d_1,\dots,d_{u-1}\}$ and $\{a_0,a_1,\dots,a_{u-1}\}$ are sets of elements which are linearly independent over $\mathbb{F}_q$. If $u>t$, the aforementioned sets are not linearly independent since $a_{\ell},d_{\ell}=0$ for $\ell\in\{t,\ldots,u-1\}$. As a result $\mathrm{rank}(D^{(u)}),\mathrm{rank}(A^{(u)})<u$ and thus $\mathrm{rank}(S^{(u)})\leq \mathrm{min}\{\mathrm{rank}(D^{(u)}),\mathrm{rank}(A^{(u)})\}<u$. On the other side, if $u=t$, $\{d_0,d_1,\dots,d_{t-1}\}$ and $\{a_0,a_1,\dots,a_{t-1}\}$ are sets of elements which are linearly independent over $\mathbb{F}_q$. Moreover $D^{(t)}$ is a square matrix of rank $t$ and $A^{(t)}$ is a $t\times (t+1)$ matrix of rank $t$. The claim now easily follows.
        \end{proof}
        This suggests how to realize one of the first steps of the algorithm to determine the rank of the error: one can set up $S^{(u)}$ for $u=\left\lfloor(2n-(k+1))/2\right\rfloor$ and check its rank. Subsequently, we iteratively decrease $u$ by one until the rank becomes full. The algorithm necessitates the resolution of numerous linear systems of equations over the finite field $\mathbb{F}_{q^{2n}}$ that requires $\mathcal{O}(t^3)$ operations in $\mathbb{F}_{q^{2n}}$ with Gaussian elimination. However, the $q$-circulant structure of the matrix $S^{(u)}$ significantly reduces complexity to $\mathcal{O}(t^2)$ operations by employing an algorithm based on the LEEA from \cite{gabidulin_TheoryOfCodesWithMaximumRankDistance} and the Berlekamp-Massey algorithm from \cite{ModifiedBerlekampMasseyAlgorithm}.
        \subsection{Finding the Root Space of $\boldsymbol{\Lambda(x)}$}\label{subsection: Finding the Root Space} Once the $q$-polynomial $\Lambda(x)$ has been reconstructed, we can determine a possible $\underline{a}=(a_0,a_1,\dots,a_{t-1})$ in the decomposition of (\ref{eq: rank decomposition of the error vector}) as, by construction, it is a basis for the root space of $\Lambda(x)$. In practice, once fixed an $\mathbb{F}_{q}$-basis $\underline{\beta}=\{\beta_0,\beta_1,\dots,\beta_{2n-1}\}$ of $\mathbb{F}_{q^{2n}}$, we have to determine:
        \[
            \mathrm{ker}\left(\mathrm{ext}_{\underline{\beta}}\left(\Lambda(\beta_0),\Lambda(\beta_1),\dots,\Lambda(\beta_{2n-1})\right)\right).
        \]
        In other words we have to solve a linear system of $2n$ equations over $\mathbb{F}_{q}$ (see \cite[Chapter 3, Section 4]{lidl_finite_fields}) which takes at most $\mathcal{O}((2n)^3)$ operations in $\mathbb{F}_q$ to obtain $\mathrm{ext}_{\underline{\beta}}(\underline{a})$ for one possible $\underline{a}$.
        \subsection{Determining the Error}\label{subsection: Determining the error} In order to determine the error $\underline{e}$, it is now necessary to identify the unique matrix $B\in\mathbb{F}_q^{t\times 2n}$ corresponding to the vector $\underline{a}=(a_0,a_1,\dots,a_{t-1})$ such that $\underline{e}=\underline{a}\cdot B$ as in (\ref{eq: rank decomposition of the error vector}). For this we first need to determine $(d_0,d_1,\dots,d_{t-1})$. Taking (\ref{eqn: s_i coefficients of syndrome polynomial in TZ codes}) into account, we can set up the following system of $2n-k-1$ equations (over $\mathbb{F}_{q^{2n}}$) in $t$ unknowns:
        \begin{equation}
            \label{eq:system to obtain d in TZ codes}
            \begin{pNiceMatrix}
                a_0^{q^{-1}} & a_{1}^{q^{-1}} & \dots & a_{t-1}^{q^{-1}} \\
                a_{0}^{q^{-2}} & a_{1}^{q^{-2}} & \dots & a_{t-1}^{q^{-2}} \\
                \vdots & \vdots & \ddots & \vdots \\
                a_{0}^{q^{-(2n-k-1)}} & a_{1}^{q^{-(2n-k-1)}} & \dots & a_{t-1}^{q^{-(2n-k-1)}} 
            \end{pNiceMatrix}\cdot\begin{pNiceMatrix}
                                        d_0 \\
                                        d_1 \\
                                        \vdots \\
                                        d_{t-1}
                                    \end{pNiceMatrix}=\begin{pNiceMatrix}
                                                        s_{1}^{q^{-1}} \\
                                                        s_3^{q^{-2}} \\
                                                        \vdots \\
                                                        s_{4n-2k-3}^{q^{-(2n-k-1)}}
                                                    \end{pNiceMatrix}.
        \end{equation}
        By hypothesis $2n-k>2t$, hence $2n-k-1>2t-1\geq t$, since we are investigating the circumstance in which an error has occurred, i.e., $t\geq 1$. Moreover $a_0,a_1,\dots,a_{t-1}$ are linearly independent; this means that the $2n-k-1\times t$ matrix defined above is of rank $t$ and we can find a unique solution $(d_0,d_1,\dots,d_{t-1})$. Once the vector of error locators $(d_0,d_1,\dots,d_{t-1})$ has been determined, we recover the matrix $B$ from the relations $d_{\ell}=\sum_{j=0}^{2n-1}B_{\ell,j}\mu_j^{q^k}$ for all $\ell\in\{0,\dots,t-1\}$ by simply looking for the representation of $(d_0,d_1,\dots,d_{t-1})$ over $\mathbb{F}_q$ using the $\mathbb{F}_q$-basis $\underline{\mu}^{q^k}$ of $\mathbb{F}_{q^{2n}}$. The computational cost of this step is negligible. Finally, we compute $\underline{e}=\underline{a}\cdot B$ and reconstruct $\underline{c}=\underline{r}-\underline{e}$.
        \subsection*{Case 2: $2t + k = 2n$} When the rank weight $t$ achieves the unique error correcting radius, i.e., when $k$ is even and $t=n-\frac{k}{2}$, more equations are needed to determine the error span polynomial. 
        \begin{proposition}
            Let $\underline{r}=\underline{c}+\underline{e}\in\mathbb{F}_{q^{2n}}^{2n}$ be given, where $\underline{c}\in\mathcal{TZ}_k(\gamma)$ and $\mathrm{rank}(\underline{e})=t$ such that $2n=2t+k$ and let $\underline{s}\in\mathbb{F}_{q^{2n}}^{4n-2k}$ denote the corresponding syndrome. The coefficient matrix of (\ref{eq:system of equations to obtain the error span polynomial when the error does not reach the unique decoding radius}),
        \[
        \widehat{S}^{(t)}=\begin{pNiceMatrix}
                    s_{2(t+1)-1} & \dots & s_3^{q^{t-1}} & s_1^{q^t}  \\
                    s_{2(t+2)-1} & \dots & s_5^{q^{t-1}} & s_3^{q^t}  \\
                    \vdots & \ddots & \vdots & \vdots \\
                    s_{2(2t-1)-1} & \dots & s_{2t-1}^{q^{t-1}} & s_{2(t-1)-1}^{q^t} 
                    \end{pNiceMatrix},
        \]
        has rank $t-1$, therefore the solution space of $\widehat{S}^{(t)}\cdot\underline{\Lambda}^{\top}=\underline{0}^{\top}$ is two-dimensional.
        \end{proposition}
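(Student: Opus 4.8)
The plan is to reproduce the factorisation used in the proof of Lemma~\ref{lemma: Rank of Syndrome Matrix}, while exploiting the distinctive feature of the extremal regime $2n=2t+k$: here the system (\ref{eq:system of equations to obtain the error span polynomial when the error does not reach the unique decoding radius}) consists of only $2n-k-1-t=t-1$ equations, so $\widehat{S}^{(t)}$ is a $(t-1)\times(t+1)$ matrix, and --- crucially --- it never involves the ``spoiled'' last syndrome coordinate $s_{4n-2k-1}=b\gamma\xi+\underline{a}\cdot\underline{d}^{\top}$, the only one carrying the additional message contribution $b\gamma\xi$.

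First I would pin down exactly which syndrome coordinates enter $\widehat{S}^{(t)}$. Its entry in row $j\in\{1,\dots,t-1\}$ and column $m\in\{0,\dots,t\}$ is $s_{2(t+j-m)-1}^{q^{m}}$, so only coordinates $s_{2i-1}$ with $1\le i\le 2t-1=2n-k-1$ occur; in particular $s_{4n-2k-1}$ does not appear, and every index that does occur lies in the range where (\ref{eqn: s_i coefficients of syndrome polynomial in TZ codes}) is valid, i.e. $s_{2i-1}=\underline{a}\cdot\underline{d}^{q^{i}\top}$. Consequently the computation (\ref{eq: link among vectors s_{2(t+j)-1} and vectors a_l as sum}) carries over verbatim and yields the factorisation
\[
\widehat{S}^{(t)}=\widehat{D}^{(t-1)}\cdot A^{(t)},
\]
where $A^{(t)}=\mathrm{qvan}_{t+1}(a_0,\dots,a_{t-1})^{\top}$ is the $t\times(t+1)$ matrix in the entries of $\underline{a}$ already appearing in Lemma~\ref{lemma: Rank of Syndrome Matrix}, and $\widehat{D}^{(t-1)}$ is the $(t-1)\times t$ matrix with $(i,j)$-entry $d_{j-1}^{q^{t+i}}$, i.e. $\widehat{D}^{(t-1)}=\mathrm{qvan}_{t-1}\bigl(d_0^{q^{t+1}},\dots,d_{t-1}^{q^{t+1}}\bigr)$.

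Next I would read off the ranks of the two factors from Lemma~\ref{lemma: determinant of q-Vandermonde matrix}. The entries of $\underline{a}$ span the column support of $\underline{e}$, hence are $\mathbb{F}_q$-linearly independent, so $A^{(t)}$ has full row rank $t$; the error locators $d_0,\dots,d_{t-1}$ are $\mathbb{F}_q$-linearly independent by Proposition~\ref{prop: the error locators are linearly independent}, hence so are $d_0^{q^{t+1}},\dots,d_{t-1}^{q^{t+1}}$ (Frobenius being an $\mathbb{F}_q$-linear bijection of $\mathbb{F}_{q^{2n}}$), so $\widehat{D}^{(t-1)}$ has rank $\min\{t-1,t\}=t-1$. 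Since $A^{(t)}$ has full row rank, the map $\underline{v}\mapsto\underline{v}A^{(t)}$ on row vectors is injective, and therefore
\[
\mathrm{rank}\bigl(\widehat{S}^{(t)}\bigr)=\mathrm{rank}\bigl(\widehat{D}^{(t-1)}A^{(t)}\bigr)=\mathrm{rank}\bigl(\widehat{D}^{(t-1)}\bigr)=t-1.
\]
Hence the solution space of $\widehat{S}^{(t)}\underline{\Lambda}^{\top}=\underline{0}^{\top}$, being the kernel of a $(t-1)\times(t+1)$ matrix of rank $t-1$, has dimension $(t+1)-(t-1)=2$.

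I expect the only delicate point --- the ``main obstacle'' --- to be the index bookkeeping of the first step: one must verify that the index running down the first column of $\widehat{S}^{(t)}$ stops at $s_{2(2t-1)-1}=s_{4n-2k-3}$ and never reaches $s_{4n-2k-1}$, since the extra summand $b\gamma\xi$ there would wreck the clean product structure $\widehat{D}^{(t-1)}A^{(t)}$. Once this count is settled, the conclusion follows at once from the $q$-Vandermonde rank lemma together with the two linear independence statements, exactly as in Case~1.
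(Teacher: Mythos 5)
Your proof is correct and follows essentially the same route as the paper: both factor $\widehat{S}^{(t)}$ as the $(t-1)\times t$ matrix of $q$-powers of $\underline{d}$ (your $\widehat{D}^{(t-1)}$, the paper's $\widehat{D}^{(t)}$) times $A^{(t)}$, and read off the ranks of the factors from the $q$-Vandermonde lemma together with the $\mathbb{F}_q$-linear independence of the $a_\ell$ and of the $d_\ell$. The only minor variations are that you finish by noting $\underline{v}\mapsto\underline{v}A^{(t)}$ is injective while the paper invokes Sylvester's inequality, and you add an explicit index check that $s_{4n-2k-1}$ (the entry with the extra $b\gamma\xi$ term) never enters $\widehat{S}^{(t)}$, which the paper leaves implicit.
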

        \begin{proof}
            As it has been previously done in the proof of Lemma \ref{lemma: Rank of Syndrome Matrix}, we can decompose $\widehat{S}^{(t)}$ as
            \[
                \widehat{S}^{(t)}=\widehat{D}^{(t)}\cdot A^{(t)},
            \]
            where
            \[
            \widehat{D}^{(t)}\coloneqq\begin{pNiceMatrix}
                        d_0^{q^{t+1}} & d_{1}^{q^{t+1}} & \dots & d_{t-1}^{q^{t+1}} \\
                        d_0^{q^{t+2}} & d_{1}^{q^{t+2}} & \dots & d_{t-1}^{q^{t+2}} \\
                        \vdots & \vdots & \ddots & \vdots \\
                        d_{0}^{q^{2t-1}} & d_{1}^{q^{2t-1}} & \dots & d_{t-1}^{q^{2t-1}}
                    \end{pNiceMatrix}
            \]
            and
            \[
            A^{(t)}\coloneqq\begin{pNiceMatrix}
                                            a_0 & a_{0}^{q} & \dots & a_{0}^{q^t} \\
                                            a_{1} & a_{1}^{q} & \dots & a_{1}^{q^t} \\
                                            \vdots & \vdots & \ddots & \vdots \\
                                            a_{t-1} & a_{t-1}^{q} & \dots & a_{t-1}^{q^t} 
                                        \end{pNiceMatrix}.
            \]
            We recall that $\{d_0,d_1,\dots,d_{t-1}\}\subseteq\mathbb{F}_{q^{2n}}$ and $\{a_0,a_1,\dots,a_{t-1}\}\subseteq\mathbb{F}_{q^{2n}}$ are sets of elements which are linearly independent over $\mathbb{F}_q$. By Lemma \ref{lemma: determinant of q-Vandermonde matrix}, $A^{(t)}\in\mathbb{F}_{q^{2n}}^{t\times(t+1)}$ and $\widehat{D}^{(t)}\in\mathbb{F}_{q^{2n}}^{(t-1)\times t}$ are matrices of rank $t$ and $t-1$, respectively. Therefore, due to Sylvester's inequality,
            \[
            t-1\geq\mathrm{rank}(\widehat{S}^{(t)})=\mathrm{rank}(\widehat{D}^{(t)}\cdot A^{(t)})\geq \mathrm{rank}(\widehat{D}^{(t)})+\mathrm{rank}(A^{(t)})-t=t-1.
            \]
            In light of the under-determined nature of the system, we will have a set of solutions that has dimension two over $\mathbb{F}_{q^{2n}}$.
        \end{proof}
        To overcome this hurdle, the objective is to incorporate additional information into the system $\widehat{S}^{(t)}\cdot\underline{\Lambda}^{\top}=\underline{0}^{\top}$. Recall the hypothesis that in the case of
            \[
                t=\frac{d-1}{2}=\frac{2n-k}{2},
            \]
        we have $\underline{e}\in\mathbb{F}_{q^{n}}^{2n}\subseteq\mathbb{F}_{q^{2n}}^{2n}$. As a consequence, for a fixed $\mathbb{F}_q$-basis $\underline{\beta}=(\beta_0,\beta_1,\dots,\beta_{n-1})$ of $\mathbb{F}_{q^{n}}$, since
            \[
                wt(\underline{e})=t=n-\frac{k}{2}<n,
            \]
        it follows from Theorem \ref{thm: Rank decomposition} that, $\underline{e}$ can be decomposed as in (\ref{eq: rank decomposition of the error vector}):
        \[
         \underline{e}=\underline{\beta}\cdot AB=(a_0,a_1,\dots,a_{t-1})\cdot B=\underline{a}\cdot B,
        \]
        where $\underline{a}\coloneqq\mathrm{ext}_{\underline{\beta}}^{-1}(A)\in\mathbb{F}_{q^n}^t$, $A\in\mathbb{F}_{q}^{n\times t}$, $B\in\mathbb{F}_{q}^{t\times 2n}$ and $\mathrm{rank}(A)=\mathrm{rank}(B)=t$. Again, there are multiple potential decompositions, yet any of them is suitable for decoding (Proposition \ref{prop: anydecompositionissuitable}). At this stage of the analysis, it is necessary to retrace all the steps of the preceding case. All the considerations made in Section \ref{sec: Syndrome Calculation} remain applicable, with the exception of the fact that $\underline{a}$ is a vector whose entries are constrained in $\mathbb{F}_{q^n}$. To get an overview, if we replace $2n=2t+k$, we have
        \[
            s_0=a(\gamma\xi)^{q^{2t}}+\gamma^{q^{2t}}\underline{a}\cdot\underline{d}^{{q^{2t}}^{\top}},
        \]
        \[
            \begin{cases}
                    s_{2i-1}=\underline{a}\cdot\underline{d}^{{q^i}^{\top}}\\
                    s_{2i}=\gamma s_{2i-1},
            \end{cases}
        \]
        for all $i\in\{1,\dots,2t-3,2t-2,2t-1\}$, and 
        \[
        s_{4t-1}=b\gamma\xi+\underline{a}\cdot\underline{d}^{\top};
        \]
        see (\ref{eqn: s_0 coefficient of syndrome polynomial in TZ codes}), (\ref{eqn: s_i coefficients of syndrome polynomial in TZ codes}) and (\ref{eqn: s_last coefficient of syndrome polynomial in TZ codes}), respectively. \\
        \par Since we are dealing with an $\mathbb{F}_{q^n}$-linear code, we apply the trace function to each entry of the syndrome in order to obtain additional information. For every $j\in\{0,\ldots,4t-1\}$ we define
        \[
        \tilde{s}_{j}\coloneqq\mathrm{Tr}_{q^{2n}/q^n}(s_j).
        \]
        Furthermore, as in Theorem \ref{thm: how we can rewrite the entries of the syndrome thanks to the vector of error locators}, we show how these quantities are related to the vector of error locators $\underline{d}$ and to the vector $\underline{a}$. 
        \begin{theorem}
            \label{thm: obtain more information thanks to the trace}
                Let $H\in\mathbb{F}_{q^{2n}}^{(4n-2k)\times2n}$ be a parity-check matrix of $\mathcal{TZ}_k(\gamma)$ as in (\ref{eq: FqnGenerator and Fqn parity check matrix of TZcodes}). Let $\underline{r}=\underline{\tilde{f}}G+\underline{e}\in\mathbb{F}_{q^{2n}}^{2n}$ be the received word and let $\underline{s}$ be the syndrome of $\underline{r}$ through $H$. If $\underline{d}$ is the vector of error locators, then
                \begin{equation}
                    \label{eqn: trace s_0 coefficient of syndrome polynomial in TZ codes}
                \tilde{s}_0=\underline{a}\cdot \mathrm{Tr}_{q^{2n}/q^n}\left(\gamma^{q^{2t}}\underline{d}^{q^{2t}}\right)^{\top},
                \end{equation}
            \begin{equation}
                \label{eqn: trace s_i coefficients of syndrome polynomial in TZ codes}
                \begin{cases}
                    \tilde{s}_{2i-1}=\underline{a}\cdot \mathrm{Tr}_{q^{2n}/q^n}\left(\underline{d}^{q^i}\right)^{\top}\\
                    \tilde{s}_{2i}=\underline{a}\cdot \mathrm{Tr}_{q^{2n}/q^n}\left(\gamma\underline{d}^{q^i}\right)^{\top},
                \end{cases}
            \end{equation}
            for all $i\in\{1,\dots,2t-1\}$, and 
            \begin{equation}
                \label{eqn: trace s_finale coefficient of syndrome polynomial in TZ codes}
                \tilde{s}_{4t-1}=\underline{a}\cdot \mathrm{Tr}_{q^{2n}/q^n}\left(\underline{d}\right)^{\top}.
            \end{equation}
        \end{theorem}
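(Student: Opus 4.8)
The plan is to obtain each identity by applying the trace map $\mathrm{Tr}_{q^{2n}/q^n}$ entrywise to the syndrome expressions already provided by Theorem~\ref{thm: how we can rewrite the entries of the syndrome thanks to the vector of error locators}, specialized through the substitution $2n-k=2t$ (these specialized formulas are precisely the ones displayed just before the statement). Beyond that, the only structural ingredient is the Case~2 hypothesis $\underline{a}\in\mathbb{F}_{q^n}^t$ and $a,b\in\mathbb{F}_{q^n}$, which is what permits these scalars to be moved outside the trace.

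First I would record two standard properties of $\mathrm{Tr}_{q^{2n}/q^n}$. Being $\mathbb{F}_{q^n}$-linear, it satisfies $\mathrm{Tr}_{q^{2n}/q^n}(\lambda x)=\lambda\,\mathrm{Tr}_{q^{2n}/q^n}(x)$ for all $\lambda\in\mathbb{F}_{q^n}$ and $x\in\mathbb{F}_{q^{2n}}$; and since $\mathrm{Tr}_{q^{2n}/q^n}(x)=x+x^{q^n}$, it commutes with every power of the Frobenius, i.e. $\mathrm{Tr}_{q^{2n}/q^n}(x^{q^j})=\mathrm{Tr}_{q^{2n}/q^n}(x)^{q^j}$. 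Combined with $\mathrm{Tr}_{q^{2n}/q^n}(\gamma\xi)=0$ from Remark~\ref{remark: existence of xi}, the latter yields $\mathrm{Tr}_{q^{2n}/q^n}\big((\gamma\xi)^{q^{2t}}\big)=\mathrm{Tr}_{q^{2n}/q^n}(\gamma\xi)^{q^{2t}}=0$.

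Then I would handle the three groups of entries separately. For $i\in\{1,\dots,2t-1\}$, writing $s_{2i-1}=\sum_{\ell=0}^{t-1}a_\ell d_\ell^{q^i}$ and $s_{2i}=\gamma s_{2i-1}=\sum_{\ell=0}^{t-1}a_\ell\,\gamma d_\ell^{q^i}$, applying $\mathrm{Tr}_{q^{2n}/q^n}$ and pulling each $a_\ell\in\mathbb{F}_{q^n}$ outside the trace gives (\ref{eqn: trace s_i coefficients of syndrome polynomial in TZ codes}). For $\tilde{s}_0$, I would split $s_0=a(\gamma\xi)^{q^{2t}}+\sum_{\ell=0}^{t-1}a_\ell\,\gamma^{q^{2t}}d_\ell^{q^{2t}}$: the first summand has vanishing trace by the computation above, while the trace of the second, after pulling out the $a_\ell$'s, is exactly $\underline{a}\cdot\mathrm{Tr}_{q^{2n}/q^n}(\gamma^{q^{2t}}\underline{d}^{q^{2t}})^{\top}$, which is (\ref{eqn: trace s_0 coefficient of syndrome polynomial in TZ codes}). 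Finally, for $\tilde{s}_{4t-1}=\mathrm{Tr}_{q^{2n}/q^n}(b\gamma\xi+\sum_\ell a_\ell d_\ell)$, the term $b\gamma\xi$ contributes $b\,\mathrm{Tr}_{q^{2n}/q^n}(\gamma\xi)=0$, leaving $\sum_\ell a_\ell\,\mathrm{Tr}_{q^{2n}/q^n}(d_\ell)=\underline{a}\cdot\mathrm{Tr}_{q^{2n}/q^n}(\underline{d})^{\top}$, i.e. (\ref{eqn: trace s_finale coefficient of syndrome polynomial in TZ codes}).

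I do not anticipate a genuine obstacle: the whole argument is bookkeeping built on $\mathbb{F}_{q^n}$-linearity of the trace. The single point deserving care — and the place where the Case~2 hypothesis is genuinely used — is the pull-out step: it is legitimate precisely because $a_\ell,a,b\in\mathbb{F}_{q^n}$, whereas for a generic error vector over $\mathbb{F}_{q^{2n}}$ the coefficients $a_\ell$ would not commute past the trace and the resulting equations would cease to be $\mathbb{F}_{q^n}$-linear in $\underline{a}$. A minor check is that substituting $2n-k=2t$ into Theorem~\ref{thm: how we can rewrite the entries of the syndrome thanks to the vector of error locators} indeed produces the index range $i\in\{1,\dots,2t-1\}$ and the last index $4t-1$.
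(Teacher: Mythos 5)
Your proof is correct and follows essentially the same route as the paper: apply $\mathrm{Tr}_{q^{2n}/q^n}$ to each syndrome entry from Theorem~\ref{thm: how we can rewrite the entries of the syndrome thanks to the vector of error locators} (with $2n-k=2t$), use the $\mathbb{F}_{q^n}$-linearity of the trace to pull out $a,b,a_\ell\in\mathbb{F}_{q^n}$, and kill the $a(\gamma\xi)^{q^{2t}}$ and $b\gamma\xi$ terms via $\mathrm{Tr}_{q^{2n}/q^n}(\gamma\xi)=\mathrm{Tr}_{q^{2n}/q^n}\bigl((\gamma\xi)^{q^{2t}}\bigr)=0$. Your remark pinpointing that the pull-out step is exactly where the Case~2 hypothesis $\underline{a}\in\mathbb{F}_{q^n}^t$ is used matches the paper's reasoning.
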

        \begin{proof}
            See the Appendix.
        \end{proof}
        In order to determine $\underline{a}$, we have to make use of the error span polynomial. However, since $\underline{a}$ is a vector in $\mathbb{F}_{q^n}$, the situation differs slightly from what has previously been investigated.
        \subsection{Determine The Error Span Polynomial for $\mathbf{t=n-\frac{k}{2}}$.} Let
            \begin{equation}
                \label{eq: error span polynomial TZ codes when the error does reach the unique decoding radius}
                    \Lambda(x)\coloneqq\displaystyle\prod_{\vartheta\in\left<a_0,\dots,a_{t-1}\right>_{\mathbb{F}_q}}\left(x-\vartheta\right).
            \end{equation}
            be the error span polynomial. In accordance with \cite[Theorem 3.52]{lidl_finite_fields}, (\ref{eq: error span polynomial TZ codes when the error does reach the unique decoding radius}) is a linearized polynomial $\Lambda(x)=\Lambda_0x+\Lambda_1x^{q}+\dots+\Lambda_tx^{q^t}$ with coefficients $\Lambda_i$ in the subfield of linearity $\mathbb{F}_{q^{n}}$ for all $i\in\{0,1,\dots,t\}$ and, for all $\ell\in\{0,1,\dots,t-1\}$ we have that
            \[
            0=\Lambda(a_{\ell})=\Lambda_0a_{\ell}+\Lambda_1a_{\ell}^{q}+\dots+\Lambda_ta_{\ell}^{q^t}.
            \]
            Therefore, if we define 
            \[
                \underline{a}_{\ell}\coloneqq(a_{\ell},a_{\ell}^{q},a_{\ell}^{q^2},\dots,a_{\ell}^{q^t})\in\mathbb{F}_{q^{n}}^{t+1}
            \]
            for all $\ell\in\{0,1,\dots,t-1\}$, (\ref{eq: a_l are roots}) means that
            \begin{equation}
                \label{eq: perp F_{q^n}}
                (\Lambda_0,\Lambda_1,\Lambda_2,\dots,\Lambda_t)\in[\underline{a}_0,\underline{a}_1,\dots,\underline{a}_{t-1}]^{\perp}_{\mathbb{F}_{q^{n}}}.
            \end{equation}
            Note that since $\underline{a}_0,\underline{a}_1,\dots,\underline{a}_{t-1}\in\mathbb{F}_{q^n}^{t+1}\subseteq\mathbb{F}_{q^{2n}}^{t+1}$ are $t$ linearly independent vectors in $\mathbb{F}_{q^n}^{t+1}$ and hence in $\mathbb{F}_{q^{2n}}^{t+1}$, (\ref{eq: perp F_{q^n}}) leads to the following condition, which will be crucial in the algorithm
            \[
            (\Lambda_0,\Lambda_1,\Lambda_2,\dots,\Lambda_t)\in[\underline{a}_0,\underline{a}_1,\dots,\underline{a}_{t-1}]^{\perp}_{\mathbb{F}_{q^{2n}}}.
            \]
            At this point, since we can construct the same vectors as in (\ref{eq: link among vectors s_{2(t+j)-1} and vectors a_l as sum}) and the considerations in (\ref{eq: condition 1 system to gain error span polynomial coefficients}) still apply, we obtain the system in (\ref{eq:system of equations to obtain the error span polynomial when the error does not reach the unique decoding radius}) in the case where $2n=2t+k$, that is
            \begin{equation}
                \label{eq:system of equations to obtain the error span polynomial when t reaches the unique decoding radius}
                \begin{pNiceMatrix}
                    s_{2(t+1)-1} & \dots & s_3^{q^{t-1}} & s_1^{q^t}  \\
                    s_{2(t+2)-1} & \dots & s_5^{q^{t-1}} & s_3^{q^t}  \\
                    \vdots & \ddots & \vdots & \vdots \\
                    s_{2(2t-1)-1} & \dots & s_{2t-1}^{q^{t-1}} & s_{2(t-1)-1}^{q^t}
                    \end{pNiceMatrix}\begin{pNiceMatrix}
                                                    \Lambda_0 \\
                                                    \Lambda_1 \\
                                                    \vdots \\
                                                    \Lambda_t
                                                \end{pNiceMatrix}=\underline{0}^{\top},
            \end{equation}
            i.e., a system of $t-1$ equations in $t+1$ unknowns. As previously stated, further information is necessary to proceed in order to determine the coefficients of the error span polynomial up to a scalar factor. The $\mathbb{F}_{q^n}$-linearity of the code in exam allows the use of the trace function to add new linearly independent conditions to (\ref{eq:system of equations to obtain the error span polynomial when t reaches the unique decoding radius}). Specifically, the coefficient matrix of (\ref{eq:system of equations to obtain the error span polynomial when t reaches the unique decoding radius}) can be expanded into the following matrix
            \[
                S_{\mathrm{exp}}\coloneqq\begin{pNiceMatrix}
                    s_{2(t+1)-1} & \dots & s_3^{q^{t-1}} & s_1^{q^t}  \\
                    \vdots & \ddots & \vdots & \vdots \\
                    s_{2(2t-1)-1} & \dots & s_{2t-1}^{q^{t-1}} & s_{2(t-1)-1}^{q^t}\\
                    \tilde{s}_{2t-1} &\dots&\tilde{s}_{1}^{q^{t-1}}&\tilde{s}_{4t-1}^{q^t}\\
                    \tilde{s}_{2t+1} & \dots & \tilde{s}_3^{q^{t-1}} & \tilde{s}_1^{q^t}  \\
                    \vdots & \ddots & \vdots & \vdots \\
                    \tilde{s}_{2(2t-1)-1} & \dots & \tilde{s}_{2t-1}^{q^{t-1}} & \tilde{s}_{2(t-1)-1}^{q^t}\\
                    \tilde{s}_0 & \dots & \mathrm{Tr}_{q^{2n}/q^n}\left(\gamma^{q^{2t}}s_{2t-3}^{q^{t-1}}\right) & \mathrm{Tr}_{q^{2n}/q^n}\left(\gamma^{q^{2t}}s_{2t-1}^{q^t}\right)
                    \end{pNiceMatrix},
            \]
            which constitutes the coefficient matrix of a system whose solution space yields the coefficients of the error span polynomial up to a scalar factor even in the case $2t+k=2n$.
            \begin{proposition}
                Let $k$ be even and $t=n-k/2$. Consider a parity-check matrix $H\in\mathbb{F}_{q^{2n}}^{(4n-2k)\times2n}$ of $\mathcal{TZ}_k(\gamma)$ as in (\ref{eq: FqnGenerator and Fqn parity check matrix of TZcodes}) and let $\underline{s}$ be the syndrome of the received word $\underline{r}$ through $H$. If $\Lambda_0,\Lambda_1,\ldots,\Lambda_t$ are the coefficients of the error span polynomial in (\ref{eq: error span polynomial TZ codes when the error does reach the unique decoding radius}), then
                \begin{equation}
                    \label{eq:system of equations to obtain the error span polynomial for TZ codes}
                    S_{\mathrm{exp}}\cdot\underline{\Lambda}^{\top}=\underline{0}^{\top}.
                \end{equation}
            \end{proposition}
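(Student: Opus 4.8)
The plan is to verify that the coefficient vector $\underline{\Lambda}=(\Lambda_0,\Lambda_1,\dots,\Lambda_t)$ of the error span polynomial in (\ref{eq: error span polynomial TZ codes when the error does reach the unique decoding radius}) lies in the right kernel of $S_{\mathrm{exp}}$, i.e. that each row of $S_{\mathrm{exp}}$, paired with $\underline{\Lambda}$ through the standard bilinear form on $\mathbb{F}_{q^{2n}}^{t+1}$, returns $0$. The unifying principle is that, since $a_0,\dots,a_{t-1}$ are roots of $\Lambda(x)$, setting $\underline{a}_{\ell}\coloneqq(a_{\ell},a_{\ell}^{q},\dots,a_{\ell}^{q^t})$ one has $\langle\underline{\Lambda},\underline{a}_{\ell}\rangle=\Lambda(a_{\ell})=0$ for every $\ell$; hence it suffices to exhibit every row of $S_{\mathrm{exp}}$ as an $\mathbb{F}_{q^{2n}}$-linear combination of $\underline{a}_0,\dots,\underline{a}_{t-1}$ (or, when this fails for a single row, to check the corresponding scalar equation directly). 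Throughout I would use three facts: $\Lambda(a_\ell)=0$; that $a_\ell\in\mathbb{F}_{q^n}$, so that $a_\ell^{q^i}\in\mathbb{F}_{q^n}$ and $\Lambda_i\in\mathbb{F}_{q^n}$; and that $\mathrm{Tr}_{q^{2n}/q^n}$ commutes with the $q$-power Frobenius.

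For the first $t-1$ rows, which are precisely the rows of $\widehat{S}^{(t)}$, there is nothing new to do: by (\ref{eqn: s_i coefficients of syndrome polynomial in TZ codes}), valid for all indices up to $2n-k-1=2t-1$, the $j$-th such row equals $\sum_{\ell}d_{\ell}^{q^{t+j}}\underline{a}_{\ell}$, exactly as in (\ref{eq: link among vectors s_{2(t+j)-1} and vectors a_l as sum}); this was already the content of the derivation of (\ref{eq:system of equations to obtain the error span polynomial when t reaches the unique decoding radius}). For the next $t$ rows, formed from the trace coordinates $\tilde{s}_i$, I would invoke Theorem \ref{thm: obtain more information thanks to the trace}: starting from $\tilde{s}_{2i-1}=\sum_{\ell}a_{\ell}\,\mathrm{Tr}_{q^{2n}/q^n}(d_{\ell}^{q^i})$ (together with (\ref{eqn: trace s_finale coefficient of syndrome polynomial in TZ codes}) for the wrap-around coordinate), and pushing the trace and the Frobenius through using $a_{\ell}^{q^i}\in\mathbb{F}_{q^n}$, the $i$-th entry of the $j$-th such row simplifies to $\sum_{\ell}\mathrm{Tr}_{q^{2n}/q^n}(d_{\ell}^{q^{t+j}})\,a_{\ell}^{q^i}$, where the scalar $\mathrm{Tr}_{q^{2n}/q^n}(d_{\ell}^{q^{t+j}})\in\mathbb{F}_{q^n}$ is independent of the column $i$; hence that row equals $\sum_{\ell}\mathrm{Tr}_{q^{2n}/q^n}(d_{\ell}^{q^{t+j}})\,\underline{a}_{\ell}$, which annihilates $\underline{\Lambda}$.

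It remains to treat the last row. Using (\ref{eqn: trace s_0 coefficient of syndrome polynomial in TZ codes}) one has $\tilde{s}_0=\sum_{\ell}a_{\ell}\,\mathrm{Tr}_{q^{2n}/q^n}(\gamma^{q^{2t}}d_{\ell}^{q^{2t}})$, the $a$-term of $s_0$ dropping out because $\mathrm{Tr}_{q^{2n}/q^n}(\gamma\xi)=0$, while the remaining entries are $\gamma^{q^{2t}}$-weighted Frobenius powers of the central coordinates $s_{2i-1}$, which unfold via (\ref{eqn: s_i coefficients of syndrome polynomial in TZ codes}) into $\mathbb{F}_{q^n}$-weighted sums of the $a_{\ell}^{q^i}$ once the trace and the Frobenius are distributed (again exploiting $a_{\ell}^{q^i}\in\mathbb{F}_{q^n}$). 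Pairing this row with $\underline{\Lambda}$ and pulling the coefficients $\Lambda_i\in\mathbb{F}_{q^n}$ inside the trace collapses the whole expression to $\mathrm{Tr}_{q^{2n}/q^n}(\gamma^{q^{2t}}\sum_{\ell}Z_\ell)$ with each $Z_\ell$ a $\Lambda$-combination of powers of $a_\ell$ and $d_\ell$, and the expected main obstacle is the bookkeeping needed to see that $\sum_\ell Z_\ell$ reduces — after matching the Frobenius exponents carried by $\gamma$ and by the error locators $d_{\ell}$ across the columns, which is precisely what the exponents $q^{2t},q^{t-1},q^{t},\dots$ in the definition of $S_{\mathrm{exp}}$ are engineered to accomplish — to a sum of multiples of $\Lambda(a_{\ell})=0$. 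I would carry this out by termwise substitution of Theorems \ref{thm: how we can rewrite the entries of the syndrome thanks to the vector of error locators} and \ref{thm: obtain more information thanks to the trace} and straightforward simplification; once it is done, $S_{\mathrm{exp}}\cdot\underline{\Lambda}^{\top}=\underline{0}^{\top}$ follows.
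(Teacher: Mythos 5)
Your proposal is correct and follows essentially the same route as the paper: each row of $S_{\mathrm{exp}}$ is shown to lie in $\langle\underline{a}_0,\dots,\underline{a}_{t-1}\rangle_{\mathbb{F}_{q^{2n}}}$, which annihilates $\underline{\Lambda}$ because the $a_\ell$ are roots of $\Lambda$. The paper handles the last row a touch more crisply than your sketch, factoring $\underline{\bar{s}}_0=\sum_{\ell}\mathrm{Tr}_{q^{2n}/q^n}\bigl((\gamma d_\ell)^{q^{2t}}\bigr)\underline{a}_\ell$ directly (the trace coefficient being a column-independent $\mathbb{F}_{q^n}$ scalar since $a_\ell^{q^i}\in\mathbb{F}_{q^n}$), which bypasses the bookkeeping you anticipate; but the underlying computation is the one you describe.
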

            \begin{proof}
                Employing (\ref{eqn: trace s_0 coefficient of syndrome polynomial in TZ codes}), (\ref{eqn: trace s_i coefficients of syndrome polynomial in TZ codes}) and (\ref{eqn: trace s_finale coefficient of syndrome polynomial in TZ codes}), we derive $t-1$ systems that resemble those depicted in (\ref{eq: link between s_{2(t+j)-1} and vectors a_l}):
            \[
                \left\{
                \begin{aligned}
                    &\tilde{s}_{2(t+j)-1}=\underline{a}\cdot \mathrm{Tr}_{q^{2n}/q^n}\left(\underline{d}^{q^{t+j}}\right)^{\top}\\
                    &\tilde{s}_{2(t+j-1)-1}^{q}=\underline{a}^{q}\cdot \mathrm{Tr}_{q^{2n}/q^n}\left(\underline{d}^{q^{t+j}}\right)^{\top}\\
                    &\quad\vdots \\
                    &\tilde{s}_{2j-1}^{q^t}=\underline{a}^{q^t}\cdot \mathrm{Tr}_{q^{2n}/q^n}\left(\underline{d}^{q^{t+j}}\right)^{\top},
                \end{aligned}
                \right.
            \]
            for all $j\in\{1,\dots,t-1\}$. Hence, if we set
            \[
            \underline{\tilde{s}}_j =(\tilde{s}_{2(t+j)-1}, \tilde{s}_{2(t+j-1)-1}^{q}, \dots, \tilde{s}_{2j-1}^{q^t}),
            \]
            for all $j \in \{1,\dots,t-1\}$, then
            \begin{align*}
                \underline{\tilde{s}}_j &= (\tilde{s}_{2(t+j)-1}, \tilde{s}_{2(t+j-1)-1}^{q}, \dots, \tilde{s}_{2j-1}^{q^t}) \notag \\
                &= \left(\sum_{\ell=0}^{t-1} a_{\ell} \left(d_{\ell}^{q^{t+j}} + d_{\ell}^{q^{t+j+n}}\right),\dots, \sum_{\ell=0}^{t-1} a_{\ell}^{q^t} \left(d_{\ell}^{q^{t+j}} + d_{\ell}^{q^{t+j+n}}\right)\right) \notag \\
                &= \sum_{\ell=0}^{t-1} \mathrm{Tr}_{q^{2n}/q^n}\left(d_{\ell}^{q^{t+j}}\right) \underline{a}_{\ell}\notag\\
                &=\begin{pNiceMatrix}
                        \mathrm{Tr}_{q^{2n}/q^n}\left(d_0^{q^{t+j}}\right) & \mathrm{Tr}_{q^{2n}/q^n}\left(d_1^{q^{t+j}}\right) & \dots & \mathrm{Tr}_{q^{2n}/q^n}\left(d_{t-1}^{q^{t+j}}\right) \end{pNiceMatrix}A^{(t)}.
            \end{align*}
            Let us consider now
            \begin{align*}
                \underline{\tilde{s}}_0&=(\tilde{s}_{2t-1},\dots,\tilde{s}_{1}^{q^{t-1}},\tilde{s}_{4t-1}^{q^t})\notag \\
                &=\left(\displaystyle\sum_{\ell=0}^{t-1}a_{\ell}d_{\ell}^{q^t},\dots,\displaystyle\sum_{\ell=0}^{t-1}a_{\ell}^{q^t}d_{\ell}^{q^t}\right)+\left(\displaystyle\sum_{\ell=0}^{t-1}a_{\ell}d_{\ell}^{q^{t+n}},\dots,\displaystyle\sum_{\ell=0}^{t-1}a_{\ell}^{q^t}d_{\ell}^{q^{t+n}}\right)\notag \\
                &=\displaystyle\sum_{\ell=0}^{t-1}\mathrm{Tr}_{q^{2n}/q^n}\left(d_{\ell}^{q^t}\right)\underline{a}_{\ell}\notag\\
                &=\begin{pNiceMatrix}
                        \mathrm{Tr}_{q^{2n}/q^n}\left(d_0^{q^t}\right) & \mathrm{Tr}_{q^{2n}/q^n}\left(d_1^{q^t}\right) & \dots & \mathrm{Tr}_{q^{2n}/q^n}\left(d_{t-1}^{q^t}\right) \end{pNiceMatrix}A^{(t)}
            \end{align*}
            and
            \begin{align*}
                \underline{\bar{s}}_0&=\left(\tilde{s}_0,\mathrm{Tr}_{q^{2n}/q^n}\left(\gamma^{q^{2t}}s_{4t-3}^{q}\right),\dots,\mathrm{Tr}_{q^{2n}/q^n}\left(\gamma^{q^{2t}}s_{2t-1}^{q^t}\right)\right)\notag \\
                &=\left(\displaystyle\sum_{\ell=0}^{t-1}a_{\ell}\mathrm{Tr}_{q^{2n}/q^n}\left(\left(\gamma d_{\ell}\right)^{q^{2t}}\right),\dots,\displaystyle\sum_{\ell=0}^{t-1}a_{\ell}^{q^t}\mathrm{Tr}_{q^{2n}/q^n}\left(\left(\gamma d_{\ell}\right)^{q^{2t}}\right)\right)\notag \\
                &=\displaystyle\sum_{\ell=0}^{t-1}\mathrm{Tr}_{q^{2n}/q^n}\left(\left(\gamma d_{\ell}\right)^{q^{2t}}\right)\underline{a}_{\ell}\notag\\
                &=\begin{pNiceMatrix}
                        \mathrm{Tr}_{q^{2n}/q^n}\left(\gamma d_0^{q^{2t}}\right) & \mathrm{Tr}_{q^{2n}/q^n}\left(\gamma d_1^{q^{2t}}\right) & \dots & \mathrm{Tr}_{q^{2n}/q^n}\left(\gamma d_{t-1}^{q^{2t}}\right) \end{pNiceMatrix}A^{(t)}.
            \end{align*}
            Consequently
            \[
            \underline{\tilde{s}}_{0},\underline{\tilde{s}}_{1},\dots,\underline{\tilde{s}}_{t-1},\underline{\bar{s}}_{0}\in[\underline{a}_0,\underline{a}_1,\dots,\underline{a}_{t-1}]_{\mathbb{F}_{q^{2n}}}=[(\Lambda_0,\Lambda_1,\dots,\Lambda_t)]_{\mathbb{F}_{q^{2n}}}^{\perp}.
            \]
            The latter consideration enables us to expand (\ref{eq:system of equations to obtain the error span polynomial when t reaches the unique decoding radius}) as the system (\ref{eq:system of equations to obtain the error span polynomial for TZ codes}) of $2t$ equations in $t+1$ unknowns.
            \end{proof}
            \begin{theorem}
                \label{corollary: The dimension of the solution space of S_exp is 1}
               The dimension of the solution space of $S_{\mathrm{exp}}\cdot\underline{\Lambda}^{\top}=\underline{0}^{\top}$ is one. 
            \end{theorem}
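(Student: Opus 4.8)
The plan is to show $\mathrm{rank}(S_{\mathrm{exp}})=t$; since $S_{\mathrm{exp}}$ has $t+1$ columns, its right kernel — the solution space in question — then has dimension $(t+1)-t=1$. Unwinding the proof of the previous proposition, one sees that $S_{\mathrm{exp}}=M\cdot A^{(t)}$, where $A^{(t)}\in\mathbb{F}_{q^{2n}}^{t\times(t+1)}$ is the $q$-Vandermonde matrix on $a_0,\dots,a_{t-1}$, which has rank $t$ by Lemma~\ref{lemma: determinant of q-Vandermonde matrix} (the $a_\ell$ are $\mathbb{F}_q$-independent), and $M\in\mathbb{F}_{q^{2n}}^{2t\times t}$ is the matrix whose rows are, in order, $\underline{D}^{(t+1)},\dots,\underline{D}^{(2t-1)}$ (inherited from $\widehat S^{(t)}$), then $\underline{T}_0,\underline{T}_1,\dots,\underline{T}_{t-1}$, then $\underline{W}$, where I abbreviate $\underline{D}^{(m)}:=(d_0^{q^m},\dots,d_{t-1}^{q^m})$, $\underline{T}_j:=\big(\mathrm{Tr}_{q^{2n}/q^n}(d_\ell^{q^{t+j}})\big)_{\ell}=\underline{D}^{(t+j)}+\underline{D}^{(t+j+n)}$ and $\underline{W}:=\big(\mathrm{Tr}_{q^{2n}/q^n}((\gamma d_\ell)^{q^{2t}})\big)_{\ell}=\gamma^{q^{2t}}\underline{D}^{(2t)}+\gamma^{q^{2t+n}}\underline{D}^{(2t+n)}$. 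Since the columns of $A^{(t)}$ span $\mathbb{F}_{q^{2n}}^{t}$, one has $\mathrm{rank}(S_{\mathrm{exp}})=\mathrm{rank}(M)$, so it suffices to prove that the $2t$ rows of $M$ span $\mathbb{F}_{q^{2n}}^{t}$.

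Next I would bring in the minimal subspace polynomial $L(x)=\sum_{i=0}^{t}L_ix^{q^i}$ (with $L_t=1$) of the $\mathbb{F}_q$-space $U:=\langle d_0,\dots,d_{t-1}\rangle_{\mathbb{F}_q}$; by Proposition~\ref{prop: the error locators are linearly independent} the $d_\ell$ are $\mathbb{F}_q$-independent, so $\ker L=U$ has $\mathbb{F}_q$-dimension $t$. Three ingredients are needed: (i) $L_0\ne 0$ and, moreover, $\mathrm{N}_{q^{2n}/q}(L_0)=(-1)^{2nt}\mathrm{N}_{q^{2n}/q}(L_t)=1$ by the root bound for linearized polynomials recalled in Section~\ref{sec: Preliminaries}; (ii) from $L(d_\ell)=0$, raising to the $q^{m}$-th power yields the relation $\underline{D}^{(m+t)}=-\sum_{i=0}^{t-1}L_i^{q^m}\underline{D}^{(m+i)}$, valid for every integer $m$; (iii) by Lemma~\ref{lemma: determinant of q-Vandermonde matrix}, any $t$ consecutive vectors $\underline{D}^{(m)},\dots,\underline{D}^{(m+t-1)}$ form a basis of $\mathbb{F}_{q^{2n}}^{t}$ and any $t-1$ consecutive ones are independent; in particular $P:=\langle\underline{D}^{(t+1)},\dots,\underline{D}^{(2t-1)}\rangle_{\mathbb{F}_{q^{2n}}}$ is a hyperplane of $\mathbb{F}_{q^{2n}}^{t}$ with $\mathbb{F}_{q^{2n}}^{t}=\langle\underline{D}^{(t)}\rangle\oplus P$. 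As $P$ is already spanned by $t-1$ rows of $M$, it is enough to exhibit one row of $M$ outside $P$.

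Assume, for contradiction, that every row of $M$ lies in $P$. From $\underline{T}_j\in P$ with $j\ge 1$ and $\underline{D}^{(t+j)}\in P$ we get $\underline{D}^{(t+j+n)}\in P$ for $j=1,\dots,t-1$, and from $\underline{T}_0\in P$ we get $\underline{D}^{(t+n)}\equiv-\underline{D}^{(t)}\pmod P$. Reducing modulo $P$ and identifying $\mathbb{F}_{q^{2n}}^{t}/P$ with $\mathbb{F}_{q^{2n}}$ via $\underline{D}^{(t)}\mapsto 1$, relation (ii) with $m=t$ gives $\underline{D}^{(2t)}\equiv-L_0^{q^t}\underline{D}^{(t)}$, and relation (ii) with $m=t+n$ together with $\underline{D}^{(t+n+i)}\in P$ for $1\le i\le t-1$ gives $\underline{D}^{(2t+n)}\equiv-L_0^{q^{t+n}}\underline{D}^{(t+n)}\equiv L_0^{q^{t+n}}\underline{D}^{(t)}$. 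Hence $\underline{W}\equiv\big(\gamma^{q^{2t+n}}L_0^{q^{t+n}}-\gamma^{q^{2t}}L_0^{q^t}\big)\underline{D}^{(t)}\pmod P$, and $\underline{W}\in P$ forces the scalar to vanish. Setting $z:=\gamma^{q^{2t}}L_0^{q^t}$, which is nonzero because $\gamma,L_0\ne 0$, this says $z^{q^n}=\gamma^{q^{2t+n}}L_0^{q^{t+n}}=z$, i.e.\ $z\in\mathbb{F}_{q^n}$. Then $\mathrm{N}_{q^{2n}/q}(z)=\mathrm{N}_{q^n/q}(z\cdot z^{q^n})=\mathrm{N}_{q^n/q}(z)^2$ is a square in $\mathbb{F}_q$; on the other hand, since norms into $\mathbb{F}_q$ are fixed by the Frobenius and $\mathrm{N}_{q^{2n}/q}(L_0)=1$ by (i), $\mathrm{N}_{q^{2n}/q}(z)=\mathrm{N}_{q^{2n}/q}(\gamma)^{q^{2t}}\,\mathrm{N}_{q^{2n}/q}(L_0)^{q^t}=\mathrm{N}_{q^{2n}/q}(\gamma)$, which is a non-square by hypothesis on $\gamma$ — a contradiction. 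Therefore some row of $M$ lies outside $P$, so $\mathrm{rank}(M)=\mathrm{rank}(S_{\mathrm{exp}})=t$ and the solution space of $S_{\mathrm{exp}}\cdot\underline{\Lambda}^{\top}=\underline{0}^{\top}$ is one-dimensional.

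The hard part is the last paragraph: a priori all the trace-derived rows of $M$ could collapse into the $(t-1)$-dimensional space $P$ spanned by the rows coming from $\widehat S^{(t)}$, and the only thing preventing this is the interplay between $\mathrm{N}_{q^{2n}/q}(\gamma)$ being a non-square and the normalization $\mathrm{N}_{q^{2n}/q}(L_0)=1$ forced by the root bound for linearized polynomials; the remainder is mostly a matter of keeping the Frobenius exponents on $\gamma$ and $L_0$ straight while reducing modulo $P$.
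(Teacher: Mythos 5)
Your proof is correct, and it takes a genuinely different route from the paper's, so let me compare the two.

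Both arguments begin identically: factor $S_{\mathrm{exp}}=D_{\mathrm{exp}}\cdot A^{(t)}$, note that $A^{(t)}$ has full row rank $t$ (so $\mathrm{rank}(S_{\mathrm{exp}})=\mathrm{rank}(D_{\mathrm{exp}})$), and reduce to showing $\mathrm{rank}(D_{\mathrm{exp}})=t$ by assuming all its rows — in particular the trace-derived rows — lie in the hyperplane $P=\bigl\langle\underline{D}^{(t+1)},\ldots,\underline{D}^{(2t-1)}\bigr\rangle_{\mathbb{F}_{q^{2n}}}$. From there the two proofs diverge completely. The paper proceeds by an explicit case analysis: it asks whether $\underline{d}^{q^t}$ is $\mathbb{F}_{q^{2n}}$-proportional to $\underline{d}^{q^{t+n}}$, splits the proportional case into three subcases depending on the value of the ratio $\rho$ (with $\rho=-1$ and $\rho=-\gamma^{q^{t+n}}/\gamma^{q^t}$ handled separately because of the last row), and handles the non-proportional case by a recursive descent that eventually reproduces one of the earlier contradictions. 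Throughout, the only property of $\gamma$ that enters is $\gamma\notin\mathbb{F}_{q^n}$, a weak consequence of the non-square hypothesis. Your proof instead introduces the minimal $q$-polynomial $L(x)$ of $\langle d_0,\ldots,d_{t-1}\rangle_{\mathbb{F}_q}$, uses the tightness of the root bound for linearized polynomials to pin down $\mathrm{N}_{q^{2n}/q}(L_0)=1$, and then reduces the $L$-recurrence modulo $P$ at exponents $t$ and $t+n$; the assumption that the final row of $D_{\mathrm{exp}}$ lies in $P$ collapses to the scalar identity $\gamma^{q^{2t}}L_0^{q^t}\in\mathbb{F}_{q^n}$, whose norm into $\mathbb{F}_q$ is then forced to be a square — directly contradicting the hypothesis that $\mathrm{N}_{q^{2n}/q}(\gamma)$ is a non-square. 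I verified the Frobenius bookkeeping: with $\underline{D}^{(t)}\mapsto 1$, $\underline{T}_0\in P$ gives $\underline{D}^{(t+n)}\equiv-\underline{D}^{(t)}$, the recurrence at $m=t$ gives $\underline{D}^{(2t)}\equiv-L_0^{q^t}$, at $m=t+n$ gives $\underline{D}^{(2t+n)}\equiv L_0^{q^{t+n}}$, and $\underline{W}\in P$ then gives $\gamma^{q^{2t+n}}L_0^{q^{t+n}}=\gamma^{q^{2t}}L_0^{q^t}$, exactly as you state. The trade-off is this: your approach is shorter, avoids all subcasing, and makes transparent why the non-square hypothesis on $\mathrm{N}_{q^{2n}/q}(\gamma)$ is exactly the right one (it must clash against the square norm of an element of $\mathbb{F}_{q^n}$ and the norm-one normalization coming from the root bound), whereas the paper's argument is more elementary — it never invokes the norm condition on the minimal subspace polynomial — at the cost of a longer and less structural case analysis.
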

            \noindent To prove the latter, we will exploit the same procedure used in Lemma \ref{lemma: Rank of Syndrome Matrix}, that is: we decompose the coefficient matrix $S_{\mathrm{exp}}$ of the homogeneous linear system (\ref{eq:system of equations to obtain the error span polynomial for TZ codes}) as
            \begin{equation}
                \label{eq: decomposition of S_exp}
                S_{\mathrm{exp}}=D_{\mathrm{exp}}\cdot A^{(t)},
            \end{equation}
            where
            \begin{equation}
                \label{eq: Dexp}
                D_{\mathrm{exp}}\coloneqq\begin{pNiceMatrix}
                        \underline{d}^{q^{t+1}}\\
                        \vdots\\
                        \underline{d}^{q^{2t-1}}\\
                        \mathrm{Tr}_{q^{2n}/q^n}\left(\underline{d}^{q^t}\right)\\
                        \mathrm{Tr}_{q^{2n}/q^n}\left(\underline{d}^{q^{t+1}}\right)\\
                        \vdots
                        \\
                        \mathrm{Tr}_{q^{2n}/q^n}\left(\underline{d}^{q^{2t-1}}\right)\\
                        \mathrm{Tr}_{q^{2n}/q^n}\left(\gamma^{q^{2t}}\underline{d}^{q^{2t}}\right)
                    \end{pNiceMatrix}\in\mathbb{F}_{q^{2n}}^{2t\times t}
            \end{equation}
            and
            \[
            A^{(t)}=\begin{pNiceMatrix}
                                            a_0 & a_{0}^{q} & \dots & a_{0}^{q^t} \\
                                            a_{1} & a_{1}^{q} & \dots & a_{1}^{q^t} \\
                                            \vdots & \vdots & \ddots & \vdots \\
                                            a_{t-1} & a_{t-1}^{q} & \dots & a_{t-1}^{q^t} 
                                        \end{pNiceMatrix}\in\mathbb{F}_{q^{n}}^{t\times (t+1)}\subseteq\mathbb{F}_{q^{2n}}^{t\times (t+1)}.
            \]
            Since $\{d_0,d_1,\dots,d_{t-1}\}\subseteq\mathbb{F}_{q^{2n}}$ and $\{a_0,a_1,\dots,a_{t-1}\}\subseteq\mathbb{F}_{q^{n}}$ are sets of elements which are linearly independent over $\mathbb{F}_q$, by Lemma \ref{lemma: determinant of q-Vandermonde matrix} $A^{(t)}\in\mathbb{F}_{q^{n}}^{t\times(t+1)}\subseteq\mathbb{F}_{q^{2n}}^{t\times(t+1)}$ is a matrix of rank $t$ over $\mathbb{F}_{q^{2n}}$ while $D_{\mathrm{exp}}\in\mathbb{F}_{q^{2n}}^{2t\times t}$ is a matrix of at least rank $t-1$. We now want to show that $D_{\mathrm{exp}}$ has rank $t$. For ease of notation, in the following theorem we set $[i]\coloneqq q^i$.
            \begin{theorem}
                The matrix $D_{\mathrm{exp}}$ in (\ref{eq: Dexp}) has rank $t$.
            \end{theorem}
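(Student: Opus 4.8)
The plan is to argue by contradiction, using the fact observed just above that the submatrix $\widehat D^{(t)}$ consisting of the first $t-1$ rows $\underline{d}^{q^{t+1}},\dots,\underline{d}^{q^{2t-1}}$ of $D_{\mathrm{exp}}$ already has $\mathbb{F}_{q^{2n}}$-rank $t-1$. If $\mathrm{rank}(D_{\mathrm{exp}})$ were $t-1$, its row space would coincide with $V\coloneqq\mathrm{Rowspan}_{\mathbb{F}_{q^{2n}}}(\widehat D^{(t)})$, a $(t-1)$-dimensional subspace of $\mathbb{F}_{q^{2n}}^t$, so its orthogonal complement $V^{\perp}$ (for the standard bilinear form) is spanned by a single nonzero vector $\underline u^{*}=(u_0^{*},\dots,u_{t-1}^{*})$, and \emph{every} row of $D_{\mathrm{exp}}$ is orthogonal to $\underline u^{*}$. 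Introducing the scalars $h_m\coloneqq\sum_{\ell=0}^{t-1}d_\ell^{q^{t+m}}u_\ell^{*}$ (so $h_{m+2n}=h_m$), the condition $\underline u^{*}\in V^{\perp}$ reads $h_1=\dots=h_{t-1}=0$; orthogonality of the rows $\mathrm{Tr}_{q^{2n}/q^n}(\underline{d}^{q^{t+j}})$ for $0\le j\le t-1$ expands, via $\mathrm{Tr}_{q^{2n}/q^n}(x)=x+x^{q^n}$, to $h_j+h_{j+n}=0$; and orthogonality of the last row $\mathrm{Tr}_{q^{2n}/q^n}(\gamma^{q^{2t}}\underline{d}^{q^{2t}})$ gives $\gamma^{q^{2t}}h_t+\gamma^{q^{2t+n}}h_{t+n}=0$.

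The next step is to bring in the linear recurrence satisfied by the $h_m$. Let $N(x)=\sum_{i=0}^{t}N_ix^{q^i}$ be the subspace polynomial of $W\coloneqq\langle d_0,\dots,d_{t-1}\rangle_{\mathbb{F}_q}$; by \cite[Theorem~3.52]{lidl_finite_fields} it is a monic linearized polynomial of $q$-degree $t$ (so $N_t=1$) with $N_0=\prod_{w\in W\setminus\{0\}}(-w)\ne 0$ and $N(d_\ell)=0$ for all $\ell$. Applying $x\mapsto x^{q^{t+j}}$ to $N(d_\ell)=0$, multiplying by $u_\ell^{*}$ and summing over $\ell$ yields
\[
\sum_{i=0}^{t}N_i^{q^{t+j}}h_{j+i}=0\qquad\text{for every }j .
\]
Evaluating this at $j=0$ with $h_1=\dots=h_{t-1}=0$ gives $h_t=-N_0^{q^t}h_0$; the relations $h_j+h_{j+n}=0$ for $1\le j\le t-1$ give $h_{n+1}=\dots=h_{n+t-1}=0$, so evaluating the recurrence at $j=n$ and using $h_0+h_n=0$ gives $h_{t+n}=-N_0^{q^{t+n}}h_n=N_0^{q^{t+n}}h_0$. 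Substituting into $\gamma^{q^{2t}}h_t+\gamma^{q^{2t+n}}h_{t+n}=0$ and writing $A\coloneqq\gamma^{q^{2t}}N_0^{q^t}$ (so that $\gamma^{q^{2t+n}}N_0^{q^{t+n}}=A^{q^n}$), one obtains
\[
h_0\bigl(A^{q^n}-A\bigr)=0 .
\]

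The main obstacle is the remaining claim that $A\notin\mathbb{F}_{q^n}$, i.e.\ $A^{q^n}\ne A$; granting it, $h_0=0$, hence $(h_0,\dots,h_{t-1})=\underline 0$, hence $\underline u^{*}=\underline 0$ by the invertibility of the $q$-Vandermonde matrix $(d_\ell^{q^{t+j}})_{0\le j,\ell\le t-1}$ (Lemma~\ref{lemma: determinant of q-Vandermonde matrix}), contradicting $\underline u^{*}\ne\underline 0$; so $\mathrm{rank}(D_{\mathrm{exp}})\ne t-1$, and being at least $t-1$ with only $t$ columns it equals $t$. This is exactly where the hypothesis on $\gamma$ enters. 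Suppose $A\in\mathbb{F}_{q^n}$; since $\mathbb{F}_{q^n}$ is Frobenius-stable, $\gamma^{q^t}N_0=A^{q^{-t}}\in\mathbb{F}_{q^n}^{*}$, so $\gamma=\beta\,N_0^{-q^{-t}}$ for some $\beta\in\mathbb{F}_{q^n}^{*}$. Applying $\mathrm N_{q^{2n}/q}$, which is Frobenius-invariant, gives $\mathrm N_{q^{2n}/q}(\gamma)=\mathrm N_{q^{2n}/q}(\beta)\,\mathrm N_{q^{2n}/q}(N_0)^{-1}$, and $\mathrm N_{q^{2n}/q}(\beta)=\mathrm N_{q^n/q}(\beta^2)=\mathrm N_{q^n/q}(\beta)^2$ since $\beta^{q^n}=\beta$. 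It then suffices to check that $\mathrm N_{q^{2n}/q}(N_0)$ is a square in $\mathbb{F}_q^{*}$: partitioning $W\setminus\{0\}$ into its $1$-dimensional $\mathbb{F}_q$-subspaces and using $\mathrm N_{q^{2n}/q}(a)=a^{2n}$ for $a\in\mathbb{F}_q$, the contribution of the line $\langle w\rangle$ is $\prod_{c\in\mathbb{F}_q^{*}}\mathrm N_{q^{2n}/q}(-cw)=\bigl(\prod_{c\in\mathbb{F}_q^{*}}(-c)\bigr)^{2n}\mathrm N_{q^{2n}/q}(w)^{q-1}=\mathrm N_{q^{2n}/q}(w)^{q-1}$ (as $\prod_{c\in\mathbb{F}_q^{*}}(-c)=-1$), which is a square because $q$ is odd. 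Hence $\mathrm N_{q^{2n}/q}(N_0)$ is a square, forcing $\mathrm N_{q^{2n}/q}(\gamma)$ to be a square and contradicting its defining property. Thus $A\notin\mathbb{F}_{q^n}$, which finishes the argument; the delicate point throughout is converting the rank deficiency into the identity $h_0(A^{q^n}-A)=0$ and then recognizing, through the norm computation above, that $A\in\mathbb{F}_{q^n}$ is incompatible with $\mathrm N_{q^{2n}/q}(\gamma)$ being a non-square.
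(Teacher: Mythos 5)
Your proof is correct, but it takes a genuinely different route from the paper's. The paper argues by a case analysis: after reducing the rank-deficiency hypothesis to the inclusion $\mathfrak{D}'\subseteq\mathcal{H}$, it branches on whether $\underline{d}^{[t]}$ is proportional over $\mathbb{F}_{q^{2n}}$ to $\underline{d}^{[t+n]}$ (Case I, with three subcases according to the proportionality constant $\rho$) or not (Case II, an iterative elimination argument), and in each branch derives that some $t$ of the $q$-Vandermonde-type rows become dependent. The only property of $\gamma$ the paper invokes along the way is $\gamma\notin\mathbb{F}_{q^n}$. Your argument instead dualizes: you pass to the one-dimensional orthogonal complement of the row space of $\widehat D^{(t)}$, encode all orthogonality conditions in the scalars $h_m$, feed them into the linear recurrence coming from the subspace polynomial $N(x)$ of $\langle d_0,\dots,d_{t-1}\rangle_{\mathbb{F}_q}$, and compress the whole contradiction into the single identity $h_0(A^{q^n}-A)=0$ with $A=\gamma^{q^{2t}}N_0^{q^t}$. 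You then rule out $A\in\mathbb{F}_{q^n}$ by computing $\mathrm{N}_{q^{2n}/q}(N_0)$ along lines of $\langle d_0,\dots,d_{t-1}\rangle_{\mathbb{F}_q}$ and showing it is a square, so that $A\in\mathbb{F}_{q^n}$ would force $\mathrm{N}_{q^{2n}/q}(\gamma)$ to be a square. This buys a uniform, case-free argument that also makes transparent exactly where the hypothesis that $\mathrm{N}_{q^{2n}/q}(\gamma)$ is a non-square is used; the trade-off is that you appeal to the full non-square hypothesis, whereas the paper's case analysis gets by with the formally weaker consequence $\gamma\notin\mathbb{F}_{q^n}$. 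All the individual steps check out: $\widehat D^{(t)}$ has rank $t-1$ by the $q$-Vandermonde lemma applied to the independent $d_\ell$'s; the recurrence at $j=0$ and $j=n$, combined with $h_1=\dots=h_{t-1}=0$ and $h_0+h_n=0$, correctly yields $h_t=-N_0^{q^t}h_0$ and $h_{t+n}=N_0^{q^{t+n}}h_0$; and the final Vandermonde invertibility forces $\underline u^*=\underline 0$, completing the contradiction.
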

            \begin{proof}
            Let us consider the hyperplane 
            \[
            \mathcal{H}\coloneqq\left<\underline{d}^{[t+1]},\ldots,\underline{d}^{[2t-1]}\right>_{\mathbb{F}_{q^{2n}}}
            \]
            in $\mathbb{F}_{q^{2n}}^t$ and let us define
            \[
            \mathfrak{D}\coloneqq\left<\underline{d}^{[t]}+\underline{d}^{[t+n]},\underline{d}^{[t+1]}+\underline{d}^{[t+n+1]},\ldots,\underline{d}^{[2t-1]}+\underline{d}^{[2t+n-1]},\gamma^{[2t]}\underline{d}^{[2t]}+\gamma^{[2t+n]}\underline{d}^{[2t+n]}\right>_{\mathbb{F}_{q^{2n}}}.
            \]
            By contradiction suppose $\mathrm{rank}(D_{\mathrm{exp}})<t$, i.e.,
            \[
            \mathfrak{D}\subseteq\mathcal{H}.
            \]
            The latter is equivalent to
            \begin{align}
                \label{eq: H is also generated by the consecutive n-th powers}
                \mathfrak{D}^{\prime}&\coloneqq\left<\underline{d}^{[t]}+\underline{d}^{[t+n]},\gamma^{[2t]}\underline{d}^{[2t]}+\gamma^{[2t+n]}\underline{d}^{[2t+n]}\right>_{\mathbb{F}_{q^{2n}}}\\
                &\subseteq\mathcal{H}=\left<\underline{d}^{[t+1]},\ldots,\underline{d}^{[2t-1]}\right>_{\mathbb{F}_{q^{2n}}}=\left<\underline{d}^{[t+n+1]},\ldots,\underline{d}^{[2t+n-1]}\right>_{\mathbb{F}_{q^{2n}}}.\notag
            \end{align}
            Let us consider two different cases.
            \medskip
            \\
            \noindent \textbf{Case I:} There exists $\rho\in\mathbb{F}_{q^{2n}}\textnormal{ such that }\underline{d}^{[t]}=\rho\underline{d}^{[t+n]}$.
            \\
            We will subdivide Case I into different cases. However, before addressing them, we note that $\rho$ cannot be zero as the analysis is assuming that an error has occurred.
            \bigskip
            \\
            \textbf{Case I.A:} $\rho\in\mathbb{F}_{q^{2n}}\setminus\left\{-\frac{\gamma^{[t+n]}}{\gamma^{[t]}},-1,0\right\}\textnormal{ and }\underline{d}^{[t]}=\rho\underline{d}^{[t+n]}$.
            \medskip
            \\
            In this scenario, since $\rho\neq -1$, then $1+\frac{1}{\rho}\neq 0$ and hence 
            \[
            \underline{d}^{[t]}+\underline{d}^{[t+n]}=\underline{d}^{[t]}+\frac{1}{\rho}\underline{d}^{[t]}=\left(1+\frac{1}{\rho}\right)\underline{d}^{[t]}.
            \]
            Consequently, 
            \[                      
                \underline{d}^{[t]}\in\mathcal{H}=\left<\underline{d}^{[t+1]},\ldots,\underline{d}^{[2t-1]}\right>_{\mathbb{F}_{q^{2n}}},
            \]
            thus 
            \[
                \mathrm{rank}\begin{pNiceMatrix}
                        \underline{d}^{[t]}\\
                        \underline{d}^{[t+1]}\\
                        \vdots\\
                        \underline{d}^{[2t-1]}
                    \end{pNiceMatrix}=t-1,
            \]
            which leads to a contradiction.
            \medskip
            \\
            \textbf{Case I.B:} $\rho=-1\textnormal{ that is }\mathrm{Tr}_{q^{2n}/q^n}(\underline{d}^{[t]})=\underline{0}$.
            \medskip
            \\
            In this case also $\mathrm{Tr}_{q^{2n}/q^n}(\underline{d}^{[t+i]})=\underline{0}$ for all $i\in\{1,\ldots,t\}$, in particular $\underline{d}^{[2t+n]}=-\underline{d}^{[2t]}$. Therefore 
            \[
                \mathfrak{D}^{\prime}=\left<(\gamma^{[2t+n]}-\gamma^{[2t]})\underline{d}^{[2t]}\right>_{\mathbb{F}_{q^{2n}}}=\left<\underline{d}^{[2t]}\right>_{\mathbb{F}_{q^{2n}}},
            \]
            as $\gamma\notin\mathbb{F}_{q^n}$. Thus,
            \[
            \mathfrak{D}^{\prime}=\left<\underline{d}^{[2t]}\right>_{\mathbb{F}_{q^{2n}}}\subseteq\mathcal{H}=\left<\underline{d}^{[t+1]},\ldots,\underline{d}^{[2t-1]}\right>_{\mathbb{F}_{q^{2n}}},
            \]
            and hence 
            \[
            \mathrm{rank}\begin{pNiceMatrix}
                        \underline{d}^{[t+1]}\\
                        \vdots\\
                        \underline{d}^{[2t-1]}\\
                        \underline{d}^{[2t]}
                    \end{pNiceMatrix}=t-1,
            \]
            a contradiction.
            \medskip
            \\
            \textbf{Case I.C:} $\rho=-\frac{\gamma^{[t+n]}}{\gamma^{[t]}}\textnormal{ that is }\mathrm{Tr}_{q^{2n}/q^n}(\gamma^{[2t]}\underline{d}^{[2t]})=\underline{0}$.
            \medskip
            \\
            In the case in which $\underline{d}^{[t]}=-\frac{\gamma^{[t+n]}}{\gamma^{[t]}}\underline{d}^{[t+n]}$, we have that $\gamma^{[t]}\underline{d}^{[t]}=-\gamma^{[t+n]}\underline{d}^{[t+n]}$ and hence $\gamma^{[2t]}\underline{d}^{[2t]}=-\gamma^{[2t+n]}\underline{d}^{[2t+n]}$, that is $\mathrm{Tr}_{q^{2n}/q^n}(\gamma^{[2t]}\underline{d}^{[2t]})=\underline{0}$. Thus
            \[
            \underline{d}^{[t]}\in\mathcal{H}=\left<\underline{d}^{[t+1]},\ldots,\underline{d}^{[2t-1]}\right>_{\mathbb{F}_{q^{2n}}},
            \]
            and hence 
            \[
            \mathrm{rank}\begin{pNiceMatrix}
                        \underline{d}^{[t]}\\
                        \underline{d}^{[t+1]}\\
                        \vdots\\
                        \underline{d}^{[2t-1]}
                    \end{pNiceMatrix}=t-1,
            \]
            a contradiction. Note that $\frac{\gamma^{[t+n]}}{\gamma^{[t]}}\neq 1$ because $\gamma\notin\mathbb{F}_{q^n}$. The latter is important to note since if $\frac{\gamma^{[t+n]}}{\gamma^{[t]}}$ were one, then $\mathrm{Tr}_{q^{2n}/q^n}(\underline{d}^{[t]})$ would be the null vector.
            \bigskip
            \\
            \textbf{Case II:} $\underline{d}^{[t]}\neq\rho\underline{d}^{[t+n]}\textnormal{ for all }\rho\in\mathbb{F}_{q^{2n}}$.
            \medskip
            \\
            In (\ref{eq: H is also generated by the consecutive n-th powers}), it was observed that $\underline{d}^{[t+n+i]}\in\mathcal{H}$ for all $i\in\{1,\ldots,t-1\}$. Specifically, when $i=1$, there exist $\alpha_1,\ldots,\alpha_{t-1}\in\mathbb{F}_{q^{2n}}$ such that
            \[
                \underline{d}^{[t+n+1]}=\alpha_{1}\underline{d}^{[t+1]}+\ldots+\alpha_{t-2}\underline{d}^{[2t-2]}+\alpha_{t-1}\underline{d}^{[2t-1]},
            \] 
            from which, by raising to the power of $q$, we have
            \[
                \underline{d}^{[t+n+2]}=\alpha_{1}^{[1]}\underline{d}^{[t+2]}+\ldots+\alpha_{t-2}^{[1]}\underline{d}^{[2t-1]}+\alpha_{t-1}^{[1]}\underline{d}^{[2t]}.
            \]
            If $\alpha_{t-1}^{[1]}\neq 0$, then
            \[
                \underline{d}^{[2t]}=\frac{1}{\alpha_{t-1}^{[1]}}\left(\underline{d}^{[t+n+2]}-\alpha_{1}^{[1]}\underline{d}^{[t+2]}+\ldots+\alpha_{t-2}^{[1]}\underline{d}^{[2t-1]}\right)\in\mathcal{H},
            \]
            which is a contradiction since this would imply that
            \[
            \mathrm{rank}\begin{pNiceMatrix}
                        \underline{d}^{[t+1]}\\
                        \vdots\\
                        \underline{d}^{[2t-1]}\\
                        \underline{d}^{[2t]}
                    \end{pNiceMatrix}
            \]
            is $t-1$. This means that 
            \[
                \underline{d}^{[t+n+2]}=\alpha_{1}^{[1]}\underline{d}^{[t+2]}+\ldots+\alpha_{t-2}^{[1]}\underline{d}^{[2t-1]},
            \]
            and hence
            \[
                \underline{d}^{[t+n+3]}=\alpha_{1}^{[2]}\underline{d}^{[t+3]}+\ldots+\alpha_{t-3}^{[2]}\underline{d}^{[2t-1]}+\alpha_{t-2}^{[2]}\underline{d}^{[2t]}. 
            \]
            If $\alpha_{t-2}^{[2]}\neq 0$, 
            \[
            \underline{d}^{[2t]}=\frac{1}{\alpha_{t-2}^{[2]}}\left(\underline{d}^{[t+n+3]}-\alpha_{1}^{[2]}\underline{d}^{[t+3]}+\ldots+\alpha_{t-3}^{[2]}\underline{d}^{[2t-1]}\right)\in\mathcal{H},
            \]
            which is again a contradiction. On the other hand, if $\alpha_{t-2}^{[2]}=0$, then
            \[
                \underline{d}^{[t+n+3]}=\alpha_{1}^{[2]}\underline{d}^{[t+3]}+\ldots+\alpha_{t-3}^{[2]}\underline{d}^{[2t-1]}
            \]
            and
            \[
                \underline{d}^{[t+n+4]}=\alpha_{1}^{[3]}\underline{d}^{[t+4]}+\ldots+\alpha_{t-4}^{[3]}\underline{d}^{[2t-1]}+\alpha_{t-3}^{[3]}\underline{d}^{[2t]}
            \]
            and so on. In the $j$-th step, where $j\in\{2,\ldots,t-1\}$, we have the following situation 
            \[
                \underline{d}^{[t+n+j]}=\alpha_{1}^{[j-1]}\underline{d}^{[t+j]}+\ldots+\alpha_{t-j}^{[j-1]}\underline{d}^{[2t-1]}+\alpha_{t-(j-1)}^{[j-1]}\underline{d}^{[2t]}.
            \]
            If $\alpha_{t-(j-1)}^{[j-1]}=0$ in each $j$-th step for $j\in\{2,\ldots,t-2\}$, then at the $j=(t-1)$-th step, we have
            \[
                \underline{d}^{[2t+n-1]}=\alpha_{1}^{[t-2]}\underline{d}^{[2t-1]}+\alpha_{2}^{[t-2]}\underline{d}^{[2t]}.
            \]
            At this point, if $\alpha_{2}^{[t-2]}\neq 0$ then
            \[
            \underline{d}^{[2t]}=\frac{1}{\alpha_{2}^{[t-2]}}\left(\underline{d}^{[2t+n-1]}-\alpha_{1}^{[t-2]}\underline{d}^{[2t-1]}\right)\in\mathcal{H},
            \]
            which is a contradiction, since otherwise
            \[
                \underline{d}^{[2t+n-1]}=\alpha_{1}^{[t-2]}\underline{d}^{[2t-1]}.
            \]
            Note that $\alpha_{1}^{[t-2]}$ cannot be zero, because if it were, $\underline{d}$ would be the null vector. Therefore, by raising to the power of $q^{2n-t+1}$, it follows that
            \[
                \underline{d}^{[t]}=\frac{1}{\alpha_1^{[2n-1]}}\underline{d}^{[t+n]},
            \]
            which contradicts the hypothesis of the case in question.
            \end{proof}
            At this point Theorem \ref{corollary: The dimension of the solution space of S_exp is 1} can be proved.
            \begin{proof}
                We recall that $\mathrm{rank}(D_{\mathrm{exp}}\cdot A^{(t)})\leq\mathrm{rank}(A^{(t)})$ as $\ker (A^{(t)})\subseteq\ker (D_{\mathrm{exp}}\cdot A^{(t)})$. Consequently, in accordance with the rank-nullity theorem,
            \[
            t+1-\mathrm{rank}(A^{(t)})\leq t+1-\mathrm{rank}(D_{\mathrm{exp}}\cdot A^{(t)}),
            \]
            which proves our first claim. Using this result and the Sylvester's inequality, we get that
            \[
            t=\mathrm{rank}(A^{(t)})\geq\mathrm{rank}(D_{\mathrm{exp}}\cdot A^{(t)})\geq \mathrm{rank}(D_{\mathrm{exp}})+\mathrm{rank}(A^{(t)})-t=t.
            \]
            In conclusion, the dimension of the solution space of 
            \[
                S_{\mathrm{exp}}\cdot\underline{\Lambda}^{\top}=(D_{\mathrm{exp}}\cdot A^{(t)})\cdot\underline{\Lambda}^{\top}=\underline{0}^{\top}
            \]
            is one.
            \end{proof}
            The error span polynomial can then be determined as an immediate consequence. 
            \\
            \par
            Subsequently, the reconstruction of the sent codeword is achieved through the reiteration of the steps outlined in Subsections \ref{subsection: Finding the Root Space} and \ref{subsection: Determining the error}, i.e., once again, we determine a potential decomposition of the error $\underline{e}=\underline{a}\cdot B$ and reconstruct $\underline{c}=\underline{r}-\underline{e}$.
        \\
        \par
        The syndrome-based decoding algorithm of Trombetti-Zhou codes can now be summarized in Algorithm \ref{alg:Decoding of Trombetti-Zhou codes: a new syndrome-based decoding approach.}. Note that the decoding process is likely to fail if $t>\lfloor(d-1)/2\rfloor$.
        \subsection{Complexity Analysis}
        We analyze the complexity of Algorithm \ref{alg:Decoding of Trombetti-Zhou codes: a new syndrome-based decoding approach.} step by step.
        \begin{itemize}
            \item Solving the homogeneous linear system $S_{\mathrm{exp}}\cdot\underline{\Lambda}^{\top}=\underline{0}^{\top}$ (steps $6$-$10$) requires at most $\mathcal{O}(n^3)$ operations over $\mathbb{F}_{q^{2n}}$.
            \item The iterating process in steps $13$-$15$ can be repeated up to $\left\lfloor\frac{2n-(k+1)}{2}\right\rfloor$ times. Therefore, thanks to the $q$-circulant structure of the matrix $S^{(u)}$ and the considerations made regarding Lemma \ref{lemma: Rank of Syndrome Matrix}, steps $11$-$16$ also have complexity at most $\mathcal{O}(n^3)$ over $\mathbb{F}_{q^{2n}}$.
            \item Step $17$ requires the resolution of a linear system of equations of size $2n$ over $\mathbb{F}_q$, which has complexity $\mathcal{O}((2n)^3)$ over $\mathbb{F}_q$.
            \item Determining $(d_0,d_1,\ldots,d_{t-1})\in\mathbb{F}_{q^{2n}}$ (step $19$) takes $\mathcal{O}((2n)^2)$ operations over $\mathbb{F}_{q^{2n}}$ thanks to the recursive algorithm from \cite{gabidulin_TheoryOfCodesWithMaximumRankDistance} that exploit the $q$-Vandermonde structure of the coefficient matrix of the system involved. 
        \end{itemize}
        Since the complexity of the last steps is negligible, the overall complexity of the proposed decoding algorithm is cubic in half of the code length over $\mathbb{F}_{q^{2n}}$, i.e., it is in the order of $\mathcal{O}(n^3)$ operations over $\mathbb{F}_{q^{2n}}$.
            \begin{algorithm}[!ht]
            \DontPrintSemicolon
            \caption{Decoding of Trombetti-Zhou codes: a new syndrome-based decoding approach.}
            \label{alg:Decoding of Trombetti-Zhou codes: a new syndrome-based decoding approach.}
            \KwInput{$\underline{r}=(r_0,r_1,\dots,r_{2n-1})\in\mathbb{F}_{q^{2n}}^{2n}$}
            Set $H$ as in (\ref{eq: FqnGenerator and Fqn parity check matrix of TZcodes})\;
            $\mathbb{F}_{q^n}$\textit{-syndrome} calculation: $\underline{s}\gets \underline{r}\cdot H^{\top}\in\mathbb{F}_{q^{2n}}^{4n-2k}$ 
            \;
            \If{$\mathrm{Tr}_{q^{2n}/q^n}(\underline{s})=\underline{0}$}
                {
                    Estimated codeword: $\underline{c}\gets\underline{r}$
                }
            \Else
                {
                    \If{$k$ is even and $\mathrm{rank}(S_{\mathrm{exp}})=n-\frac{k}{2}$}
                    {   
                        Set up $t=n-\frac{k}{2}$\;
                        Solve $S_{\mathrm{exp}}\cdot\underline{\Lambda}^{\top}=\underline{0}^{\top}$ for $\underline{\Lambda}=(\Lambda_0,\Lambda_1,\dots,\Lambda_t)\in\mathbb{F}_{q^{2n}}^{t+1}$
                        \;
                        \If{$\left(\frac{\Lambda_0}{\Lambda_t},\frac{\Lambda_1}{\Lambda_t},\dots,\frac{\Lambda_{t-1}}{\Lambda_t},1\right)\notin\mathbb{F}_{q^{n}}^{t+1}$}
                            {
                                Declare decoding failure
                            }
                    }
                    \Else
                    {
                        Set up $S^{(t)}$ as in (\ref{eq:shape matrix in theorem rank of syndrome matrix}) for $t=\lfloor(2n-(k+1))/2\rfloor$ \;
                    \While{$\mathrm{rank}(S^{(t)})<t$}
                        {
       	                $t\gets t-1$\;
                            Set up $S^{(t)}$ as in (\ref{eq:shape matrix in theorem rank of syndrome matrix})
                        }
                    Solve $S^{(t)}\cdot\underline{\Lambda}^{\top}=\underline{0}$ for $\underline{\Lambda}=(\Lambda_0,\Lambda_1,\dots,\Lambda_t)\in\mathbb{F}_{q^{2n}}^{t+1}$ \
                    }
                    Find basis $(a_0,a_1,\dots,a_{\varepsilon-1})\in\mathbb{F}_{q^{2n}}^{\varepsilon}$ of the root space of $\Lambda(x)=\sum_{i=0}^{t}\Lambda_ix^{q^i}$ over $\mathbb{F}_{q^{2n}}$ \;
                    \If{$\varepsilon=t$}
                        {
                            Find $(d_0,d_1,\dots,d_{t-1})\in\mathbb{F}_{q^{2n}}^{t}$ by solving (\ref{eq:system to obtain d in TZ codes})\;
                            Find $B=(B_{i,j})_{j\in\{0,\dots,2n-1\}}^{i\in\{0,\dots,t-1\}}\in\mathbb{F}_q^{t\times 2n}$ such that $d_i=\sum_{j=0}^{2n-1}B_{i,j}\mu_j^{q^k}$ \;
                            Estimated codeword: $\underline{c}\gets \underline{r}-\underline{a}\cdot B$
                        }
                    \Else
                        {
    	                Declare decoding failure
                        }
                }
            \KwOutput{Estimated codeword $\underline{c}\in\mathbb{F}_{q^{2n}}^{2n}$ or ``decoding failure''}
        \end{algorithm}
        \section{Conclusions and open problems} 
        \label{sec: Conclusions and open problems}
        The first main contribution of this work is that we provide new insights into the theory of rank-metric codes that are linear over a subfield of the whole field of definition, along with new structural results for the family of Trombetti-Zhou codes seen as evaluation codes of length $2n$. In such a case we discussed the commutativity of Diagram \ref{tikz: diagram} which, unfortunately, does not hold when we evaluate $\mathcal{D}_{k}(\gamma)$ in $k\leq\ell<2n$  $\mathbb{F}_q$-linearly independent elements of $\mathbb{F}_{q^{2n}}$ (Lemma \ref{lemma: evaluate in less elements}).
        \par
        Our second main contribution lies all the theoretical foundations that ensure the correctness of the syndrome-based decoding algorithm for Trombetti-Zhou we designed in Algorithm \ref{alg:Decoding of Trombetti-Zhou codes: a new syndrome-based decoding approach.}. Future studies may overcome the hypothesis we had to impose on the error when its rank reaches the unique decoding radius by showing that, in truth, the proposed algorithm also works for correcting error vectors with entries in $\mathbb{F}_{q^{2n}}$ of rank weight $n-\frac{k}{2}$ when $k$ is even. This would also significantly simplify the algorithm as we would no longer need the \emph{if condition} at steps $9$-$10$ in Algorithm \ref{alg:Decoding of Trombetti-Zhou codes: a new syndrome-based decoding approach.}, which purely arises from the constraints of our current hypothesis. \par Although our algorithm is not as efficient as the one presented in \cite{On_interpolation-based_decoding_of_a_class_of_maximum_rank_distance_codes}, since it achieves a computational complexity of $\mathcal{O}(n^3)$ operations over $\mathbb{F}_{q^{2n}}$, it provides a more natural framework for designing a rank-metric decoder that can handle erasures and deviations, in analogy to the one of Silva, K\"{o}tter and Kschischang in \cite{RankMetric_approach_to_ErrorControl_in_RNetworkCoding} for Gabidulin codes.

\section*{Acknowledgments}

We would like to thank Violetta Weger for fruitful discussions and suggestions.
\par The second author thanks the Selmer Center in Secure Communication at the University of Bergen, Norway, for hosting her at the beginning of this project.  
\par The third and the last authors were partially supported by the Italian National Group for Algebraic and Geometric Structures and their Applications (GNSAGA - INdAM).

\section*{Appendix A}
\renewcommand{\thetheorem}{A.\arabic{theorem}}
\setcounter{theorem}{0}
In what follows, for the several technical proofs, we will use again the notation $[i]=q^i$.
\begin{proposition}
            \label{prop: anydecompositionissuitable}
            Any decomposition of the error vector $\underline{e}$ in (\ref{eq: rank decomposition of the error vector}) as product of a vector $\underline{a}$ of rank weight $t$ and a matrix $B\in\mathbb{F}_{q}^{t\times 2n}$ of rank $t$ is suitable for decoding.
        \end{proposition}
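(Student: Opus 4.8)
The plan is to show that whatever the decoder extracts from ``a'' decomposition is either an invariant of $\underline{e}$ alone or is pinned down uniquely once an admissible $\underline{a}$ has been chosen, so that the particular decomposition is irrelevant. First I would pass to the matrix picture via $\mathrm{ext}_{\underline{\beta}}$ (with $\underline{\beta}$ an $\mathbb{F}_q$-basis of $\mathbb{F}_{q^{2n}}$, or of $\mathbb{F}_{q^n}$ in the limit case $2t+k=2n$): a decomposition $\underline{e}=\underline{a}B$ with $\underline{a}$ of rank weight $t$ and $B\in\mathbb{F}_q^{t\times 2n}$ of rank $t$ becomes $E:=\mathrm{ext}_{\underline{\beta}}(\underline{e})=AB$ with $A:=\mathrm{ext}_{\underline{\beta}}(\underline{a})$ of rank $t$. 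Theorem \ref{thm: Rank decomposition} yields $\mathrm{Colspan}_{\mathbb{F}_q}(A)=\mathrm{Colspan}_{\mathbb{F}_q}(E)$, which translated back reads $\langle a_0,\dots,a_{t-1}\rangle_{\mathbb{F}_q}=\mathrm{supp}(\underline{e})$. Hence the $\mathbb{F}_q$-span of the entries of $\underline{a}$ is the rank support of $\underline{e}$ for \emph{every} admissible decomposition, so the error span polynomial $\Lambda(x)=\prod_{\vartheta\in\mathrm{supp}(\underline{e})}(x-\vartheta)$ of (\ref{eq: error span polynomial TZ codes when the error does not reach the unique decoding radius}) (resp. (\ref{eq: error span polynomial TZ codes when the error does reach the unique decoding radius})) depends only on $\underline{e}$, and the decoder always reconstructs the same $\Lambda(x)$.

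Next I would fix any $\mathbb{F}_q$-basis $\underline{a}=(a_0,\dots,a_{t-1})$ of the root space $\mathrm{supp}(\underline{e})$ of $\Lambda(x)$ — which is precisely what the algorithm does — and verify that it extends to a unique admissible decomposition which the decoder then recovers exactly. There is a unique $B\in\mathbb{F}_q^{t\times 2n}$ with $\underline{e}=\underline{a}B$: existence because each $e_j\in\mathrm{supp}(\underline{e})$ has a unique expansion over the basis $\underline{a}$, uniqueness because $\underline{a}(B-B')=\underline{0}$ with the $a_\ell$ $\mathbb{F}_q$-independent and $B-B'$ over $\mathbb{F}_q$ forces $B=B'$; and $\mathrm{rank}(B)=t$ since $t=\mathrm{rank}(E)=\mathrm{rank}(AB)\le\mathrm{rank}(B)\le t$. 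Letting $\underline{d}$ be the vector of error locators attached to this $B$ by (\ref{eqn: vector of error locators}), Theorem \ref{thm: how we can rewrite the entries of the syndrome thanks to the vector of error locators} (and Theorem \ref{thm: obtain more information thanks to the trace} when $2t+k=2n$) shows that the actual syndrome entries are expressed through $\underline{a}$ and $\underline{d}$; consequently $\underline{d}$ solves the system (\ref{eq:system to obtain d in TZ codes}), and since the coefficient matrix of that system has rank $t$ the solution is unique. Reading the recovered $\underline{d}$ off in the $\mathbb{F}_q$-basis $\underline{\mu}^{q^k}$ returns exactly this $B$, so the decoder outputs $\underline{c}=\underline{r}-\underline{a}B=\underline{r}-\underline{e}$.

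The step that needs genuine care — and where I expect the main obstacle to lie — is making sure that the syndrome formulas of Theorem \ref{thm: how we can rewrite the entries of the syndrome thanks to the vector of error locators} and Theorem \ref{thm: obtain more information thanks to the trace} are valid for \emph{any} admissible decomposition $\underline{e}=\underline{a}B$, not just for one privileged choice, since the uniqueness argument above rests on applying them to the decomposition the algorithm itself builds. Concretely this amounts to rerunning the derivations of those theorems with $\underline{a}$ and $B$ kept generic, invoking only that $B_{\ell,j}\in\mathbb{F}_q$ (so that $B_{\ell,j}^{q^i}=B_{\ell,j}$) together with the defining relations $\mathrm{Tr}_{q^{2n}/q^n}(\gamma\xi)=\mathrm{Tr}_{q^{2n}/q^n}\!\big((\gamma\xi)^{q^{2n-k}}\big)=0$ of $\xi$; none of these computations uses a distinguished decomposition, but it is precisely here that one must be careful not to slip into assuming one.
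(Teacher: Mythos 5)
Your proposal is correct, but it takes a genuinely different route from the paper. The paper's proof is a \emph{comparison} argument: it takes two admissible bases $\underline{a}$ and $\underline{a}^{\prime}$ of the root space of $\Lambda(x)$, writes $\underline{a}^{\prime}=\underline{a}M$ with $M\in\mathrm{GL}_t(\mathbb{F}_q)$, transposes the first row of the syndrome matrix into the overdetermined $q$-Vandermonde system $\bar{A}\,\underline{d}^{\top}=\underline{\check{s}}$, and exploits the fact that raising to $q$-powers commutes with $M$ (since $M$ has entries in $\mathbb{F}_q$) to deduce $\underline{d}^{\prime\top}=M^{-1}\underline{d}^{\top}$, hence $B^{\prime}=M^{-1}B$, hence $\underline{a}^{\prime}B^{\prime}=\underline{a}B$. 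Your proof is instead a \emph{direct verification}: you first note that every admissible decomposition has $\langle a_0,\ldots,a_{t-1}\rangle_{\mathbb{F}_q}=\mathrm{supp}(\underline{e})$ (so $\Lambda(x)$ is an invariant of $\underline{e}$), and then, for a fixed basis $\underline{a}$ of that support, you show the unique $B$ over $\mathbb{F}_q$ with $\underline{e}=\underline{a}B$ has rank $t$, that the associated $\underline{d}$ is the unique solution of the system (\ref{eq:system to obtain d in TZ codes}), and that reading $\underline{d}$ back in the basis $\underline{\mu}^{q^k}$ returns exactly this $B$. Your version has the advantage of showing directly that the decoder output equals $\underline{e}$ rather than merely that it is independent of the choice of decomposition; it also makes explicit the invariance of $\mathrm{supp}(\underline{e})$ (via Theorem \ref{thm: Rank decomposition}), which the paper leaves implicit. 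The caveat you flag --- that Theorems \ref{thm: how we can rewrite the entries of the syndrome thanks to the vector of error locators} and \ref{thm: obtain more information thanks to the trace} must hold for an arbitrary admissible pair $(\underline{a},B)$ --- is indeed the load-bearing observation, and your justification (the derivations only use $B_{\ell,j}\in\mathbb{F}_q$ and the trace-vanishing property of $\gamma\xi$) is accurate; the paper sidesteps this point by the base-change trick, at the cost of being less transparent about why the output is $\underline{e}$ in the first place.
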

        \begin{proof}
            Let $\underline{a}=(a_0,a_1,\ldots,a_{t-1})$ and $\underline{a}^{\prime}=(a^{\prime}_0,a^{\prime}_1,\ldots,a^{\prime}_{t-1})$ be two $\mathbb{F}_q$-bases for the root space of $\Lambda(x)$ (see the previous paragraph \textit{Finding the Root Space of $\Lambda(x)$}), i.e.
            \[
                \left<a_0,a_1,\ldots,a_{t-1}\right>_{\mathbb{F}_q}=\left<a^{\prime}_0,a^{\prime}_1,\ldots,a^{\prime}_{t-1}\right>_{\mathbb{F}_q};
            \]
            then, there exists $M\in\mathrm{GL}_t(\mathbb{F}_q)$ such that $\underline{a}^{\prime}=\underline{a}\cdot M$. As observed, respectively, in the proof of Lemma \ref{lemma: Rank of Syndrome Matrix} and in Equation (\ref{eq: decomposition of S_exp}), 
            \[
                S^{(t)}=D^{(t)}\cdot A^{(t)}
            \]
            and
            \[
                S_{\mathrm{exp}}=D_{\mathrm{exp}}\cdot A^{(t)},
            \]
            where $A^{(t)}$ is known since its columns are $q$-powers of $\underline{a}$. If we concentrate on the first row of $S^{(t)}$ (that is the same of $S_{\mathrm{exp}}$) and raise each element to the power of $q^{2n-t-1}$, then
            \begin{align*}
                &\left(s_{2(t+1)-1}^{[2n-t-1]},s_{2t-1}^{[2n-t-2]},\ldots,s_1^{[2n-1]}\right)=\\ 
                &(d_0,d_1,\ldots,d_{t-1})\cdot\left(A^{(t)}\right)^{[2n-t-1]}=\\
                &(d_0,d_1,\ldots,d_{t-1})\cdot\begin{pmatrix}
                    \underline{a}^{{[2n-t-1]}^{\top}} & \underline{a}^{{[2n-t-2]}^{\top}} & \ldots &\underline{a}^{{[2n-1]}^{\top}}
                \end{pmatrix}.
            \end{align*}
            Therefore, by transposing both terms, 
            \[
            \begin{pmatrix}
                \underline{a}^{[2n-t-1]}\\
                \underline{a}^{[2n-t-2]}\\
                \vdots\\
                \underline{a}^{[2n-1]}\\
            \end{pmatrix}\cdot\begin{pmatrix}
                                    d_0\\
                                    d_1\\
                                    \vdots\\
                                    d_{t-1}\\
                                \end{pmatrix}=\begin{pmatrix}
                                    s_{2(t+1)-1}^{[2n-t-1]}\\
                                    s_{2t-1}^{[2n-t-2]}\\
                                    \vdots\\
                                    s_1^{[2n-1]}
                                \end{pmatrix}.
            \]
            Let
            \[
            \bar{A}\coloneqq\begin{pmatrix}
                \underline{a}^{[2n-t-1]}\\
                \underline{a}^{[2n-t-2]}\\
                \vdots\\
                \underline{a}^{[2n-1]}\\
            \end{pmatrix}
            \]
            and
            \[
            \underline{\check{s}}\coloneqq\begin{pmatrix}
                                    s_{2(t+1)-1}^{[2n-t-1]}\\
                                    s_{2t-1}^{[2n-t-2]}\\
                                    \vdots\\
                                    s_1^{[2n-1]}
                                \end{pmatrix},
            \]
            then
            \begin{equation}
                \label{eq: eq in any decomposition is good}
                \bar{A}\cdot\underline{d}^{\top}=\underline{\check{s}}.
            \end{equation}
            On the other hand, if we define 
            \[
            \bar{A}^{\prime}\coloneqq\begin{pmatrix}
                \underline{a}'^{[2n-t-1]}\\
                \underline{a}'^{[2n-t-2]}\\
                \vdots\\
                \underline{a}'^{[2n-1]}\\
            \end{pmatrix},
            \]
            since $M\in\mathrm{GL}_t(\mathbb{F}_q)$, then 
            \[
                \bar{A}^{\prime}=\bar{A}\cdot M.
            \]
            Therefore, by replacing the latter in (\ref{eq: eq in any decomposition is good}),
            \[
            \bar{A}^{\prime}\cdot M^{-1}\cdot\underline{d}^{\top}=\bar{A}\cdot\underline{d}^{\top}=\underline{\check{s}}
            \]
            and hence
            \[
                \underline{d}^{\prime^{\top}}=M^{-1}\cdot\underline{d}^{\top}.
            \]
            On the other hand, due to (\ref{eqn: vector of error locators}), $\underline{d}^{\top}=B\cdot\underline{\mu}^{{[k]}^{\top}}$, therefore
            \[
                B^{\prime}\cdot\underline{\mu}^{{[k]}^{\top}}=\underline{d}^{\prime^{\top}}=M^{-1}\cdot\underline{d}^{\top}=M^{-1}\cdot B\cdot\underline{\mu}^{{[k]}^{\top}},
            \]
            from which $B^{\prime}=M^{-1}\cdot B$. To conclude
            \[
            \underline{e}^{\prime}=\underline{a}^{\prime}\cdot B^{\prime}=\left(\underline{a}\cdot M\right)\cdot \left(M^{-1}\cdot B\right)=\underline{a}\cdot\left(M\cdot M^{-1}\right)\cdot B=\underline{a}\cdot B,
            \]
            that is what we wanted to prove.
        \end{proof}
        \begin{proof}[\textnormal{\textbf{Proof of Theorem \ref{thm: how we can rewrite the entries of the syndrome thanks to the vector of error locators}}}]
        We recall that the vector of error locators is $\underline{d}^{\top}=B\underline{\mu}^{{q^k}^{\top}}$ and that $\underline{d}^{{q^i}^\top}=B\underline{\mu}^{{q^{i+k}}^{\top}}$ as $B\in\mathbb{F}_{q}^{t\times 2n}$. By exploiting the bilinearity of the standard inner product, we get 
        \begin{align*}
                s_0&=a(\gamma\xi)^{[2n-k]}+\left<\underline{e},\gamma^{[2n-k]}\underline{\mu}\right> \\
                &=a(\gamma\xi)^{[2n-k]}+\gamma^{[2n-k]}\left<\underline{e},\underline{\mu}\right>\\
                &=a(\gamma\xi)^{[2n-k]}+\gamma^{[2n-k]}\left<\underline{a}B,\underline{\mu}\right>\\
                &=a(\gamma\xi)^{[2n-k]}+\gamma^{[2n-k]}\left(\underline{a}B\right)\cdot\underline{\mu}^{\top}\\
                &=a(\gamma\xi)^{[2n-k]}+\gamma^{[2n-k]}\underline{a}\cdot\left(B\underline{\mu}^{\top}\right)\\
                &=a(\gamma\xi)^{[2n-k]}+\gamma^{[2n-k]}\underline{a}\cdot\underline{d}^{{[2n-k]}^{\top}},
            \end{align*}
        for all $i\in\{1,\dots,2n-k-1\}$
            \[
            \begin{cases}
                    s_{2i-1}=\left<\underline{e},\underline{\mu}^{[k+i]}\right>=\left<\underline{a}B,\underline{\mu}^{[k+i]}\right>=\underline{a}\cdot\left(B\underline{\mu}^{{[k+i]}^{\top}}\right)=\underline{a}\cdot\underline{d}^{{[i]}^{\top}}\\
                    s_{2i}=\gamma s_{2i-1}
                \end{cases},
            \]
            and
            \begin{align*}
                s_{4n-2k-1}&=b\gamma\xi+\left<\underline{e},\underline{\mu}^{[k]}\right>\\
                &=b\gamma\xi+\left(\underline{a}B\right)\cdot\underline{\mu}^{{[k]}^{\top}}\\
                &=b\gamma\xi+\underline{a}\cdot \left(B\underline{\mu}^{{[k]}^{\top}}\right)\\
                &=b\gamma\xi+\underline{a}\cdot\underline{d}^{\top}.
            \end{align*}
            \end{proof}
             \begin{proof}[\textnormal{\textbf{Proof of Theorem \ref{thm: obtain more information thanks to the trace}}}]
            Since both $a,b\in\mathbb{F}_{q^n}$, $\underline{a}\in\mathbb{F}_{q^n}^t$ and $\mathrm{Tr}_{q^{2n}/q^n}(\gamma\xi)=\mathrm{Tr}_{q^{2n}/q^n}\left((\gamma\xi)^{[2t]}\right)=0$, we have
            \begin{align*}
                \tilde{s}_0&=\mathrm{Tr}_{q^{2n}/q^n}(s_0)\\
                &=a\mathrm{Tr}_{q^{2n}/q^n}\left((\gamma\xi)^{[2t]}\right)+\mathrm{Tr}_{q^{2n}/q^n}\left(\gamma^{[2t]}\underline{a}\cdot\underline{d}^{{[2t]}^{\top}}\right)\\
                &=\mathrm{Tr}_{q^{2n}/q^n}\left(\gamma^{[2t]}\underline{a}\cdot\underline{d}^{{[2t]}^{\top}}\right)\\
                &=\mathrm{Tr}_{q^{2n}/q^n}\left(\gamma^{[2t]}\displaystyle\sum_{\ell=0}^{t-1}a_{\ell}d_{\ell}^{[2t]}\right)\\
                &=\displaystyle\sum_{\ell=0}^{t-1}a_{\ell}\mathrm{Tr}_{q^{2n}/q^n}\left(\gamma^{[2t]}d_{\ell}^{[2t]}\right)\\
                &=\underline{a}\cdot \mathrm{Tr}_{q^{2n}/q^n}\left(\gamma^{[2t]}\underline{d}^{[2t]}\right)^{\top},
            \end{align*}
            for all $i\in\{1,\dots,2t-1\}$
            \[
            \begin{cases}
                    \tilde{s}_{2i-1}=\mathrm{Tr}_{q^{2n}/q^n}\left(s_{2i-1}\right)=\mathrm{Tr}_{q^{2n}/q^n}\left(\underline{a}\cdot\underline{d}^{{[i]}^{\top}}\right)=\underline{a}\cdot \mathrm{Tr}_{q^{2n}/q^n}\left(\underline{d}^{[i]}\right)^{\top}\\
                    \tilde{s}_{2i}=\mathrm{Tr}_{q^{2n}/q^n}(\gamma s_{2i-1})=\underline{a}\cdot \mathrm{Tr}_{q^{2n}/q^n}\left(\gamma\underline{d}^{[i]}\right)^{\top}
                \end{cases},
            \]
            and
            \[
            \tilde{s}_{4t-1}=\mathrm{Tr}_{q^{2n}/q^n}(s_{4t-1})=\underline{a}\cdot \mathrm{Tr}_{q^{2n}/q^n}\left(\underline{d}\right)^{\top}.
            \]
        \end{proof}
\bibliographystyle{abbrv}
\bibliography{bibliography}

\begin{thebibliography}{10}

\bibitem{Network_information_flow}
R.~Ahlswede, N.~Cai, S.-Y. Li, and R.~Yeung.
\newblock Network information flow.
\newblock {\em IEEE Transactions on Information Theory}, 46(4):1204--1216, 2000.

\bibitem{BallGamboaLavrauw2023}
S.~Ball, G.~Gamboa, and M.~Lavrauw.
\newblock On additive {MDS} codes over small fields.
\newblock {\em Adv. Math. Commun.}, 17:828--844, 2023.

\bibitem{RMCandtheirapplications}
H.~Bartz, L.~Holzbaur, H.~Liu, S.~Puchinger, J.~Renner, and A.~Wachter-Zeh.
\newblock Rank-metric codes and their applications.
\newblock {\em Foundations and Trends® in Communications and Information Theory}, 19(3):390--546, 2022.

\bibitem{delsarte1978}
P.~Delsarte.
\newblock Bilinear forms over a finite field, with applications to coding theory.
\newblock {\em Journal of combinatorial theory, Series A}, 25(3):226--241, 1978.

\bibitem{gabidulin_TheoryOfCodesWithMaximumRankDistance}
E.~M. Gabidulin.
\newblock Theory of codes with maximum rank distance.
\newblock {\em Problemy peredachi informatsii}, 21(1):3--16, 1985.

\bibitem{RankCodesGab}
E.~M. Gabidulin.
\newblock {\em Rank Codes}.
\newblock TUM.University Press, Munich, 2021.

\bibitem{gabidulinCodes_ApplicationsInCryptography}
E.~M. Gabidulin, A.~Paramonov, and O.~Tretjakov.
\newblock Ideals over a non-commutative ring and their application in cryptology.
\newblock In {\em Advances in Cryptology—EUROCRYPT’91: Workshop on the Theory and Application of Cryptographic Techniques Brighton, UK, April 8--11, 1991 Proceedings 10}, pages 482--489. Springer, 1991.

\bibitem{Galois_extensions_and_subspaces_of_alternating_bilinear_forms_with_special_rank_properties}
R.~Gow and R.~Quinlan.
\newblock Galois extensions and subspaces of alternating bilinear forms with special rank properties.
\newblock {\em Linear algebra and its applications}, 430(8-9):2212--2224, 2009.

\bibitem{Galois_theory_and_linear_algebra}
R.~Gow and R.~Quinlan.
\newblock Galois theory and linear algebra.
\newblock {\em Linear Algebra and its Applications}, 430(7):1778--1789, 2009.

\bibitem{On_interpolation-based_decoding_of_a_class_of_maximum_rank_distance_codes}
W.~K. Kadir, C.~Li, and F.~Zullo.
\newblock On interpolation-based decoding of a class of maximum rank distance codes.
\newblock In {\em 2021 IEEE International Symposium on Information Theory (ISIT)}, pages 31--36, 2021.

\bibitem{GeneralizedGabCodes}
A.~Kshevetskiy and E.~Gabidulin.
\newblock The new construction of rank codes.
\newblock In {\em Proceedings. International Symposium on Information Theory, 2005. ISIT 2005.}, pages 2105--2108, 2005.

\bibitem{lidl_finite_fields}
R.~Lidl and H.~Niederreiter.
\newblock {\em Introduction to {F}inite {F}ields and their {A}pplications}.
\newblock Cambridge university press, 1994.

\bibitem{lobillo2025quotients}
F.~Lobillo, P.~Santonastaso, and J.~Sheekey.
\newblock Quotients of skew polynomial rings: new constructions of division algebras and {MRD} codes.
\newblock {\em arXiv preprint arXiv:2502.13531}, 2025.

\bibitem{longobardizanellascatt}
G.~Longobardi and C.~Zanella.
\newblock Linear sets and {MRD}-codes arising from a class of scattered linearized polynomials.
\newblock {\em Journal of Algebraic Combinatorics}, 53:639--661, 2021.

\bibitem{generalized_TGC}
G.~Lunardon, R.~Trombetti, and Y.~Zhou.
\newblock Generalized twisted {G}abidulin codes.
\newblock {\em Journal of Combinatorial Theory, Series A}, 159:79--106, 2018.

\bibitem{gabidulinCodes_ApplicationsInSpace-TimeCoding}
P.~Lusina, E.~Gabidulin, and M.~Bossert.
\newblock Maximum rank distance codes as space-time codes.
\newblock {\em IEEE Transactions on Information Theory}, 49(10):2757--2760, 2003.

\bibitem{SomeNewnon-additiveMRDcodes}
K.~Otal and F.~{\"O}zbudak.
\newblock Some new non-additive maximum rank distance codes.
\newblock {\em Finite Fields and Their Applications}, 50:293--303, 2018.

\bibitem{ModifiedBerlekampMasseyAlgorithm}
G.~Richter and S.~Plass.
\newblock Error and erasure decoding of rank-codes with a modified {B}erlekamp-{M}assey algorithm.
\newblock {\em ITG FACHBERICHT}, pages 203--210, 2004.

\bibitem{gabidulinCodes_ApplicationsInDistributedStorage2}
R.~Roth.
\newblock Maximum-rank array codes and their application to crisscross error correction.
\newblock {\em IEEE Transactions on Information Theory}, 37(2):328--336, 1991.

\bibitem{PhDThesis_Santonastaso}
P.~Santonastaso.
\newblock {\em Algebraic and geometric methods in rank-metric codes}.
\newblock {PhD} dissertation, Università degli studi della Campania ``Luigi Vanvitelli'', 2023.

\bibitem{OnTheListDecodabilityOfRank-metricCodesContainingGabidulinCodes}
P.~Santonastaso and F.~Zullo.
\newblock On the list decodability of rank-metric codes containing \protect{G}abidulin codes.
\newblock {\em Designs, Codes and Cryptography}, 90(4):957--982, 2022.

\bibitem{sheekey_twistedGC}
J.~Sheekey.
\newblock A new family of linear maximum rank distance codes.
\newblock {\em Advances in Mathematics of Communications}, 10(3):475, 2016.

\bibitem{sheekeyNewSemifieldsAndNewMRDCodesFromSkewPolynomialRings}
J.~Sheekey.
\newblock New semifields and new {MRD} codes from skew polynomial rings.
\newblock {\em Journal of the London Mathematical Society}, 101(1):432--456, 2020.

\bibitem{RankMetric_approach_to_ErrorControl_in_RNetworkCoding}
D.~Silva, F.~R. Kschischang, and R.~Kötter.
\newblock A rank-metric approach to error control in random network coding.
\newblock {\em IEEE Transactions on Information Theory}, 54, 2008.

\bibitem{taylor1992geometry}
D.~E. Taylor.
\newblock {\em The geometry of the classical groups}, volume~9.
\newblock Heldermann Berlin, 1992.

\bibitem{TZ-codes_anewfamily}
R.~Trombetti and Y.~Zhou.
\newblock A new family of {MRD} codes in {$\mathbb{F}_q^{2n\times 2n}$} with right and middle nuclei {$\mathbb{F}_{q^n}$}.
\newblock {\em IEEE Transactions on Information Theory}, 65(2):1054--1062, 2019.

\bibitem{PhdThesis_WZeh}
A.~Wachter-Zeh.
\newblock {\em Decoding of block and convolutional codes in rank metric. General Mathematics [math.GM].}
\newblock {PhD} dissertation, Université de Rennes; Universität Ulm, 2013.

\bibitem{linearized_polynomials_over_finite_fields}
B.~Wu and Z.~Liu.
\newblock Linearized polynomials over finite fields revisited.
\newblock {\em Finite Fields and Their Applications}, 22:79--100, 2013.

\bibitem{PhDthesis_Zullo}
F.~Zullo.
\newblock {\em Linear codes and {G}alois geometries: between two worlds.}
\newblock {PhD} dissertation, Universit{\`a} degli Studi della Campania, 2018.

\end{thebibliography}
\bigskip

\noindent Chunlei Li,\\
Department of Informatics,\\ 
University of Bergen, Norway.\\
E-mail: chunlei.li@uib.no\\

\medskip
\noindent Angelica Piccirillo,\\
Department of Mathematics,\\ 
Technical University of Munich,\\
TUM School of Computation, Information
and Technology (CIT), Germany.\\ 
E-mail: angelica.piccirillo@tum.de\\

\medskip
\noindent Olga Polverino and Ferdinando Zullo,\\
Dipartimento di Matematica e Fisica,\\ 
Universit\`a degli Studi della Campania ``Luigi Vanvitelli'', Italy.\\
E-mail: \{olga.polverino, ferdinando.zullo\}@unicampania.it
\end{document}